\documentclass[%
 reprint,
superscriptaddress,
 amsmath,amssymb,
 aps,
prl,
letterpaper
]{revtex4-2}
\usepackage[utf8]{inputenc}

\usepackage[caption=false]{subfig}
\usepackage{graphicx}
\usepackage{here}
\usepackage{epstopdf}
\usepackage{array}
\usepackage{physics}
\usepackage{verbatim}
\usepackage{amsmath,amsfonts,amssymb,amscd,mathtools}
\usepackage{amsthm}
\usepackage{tabularx}
\usepackage{stmaryrd}
\usepackage{enumerate}
\usepackage{wasysym}
\usepackage{braket}
\usepackage{mathrsfs}
\usepackage{microtype}
\usepackage{hyperref}
\usepackage{soul}
\usepackage[dvipsnames]{xcolor}
\hypersetup{
    bookmarksnumbered=true, 
    unicode=false, 
    pdfstartview={FitH}, 
    pdftitle={}, 
    pdfauthor={}, 
    pdfsubject={}, 
    pdfcreator={}, 
    pdfproducer={}, 
    pdfkeywords={}, 
    pdfnewwindow=true, 
    colorlinks=true, 
    linkcolor=NavyBlue, 
    citecolor=NavyBlue, 
    filecolor=NavyBlue, 
    urlcolor=NavyBlue 
}
\usepackage{tikz}
\usepackage[lmargin=.7in,rmargin=.7in,tmargin=.7in,bmargin=1in]{geometry}
\usepackage{newtxtext,newtxmath}

\usepackage{cleveref}

\usepackage{algorithm}
\usepackage{algorithmicx}
\usepackage{algpseudocode}



\usepackage{booktabs}

\theoremstyle{plain}
\newtheorem{thm}{Theorem}
\newtheorem{cor}[thm]{Corollary}
\newtheorem{lem}[thm]{Lemma}
\newtheorem{pro}[thm]{Proposition}

\theoremstyle{definition}
\newtheorem{defn}[thm]{Definition}

\usepackage[most,breakable]{tcolorbox}
\tcbset{ sharp corners}
  {\expandafter\ifstrequal\expandafter{#1}{filled}{\begin{tcolorbox}[colback=MidnightBlue!70!black!70!TealBlue!2!white,colframe=MidnightBlue!70!black!70!TealBlue!30!white,breakable,enhanced,left=5.75pt,right=5.75pt,grow sidewards by=10pt]}{\begin{tcolorbox}[colback=white,colframe=gray!0,breakable,enhanced,left=5.75pt,right=5.75pt,grow sidewards by=10pt]}}%
  {\end{tcolorbox}}

\newenvironment{defnboxed}[1][white]
  {\expandafter\ifstrequal\expandafter{#1}{filled}{\begin{tcolorbox}[colback=MidnightBlue!70!black!70!TealBlue!2!white,colframe=MidnightBlue!70!black!70!TealBlue!30!white,breakable,enhanced,left=5.75pt,right=5.75pt,grow sidewards by=10pt]}{\begin{tcolorbox}[colback=MidnightBlue!10,colframe=MidnightBlue!0,breakable,enhanced,left=5.75pt,right=5.75pt,grow sidewards by=10pt]}}%
  {\end{tcolorbox}}

\newenvironment{proboxed}[1][white]
  {\expandafter\ifstrequal\expandafter{#1}{filled}{\begin{tcolorbox}[colback=MidnightBlue!70!black!70!TealBlue!2!white,colframe=MidnightBlue!70!black!70!TealBlue!30!white,breakable,enhanced,left=5.75pt,right=5.75pt,grow sidewards by=10pt]}{\begin{tcolorbox}[colback=OliveGreen!10,colframe=OliveGreen!0,breakable,enhanced,left=5.75pt,right=5.75pt,grow sidewards by=10pt]}}%
  {\end{tcolorbox}}

  \newenvironment{thmboxed}[1][white]
  {\expandafter\ifstrequal\expandafter{#1}{filled}{\begin{tcolorbox}[colback=MidnightBlue!70!black!70!TealBlue!2!white,colframe=MidnightBlue!70!black!70!TealBlue!30!white,breakable,enhanced,left=5.75pt,right=5.75pt,grow sidewards by=10pt]}{\begin{tcolorbox}[colback=Maroon!10,colframe=Maroon!0,breakable,enhanced,left=5.75pt,right=5.75pt,grow sidewards by=10pt]}}%
  {\end{tcolorbox}}
  
   \newenvironment{lemboxed}[1][white]
  {\expandafter\ifstrequal\expandafter{#1}{filled}{\begin{tcolorbox}[colback=MidnightBlue!70!black!70!TealBlue!2!white,colframe=MidnightBlue!70!black!70!TealBlue!30!white,breakable,enhanced,left=5.75pt,right=5.75pt,grow sidewards by=10pt]}{\begin{tcolorbox}[colback=Gray!20,colframe=Maroon!0,breakable,enhanced,left=5.75pt,right=5.75pt,grow sidewards by=10pt]}}%
  {\end{tcolorbox}}

\newcommand{\eq}[1]{(\hyperref[eq:#1]{\ref*{eq:#1}})}

\renewcommand{\sec}[1]{\hyperref[sec:#1]{Section~\ref*{sec:#1}}}
\newcommand{\thrm}[1]{\hyperref[thrm:#1]{Theorem~\ref*{thrm:#1}}}
\newcommand{\lemm}[1]{\hyperref[lemm:#1]{Lemma~\ref*{lemm:#1}}}
\newcommand{\prop}[1]{\hyperref[prop:#1]{Proposition~\ref*{prop:#1}}}
\newcommand{\corr}[1]{\hyperref[corr:#1]{Corollary~\ref*{corr:#1}}}
\newcommand{\fig}[1]{\hyperref[fig:#1]{~\ref*{fig:#1}}}
\newcommand{\deff}[1]{\hyperref[deff:#1]{~\ref*{deff:#1}}}

\newcommand{\mA}{\mathcal{A}}
\newcommand{\mC}{\mathcal{C}}
\newcommand{\mE}{\mathcal{E}}
\newcommand{\mN}{\mathcal{N}}
\newcommand{\mU}{\mathcal{U}}

\newcommand{\mD}{\mathcal{D}}

\newcommand{\mL}{\mathcal{L}}

\newcommand{\mG}{\mathcal{G}}
\newcommand{\mH}{\mathcal{H}}
\newcommand{\mM}{\mathcal{M}}

\newcommand{\mB}{\mathcal{B}}

\newcommand{\mP}{\mathcal{P}}

\newcommand{\mX}{\mathcal{X}}
\newcommand{\mY}{\mathcal{Y}}
\newcommand{\mZ}{\mathcal{Z}}

\newcommand{\mbD}{\mathbb{D}}

\newcommand{\mbI}{\mathbb{I}}

\newcommand{\mbM}{\mathbb{M}}

\newcommand{\mbR}{\mathbb{R}}

\newcommand{\mfS}{\mathfrak{S}}

\newcommand{\aX}{{\abs{\mX}}}
\newcommand{\aU}{{\abs{\mU}}}

\newcommand{\sD}{\widetilde{D}}

\newcommand{\pI}{\overline{I}}
\newcommand{\sI}{\widetilde{I}}
\newcommand{\pD}{\overline{D}}

\newcommand{\sR}{\widetilde{R}}
\newcommand{\vlam}{{\Vec{\lambda}}}

\newcommand{\mW}{\mathcal{W}}

\newcommand{\ve}{\varepsilon}

\DeclareMathOperator{\supp}{supp}

\DeclareMathOperator{\spec}{spec}

\newcommand{\ba}{\begin{eqnarray}}
\newcommand{\ea}{\end{eqnarray}}
\newcommand{\bann}{\begin{eqnarray*}}
\newcommand{\eann}{\end{eqnarray*}}
\newcommand{\bal}{\begin{equation}\begin{aligned}}
\newcommand{\eal}{\end{aligned}\end{equation}}

\newcolumntype{L}[1]{>{\raggedright}p{#1}}
\newcolumntype{C}[1]{>{\centering}p{#1}}
\newcolumntype{R}[1]{>{\raggedleft}p{#1}}
\newcolumntype{D}{>{\centering\arraybackslash}X}

\usepackage{dcolumn}
\usepackage{bm}

\makeatletter
\renewcommand{\paragraph}[1]{\medskip\noindent\textbf{\textit{#1}}\ignorespaces}
\makeatother

\usepackage{xparse}
\newcommand{\rel}{\mathrm{Rel}}

\NewDocumentCommand{\strcon}{o}{\IfNoValueTF{#1}{\rel^\dagger}{\rel^{\dagger,#1}}}

\begin{document}

\title{Universality Sacrifices Reliability in Classical-Quantum Channel Coding}

\author{Kaito Watanabe}
\email{watanabe715@g.ecc.u-tokyo.ac.jp}
\affiliation{Department of Basic Science, The University of Tokyo, 3-8-1 Komaba, Meguro-ku, Tokyo 153-8902, Japan}
\affiliation{RIKEN Center for Quantum Computing (RQC), Hirosawa 2-1, Wako, Saitama 351-0198, Japan}

\author{Masahito Hayashi}
\email{hmasahito@cuhk.edu.cn}
\affiliation{School of Data Science, The Chinese University of Hong Kong, Shenzhen, Guangdong, 518172, China}
\affiliation{International Quantum Academy, Futian District, Shenzhen 518048, China}
\affiliation{Graduate School of Mathematics, Nagoya University, Chikusa-ku, Nagoya 464–8602, Japan}

\author{Takaya Matsuura}
\email{takayamatsuura@gmail.com}
\affiliation{RIKEN Center for Quantum Computing (RQC), Hirosawa 2-1, Wako, Saitama 351-0198, Japan}

\author{Hao-Chung Cheng}
\email{haochung@ntu.edu.tw}
\affiliation{Department of Electrical Engineering, National Taiwan University, Taipei 10617, Taiwan}
\affiliation{Physics/Mathematics Division, National Center for Theoretical Sciences, Taiwan}
\affiliation{Hon Hai (Foxconn) Quantum Computing Center, Taiwan}

\begin{abstract}
Universal channel coding enables communication without a complete description of the channel. For classical channels, universal codes can attain both capacity and the optimal high-rate reliability. We show that this compatibility fails for classical-quantum channels in general; that is, the optimal reliability in the channel-aware scenario is not always achievable with universal coding due to the ignorance of the unitary rotation of the output system. We exhibit a family of classical-quantum channels for which one cannot achieve the channel-aware optimal reliability by a fixed coding scheme. We further derive a converse bound on the reliability for unitary-invariant decoders, a natural assumption for the universal coding scheme, that can be strictly smaller than the optimal channel-aware error exponent. Conversely, we construct a channel-independent encoder-decoder pair and establish a universally achievable bound on the reliability that matches this converse bound in the high-rate regime, thereby characterizing the optimal universal reliability. Specifically, the channel-aware and universal exponents are governed by the Petz and sandwiched Rényi divergences, respectively. These divergences coincide for commuting outputs but differ for noncommuting ones, explaining why universality preserves optimal reliability classically but can reduce it quantumly. Our results showcase the fundamental reliability cost of performing the classical-quantum channel coding task universally.
\end{abstract}
\maketitle

\paragraph{Introduction.}\,---\,
Transmitting information over a noisy channel is a central task in information theory. Shannon’s channel coding theorem~\cite{Shannon_mathematical} establishes that communication with an asymptotically vanishing error probability is possible up to a threshold rate, called the channel capacity. For classical channels, this capacity is characterized by the mutual information between the channel input and output, maximized over the input probability distribution.

Beyond capacity, a finer characterization concerns how rapidly the decoding error decreases with the number of channel uses. At a fixed transmission rate, the optimal exponential decay is quantified by the \emph{reliability function}. Classical coding theory provides an achievable lower bound on this function, known as the random-coding bound~\cite{Gallager_simple_derivation}, and a converse upper bound, known as the sphere-packing bound~\cite{Shannon_lower_bound, Haroutunian1968}. These bounds coincide in the high-rate regime, determining the optimal reliability, whereas a gap generally remains at low rates.

A separate question concerns the channel knowledge required to implement a communication scheme. Channel-dependent codes assume that the channel is known and tailor their encoding and decoding operations to its description. In practice, however, such a description may not be available. \emph{Universal coding} addresses this limitation by constructing encoding and decoding operations independently of the channel description, so that a single scheme can operate over a family of possible channels~\cite{Hayashi_2009_universal_coding, Bjelakovic_2009_classical_capacities, Matsuura_2025_universal_resolvability, Hayashi_2011_universally_attainable, Hayashi_2022_universal_cq_superposition, Boche_2020_universal_superposition, Berta_2017_compound_entanglement_assisted, Datta_2010_universal_private, Boche_entanglement-assisted_2017, watanabe2026needuniversalcorrelationdetector, rizzo2026universalquantumcoding, takagi2026universaldistillationquantumentanglement}. This requirement raises the question of whether a lack of channel knowledge compromises the achievable communication performance.

For classical channels, universality can be achieved without sacrificing optimal reliability in the high-rate regime. Specifically, universal codes attain the same random-coding exponent as channel-dependent codes~\cite{Gallager_simple_derivation}. Since this exponent coincides with the sphere-packing bound at high rates, universal codes achieve the optimal reliability in this regime. Thus, classical channel coding admits schemes that simultaneously attain universality and the best possible reliability.

For classical-quantum channels, which map classical inputs to quantum states, both optimal reliability and universal communication are separately attainable. When the channel is known, the sphere-packing bound~\cite{Dalai_2013_lower_bound} and the random-coding bounds~\cite{Renes_2025_tight_lower, Li_reliability_2025, Cheng_operator_layer_cake} coincide in the high-rate regime, determining the optimal reliability achievable by channel-dependent codes. Independently, universal coding theorems establish that the channel capacity can be achieved using encoding and decoding operations constructed without complete knowledge of the channel~\cite{Hayashi_2009_universal_coding, Bjelakovic_2009_classical_capacities, Matsuura_2025_universal_resolvability}. Thus, channel-aware codes attain the optimal high-rate reliability, while universal codes achieve capacity. Whether a single coding scheme can attain both properties, however, does not follow from either result.

In this paper, we show that these two goals are, in general, incompatible for classical-quantum channels. Unlike in classical channel coding, universal codes can incur a strict loss in reliability relative to channel-aware codes.

We first give an explicit qubit example for which ignorance of a common unitary rotation of the channel output states prevents universal attainment of the channel-aware random-coding exponent for any fixed encoder-decoder pair.
Moreover, we derive a sphere-packing upper bound for such decoders and show that it can lie strictly below the optimal channel-aware reliability in the high-rate regime. Consequently, no single universal coding scheme can attain the channel-aware error exponent throughout the family. Noncommutativity of the output states can therefore incur a strict reliability penalty from the absence of channel knowledge, especially the ignorance of the orientation of the output quantum system.

We further determine the optimal reliability compatible with universality. We construct an encoder-decoder pair independently of the channel description and establish a universal random-coding bound. In the high-rate regime, this achievable bound matches our sphere-packing bound for unitary-invariant decoders. Together, these results characterize the optimal high-rate reliability of universal classical-quantum channel coding and identify its separation from the channel-aware optimum.

This separation is reflected in the quantum entropies governing the underlying reliability functions. The channel-aware random-coding and sphere-packing bounds are expressed in terms of the Petz Rényi divergence, whereas the established universal counterparts involve the sandwiched Rényi divergence. These divergences coincide for commuting states, consistent with the compatibility of universality and optimal reliability in classical channel coding. For noncommuting outputs, their distinction underlies a strict gap between the optimal channel-aware and universal error exponents. Operationally, this gap reflects the absence of a reference frame specifying the orientation of the quantum output states: requiring communication to be insensitive to this orientation can strictly reduce its reliability. Universal capacity achievability therefore does not guarantee universal attainability of the optimal error exponent, revealing a fundamental distinction between communication over classical and quantum channels.

\paragraph{Classical-quantum channel coding}\,---\,
Let $\mX$ be the finite set of classical symbols, and $\mD(\mH)$ be the set of density matrices on a finite-dimensional Hilbert space $\mH$. 
A classical-quantum channel $W:\mX\to \mD(\mH), ~x\mapsto \rho_x$ is defined as a map, which maps a symbol $x$ to a quantum state $\rho_x\in\mD(\mH)$, labeled by $x$. Classical-quantum channel coding is a task of sending classical information through the classical-quantum channel reliably by performing preprocessing before sending, called the encoding, and post-processing after receiving the signal, called the decoding.

More precisely, an $(n, R,\ve)$ code is composed of a tuple $(\mC_n, \qty{\Lambda_n(m)}_{m=1}^{M_n})$ of the codebook $\mC_n\coloneqq \qty{x^n(1),\ldots, x^n(M_n)}\subseteq \mX^n$ with the number $M_n\coloneqq 2^{nR}$ of messages and the decoder $\qty{\Lambda_n(m)}_{m=1}^{M_n}$, a positive operator-valued measure (POVM) on system $\mD(\mH^{\otimes n})$ satisfying the following: 
For any $m\in\qty{1,\ldots, M_n}$, the average probability of decoding error
\bal
P^n_e(P, R, W)\coloneqq 1-\frac{1}{M_n}\sum_{m=1}^{M_n}\Tr[\Lambda_n(m)\rho_{x^n(m)}]\leq \ve.
\eal
If there exists a sequence of $(n, R, \ve_n)$ codes for sufficiently large $n$ satisfying $\lim_{n\to\infty}\ve_n=0$, $R$ is referred to an achievable rate. The capacity $C(W)$ of $W$ is defined as the supremum of the achievable rate $R$.
Throughout this paper, we consider the constant-composition codebook $\mC_n$ whose codewords $\qty{x^(1),\ldots, x^n(M_n)}$ have the same type, i.e., the frequency of each symbol in $\mX$ is the same.

The error exponent of the classical-quantum channel coding offers a more refined analysis of \emph{reliability}, that is, how fast one can suppress the decoding error while achieving the target communication rate.
The reliability function, or the error exponent $E(P,R, W)$ of the classical-quantum channel coding, with a target communication rate $R$, is defined as the supremum of the limit $\limsup_{n\to\infty}\frac{-1}{n}\log \ve_n,$
for the sequence of $(n,R_n,\ve_n)$-codes with $\liminf_{n\to\infty}R_n\geq R$. 

In the classical case, the capacity and the reliability of the classical channel are almost established: The asymptotically optimal rate over the classical channel $W$ when the input distribution is fixed to $P$ is given as $I(P;W)$, where $P=P(x)$ is the fixed type for the codebook, and $I(P;W)\coloneqq H(\sum_xP(x)W(\cdot|x))-\sum_{x}P(x)H(W(\cdot|x))$ is the mutual information between the input and the output.
Here, $H(P)\coloneqq -\sum_x P(x)\log P(x)$ is the Shannon entropy.
The tight characterization of the reliability function is obtained only for the high-rate regime above a certain critical communication rate. 
The achievable bound for the reliability function is called the random-coding bound $E(P,R,W)\geq \sup_{1/2<\alpha<1}\frac{\alpha-1}{\alpha}(R-I^{\rm Aug}_\alpha(P; W)),$
where $I^{\rm Aug}_\alpha(P;W)\coloneqq \inf_Q\sum_xp(x)D_\alpha(W(\cdot | x)\|Q(\cdot))$ is the Augustin information~\cite{Gallager_simple_derivation}. Here, $D_\alpha(P(x)\|Q(x))\coloneqq \frac{1}{\alpha-1}\sum_xP(x)^\alpha Q(x)^{1-\alpha}$ is the classical Rényi $\alpha$ divergence.
On the other hand, the converse bound for the reliability function is the sphere-packing bound $E(P,R,W)\leq \sup_{0<\alpha<1}\frac{\alpha-1}{\alpha}(R-I^{\rm Aug}_\alpha(P; W))$~\cite{Shannon_lower_bound, Haroutunian1968}.
As can be seen from the expression, the matching characterization of the reliability function is not known yet: In the high-rate regime $R_{\rm crit}\leq R\leq I(P;W)$ for a critical rate $R_{\rm crit}$, the two bounds match, but in the low-rate regime $0<R<R_{\rm crit}$, the two bounds do not coincide. Characterizing the exact reliability for the low-rate regime is a longstanding open problem even in the classical case.

Similar to the classical case, the optimal communication rate of the classical-quantum channel is characterized by the mutual information $I(P;W)$ of the channel~\cite{Holevo_1996_capacity, SW_theorem}.
Here, the mutual information of the classical-quantum channel is defined as $I(P;W)\coloneqq S(\sum_xP(x)\rho_x)-\sum_{x}P(x)S(\rho_x)$, and $S(\rho)\coloneqq -\Tr[\rho\log \rho]$ is the von Neumann entropy.

Recent progress has established the corresponding bounds for the reliability of classical-quantum channel coding. 
The sphere packing bound for the classical-quantum channel coding is given as~\cite{Dalai_2013_lower_bound, Dalai_2017_constant_composition, Cheng_2019_sphere_packing}
\bal\label{eq: sphere-packing}
E(P,R,W)\leq \sup_{0< \alpha < 1}\frac{\alpha-1}{\alpha}(R-\pI^{\rm Aug}_\alpha(P:W)).
\eal
Here, $\pI^{\rm Aug}_\alpha(P; W)\coloneqq \inf_{\sigma}\sum_xP(x)\pD_\alpha(W_x\|\sigma)$ is a quantum extension of the Augustion information called the Petz Augustion information, and $\pD_\alpha(\rho\|\sigma)\coloneqq \frac{1}{\alpha-1}\log\Tr[\rho^\alpha\sigma^{1-\alpha}]$ is the Petz Rényi divergence \cite{Petz_1986_quasi_entropy}.

However, the random-coding bound for the classical-quantum channel cannot be easily derived because the output quantum states do not commute, and characterizing it was a longstanding open problem until recently.
Later, in Ref.~\cite{Cheng_operator_layer_cake}, it is shown that random-coding bound
\bal\label{eq: random coding bound}
E(P,R,W)\geq \sup_{1/2\leq\alpha<1}\frac{\alpha-1}{\alpha}(R-\pI^{\rm Aug}_\alpha(P;W)),
\eal
holds, which resolves the Burnashev-Holevo conjecture~\cite{Burnashev_1998reliability}.
Just as in the classical case, the bounds are tight in the high-rate regime, but not in the low-rate regime.
These results reveal the similarity in the optimal performance of coding for the classical channels and the classical-quantum channels.
Furthermore, similarly to the classical case, the sphere-packing bound and the random-coding bound match in the high-rate regime above a certain critical rate $R_{\rm crit}\leq R \leq I(P;W)$.

\paragraph{Universal coding}\,---\,
The coding schemes, encoder, and decoder that achieve optimal performance are generally tailored to the given channel. However, they may not be accessible to the sender and receiver, which motivates the search for \emph{universal coding}, designed independently of channel information yet achieving optimal performance.

Surprisingly, in the classical case, it is known that a coding scheme that devises the mutual information decoder achieves the channel capacity and the random coding bound universally~\cite{Csiszar_Korner_2011}. 
In other words, universal coding does not degrade the capacity and the reliability (in the high-rate regime) of the communication through an unknown classical channel at all.
From this, it seems natural to expect that one can also achieve optimal capacity and reliability universally for unknown classical-quantum channels.

So far, it is only known that one can achieve the capacity for the classical-quantum channel universally: In Ref.~\cite{Hayashi_2009_universal_coding, Bjelakovic_2009_classical_capacities}, it is shown that universal coding exists which achieves the channel capacity.
However, unlike in the classical case, it is not known whether one can universally achieve the random coding bound for the classical-quantum channels.

\paragraph{Random coding bound is not universally achievable}\,---\,
Our main interest here is whether we can achieve the random-coding bound in Eq.~\eqref{eq: random coding bound} with a fixed pair of the codebook and the decoder.
The best one can hope is that there exists such a pair which can achieve the random-coding bound for any classical-quantum channel, but remarkably, we show that it is not possible.

Fix a classical-quantum qubit channel $W:\qty{\pm}\to \ketbra{\psi_\pm}{\psi_{\pm}}$ where $\ket{\psi_{\pm}}\coloneqq\sqrt{p}\ket{0}\pm\sqrt{1-p}\ket{1}.$ 
We also consider a rotated channel $W_\theta:\qty{\pm}\to U_\theta\ketbra{\psi_\pm}{\psi_{\pm}}U^\dagger_\theta$, where $U_{\theta}\coloneqq{\rm diag}(1,e^{i\theta})$ is the phase-rotation.
If there exists a universal coding scheme which can achieve the random-coding bound for any classical-quantum channels, it should achieve the random-coding bound for all channels in the family $\qty{W_\theta}_{\theta\in [0,2\pi)}$ of rotated channels simultaneously.

It turns out that, even for this simple family of classical-quantum channels, one can never construct the universal coding scheme achieving the random-coding bound. In fact, taking the average of the decoding error over the phase of the rotation gives the converse bound, which is strictly less than the random-coding bound. 
Furthermore, for any target communication rate $R$, one can take a fixed phase rotation $\theta^*$ such that the decoding error never achieves the random-coding bound without knowledge of the phase rotation.
Further technical detail is deferred to End Matters.

This example shows that, in the channel-aware case, prior information about the orientation of the quantum states is actually crucial to achieve the random-coding bound in Eq.~\eqref{eq: random coding bound}, and impossible without it.

\paragraph{Unitary-invariant sphere-packing bound}\,---\,
The degradation of the reliability due to this ignorance can also be captured by considering the sphere-packing bound under the unitary invariance of the decoder, i.e., for any number $n$ of channel uses and for any message $m\in\qty{1,\ldots, M_n}$, the decoder $\qty{\Lambda_n(m)}_m$ satisfies
\bal
\Lambda_n(m)=(U^\dagger)^{\otimes n}\Lambda_n(m) U^{\otimes n}
\eal
The unitary invariance can be seen as a natural requirement for the universal codes, since the performance of the universal code should not vary under an arbitrary unitary rotation $W\mapsto U W U^\dagger$ of the output.
In fact, the universal decoders proposed in the previous works~\cite{Hayashi_2009_universal_coding, Matsuura_2025_universal_resolvability} also satisfy the unitary invariance.

Remarkably, it turns out that the symmetry constraint imposed on the decoder crucially degrades the reliability as follows.
\begin{thm}[Unitary-invariant sphere-packing bound for constant-composition codebooks]\label{thm: unitary invariant sphere-packing}
    Let $P_n$ be the $n$-type from which the codewords are drawn, converging to a fixed probability distribution $P$, and $W:\mX\to\mD(\mH)$ be a classical-quantum channel. Then, for any coding scheme with a unitary-invariant decoder, it holds that 
    \bal\label{eq: unitary invariant sphere-packing}
   \limsup_{n\to\infty}-\frac{1}{n}\log P^n_e(P_n, R, W)\leq \!\!\!\sup_{1/2\leq \alpha< 1} \!\! \frac{\alpha-1}{\alpha}\qty(R-\sI^{\rm Aug}_\alpha(P;W))
    \eal
    for the high-rate regime above a certain critical rate $R_{\rm crit}\leq R$.
    Here, $\sI^{\rm Aug}_\alpha(P:W)$ is the sandwiched Augustion information.
\end{thm}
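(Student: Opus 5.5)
The plan is to exploit the unitary invariance of the decoder. It lets us replace the channel outputs by their twirled versions and then reduce to the known sphere-packing bound \eqref{eq: sphere-packing}, applied to an effective channel whose Petz quantities regularize to sandwiched ones.

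\emph{Step 1 (twirling).} Let $\mu$ be the Haar measure on $U(d)$, $d=\dim\mH$. Since $\Lambda_n(m)$ commutes with $U^{\otimes n}$, we have $\Tr[\Lambda_n(m)\rho_{x^n}]=\Tr[\Lambda_n(m)\mathcal{T}_n(\rho_{x^n})]$, where $\mathcal{T}_n(\cdot)=\int U^{\otimes n}(\cdot)U^{\dagger\otimes n}\,d\mu(U)$. The error probability for $W$ therefore equals the error probability for the $n$-letter channel $x^n\mapsto\mathcal{T}_n(\rho_{x^n})$.

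\emph{Step 2 (Schur--Weyl).} By Schur--Weyl duality, $\mathcal{T}_n$ acts as the completely depolarizing map on each $U(d)$-irrep $\mathcal{U}_\lambda$ in $\mH^{\otimes n}=\bigoplus_\lambda\mathcal{U}_\lambda\otimes\mathcal{V}_\lambda$. The twirled state therefore commutes with the universal symmetric state $\sigma_{U,n}$ (a mixture of normalized Young projectors with $\mathrm{poly}(n)$ terms). It is dominated by the pinching of $\rho_{x^n}$ onto the $\lambda$-blocks, up to a $\mathrm{poly}(n)$ factor coming from $\dim\mathcal{U}_\lambda$. Because of this block structure, we can run the sphere-packing argument (hypothesis testing against the composition-dependent state $\sigma^{\otimes n}$ twirled, as in Dalai / Cheng--Hsieh--Tomamichel) on the effective commuting-like structure.

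\emph{Step 3 (sphere packing).} We then invoke \eqref{eq: sphere-packing} in its finite-$n$, constant-composition form for the $n$-letter channel $\mathcal{T}_n\circ W^{\otimes n}$ with block length one. Concretely, the Petz divergence $\pD_\alpha(\mathcal{T}_n(\rho_{x^n})\|\mathcal{T}_n(\sigma^{\otimes n}))$ appears in place of the single-letter one. The key identity we need is
\[
\lim_{n\to\infty}\frac1n\pD_\alpha\bigl(\mathcal{T}_n(\rho_{x^n})\big\|\mathcal{T}_n(\sigma^{\otimes n})\bigr)=\sum_x P(x)\sD_\alpha(\rho_x\|\sigma)
\]
for $\alpha\in[1/2,1)$, uniformly over sequences of type $P_n\to P$. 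The lower bound $\geq$ follows from data processing of $\sD_\alpha$, the inequality $\pD_\alpha\ge\sD_\alpha$, and additivity. For the upper bound, we use the pinching inequality $\rho\le \mathrm{poly}(n)\,\mathcal{P}_{\sigma^{\otimes n}}(\rho)$ together with the fact that Petz and sandwiched divergences coincide for commuting arguments. This is the standard route in Hayashi--Tomamichel's proof that the regularized Petz divergence of pinched states equals the sandwiched one. The restriction $\alpha\ge1/2$ is what makes data processing of $\sD_\alpha$ available; it is also why the supremum in \eqref{eq: unitary invariant sphere-packing} starts at $1/2$, and why the resulting bound is claimed only above a critical rate, where the optimizer lies in $[1/2,1)$. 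Optimizing over $\sigma$ then turns the Petz Augustin information into $\sI^{\rm Aug}_\alpha(P;W)$. The $\mathrm{poly}(n)$ factors vanish in the exponent, and continuity in $P_n\to P$ handles the varying type.

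\emph{Main obstacle.} The hard step is making Step 3 rigorous. The sphere-packing bound is proved for memoryless channels and product auxiliary states, whereas here the effective channel is non-i.i.d. and the auxiliary state is twirled. My first attempt would be to redo the Dalai/Cheng argument directly with the auxiliary state $\mathcal{T}_n(\sigma^{\otimes n})$. That argument combines a Nussbaum--Szko\l a type mapping with the strong converse of hypothesis testing via a tilted state. The tilting and exponent computations would be carried out using the block-commuting structure from Step 2, which reduces to a classical problem over $\mathrm{poly}(n)$ blocks. The remaining work would be to control the second-order (variance) terms uniformly in $\alpha$ near the critical rate, so that the Legendre-transform optimization over $\alpha$ lands in $[1/2,1)$.
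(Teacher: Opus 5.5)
\textbf{Gap 1: the key identity in Step 3 is false.} By Haar invariance, $\mathcal{T}_n(\sigma^{\otimes n})=\mathcal{T}_n\bigl((V\sigma V^\dagger)^{\otimes n}\bigr)$ for every unitary $V$. So $\frac1n\pD_\alpha(\mathcal{T}_n(\rho_{x^n})\|\mathcal{T}_n(\sigma^{\otimes n}))$ depends on $\sigma$ only through its spectrum, whereas $\sum_xP(x)\sD_\alpha(\rho_x\|\sigma)$ depends on its eigenbasis. For example, take $\rho_x\equiv\rho=\mathrm{diag}(0.9,0.1)$ and $\sigma=\mathrm{diag}(0.1,0.9)$. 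The two twirled states coincide, so your left side is $0$ while the right side is positive.

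Your argument for ``$\geq$'' does not chain. Data processing bounds $\sD_\alpha(\mathcal{T}_n\rho_{x^n}\|\mathcal{T}_n\sigma^{\otimes n})$ from \emph{above} by $\sD_\alpha(\rho_{x^n}\|\sigma^{\otimes n})$, and $\pD_\alpha\geq\sD_\alpha$ cannot reverse that; the twirled pair commutes, so the two even coincide there. Only ``$\leq$'' holds in general. That direction would be harmless for a converse, but a Hoeffding converse for a non-i.i.d.\ pair needs the actual limiting log-moment function and its regularity, which you then do not have. Also, ``applying \eqref{eq: sphere-packing} with block length one'' is not a usable step, since that bound is an asymptotic statement about memoryless channels.

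The missing idea is to \emph{not} twirl the auxiliary hypothesis. By Schur--Weyl, a unitary-invariant $\Lambda_m$ lies in the span of permutations, so it commutes with $\sigma^{\otimes n}$ itself. Hence $\Tr[\Lambda_m\rho_{x^n}]=\Tr[\Lambda_m\mathcal{P}_{\sigma^{\otimes n}}(\rho_{x^n})]$. A Markov argument gives a message whose error is at most twice the average and for which $\Tr[\Lambda_m\sigma^{\otimes n}]\leq 2/M_n$. This reduces the problem to testing the commuting pair $\mathcal{P}_{\sigma^{\otimes n}}(\rho_{x^n})$ versus $\sigma^{\otimes n}$. Its normalized log-moment function converges \emph{exactly} to $(\alpha-1)\sum_xP(x)\sD_\alpha(\rho_x\|\sigma)$ for all $\alpha\in(0,1)$, by the pinching inequality together with $|\spec(\sigma^{\otimes n})|=\mathrm{poly}(n)$. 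A G\"artner--Ellis (Hiai--Mosonyi--Ogawa) Hoeffding theorem then gives, for each full-rank $\sigma$, the bound $H_\sigma(P,R)=\sup_{0<\alpha<1}\frac{1-\alpha}{\alpha}\bigl[\sum_xP(x)\sD_\alpha(\rho_x\|\sigma)-R\bigr]$. Since $\mathcal{T}_n\circ\mathcal{P}_{\sigma^{\otimes n}}=\mathcal{T}_n$, data processing shows that the optimal type-I error of your twirled pair is at least that of this pinched pair. So the twirl is only a detour.

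\textbf{Gap 2: the final optimization over $\sigma$.} The fixed-$\sigma$ bound has a supremum over \emph{all} $\alpha\in(0,1)$ inside $\inf_\sigma$. The target needs $\inf_\sigma$ inside $\sup_{1/2\leq\alpha<1}$, and ``optimizing over $\sigma$'' does not perform this exchange.

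What is needed is a minimax theorem for $(s,\sigma)\mapsto s\bigl[\sum_xP(x)\sD_{1/(1+s)}(\rho_x\|\sigma)-R\bigr]$ on $[0,1]\times\mD(\mH)$, using concavity in $s$ and convexity in $\sigma$. That convexity, available for $\alpha\geq1/2$, is what produces the restriction to $\alpha\geq 1/2$. It does not come from data processing: the pinching step above works for every $\alpha\in(0,1)$.

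The high-rate condition then places the saddle point at some $s^*\in(0,1)$. Concavity in $s$ on all of $(0,\infty)$ shows that, at the saddle state $\sigma^*$, the unrestricted supremum (which includes $\alpha<1/2$) is attained at $s^*$. Finally, $\sigma^*$ may be singular while the Hoeffding step needs a full-rank state. You must therefore perturb to $\sigma_\delta=(1-\delta)\sigma^*+\delta I/d$ and show $H_{\sigma_\delta}(P,R)\to H_{\sigma^*}(P,R)$, for example via uniform convergence of the objective and its $s$-derivative on a compact interval around $s^*$. Controlling ``second-order variance terms'' does not substitute for any of this.
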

The proof is given in Appendix~\ref{sup sec: sphere-packing bound}.
At rates above a certain critical rate, the sphere-packing bound under unitary invariance in Eq.~\eqref{eq: unitary invariant sphere-packing} is no larger than the channel-aware random-coding bound in Eq.~\eqref{eq: random coding bound}, and the inequality can be strict. This gap demonstrates a fundamental reliability cost of universal coding arising from the lack of knowledge of the freedom of the unitary rotation on the output system.

Since the sandwiched Rényi divergence is no larger than the Petz Rényi divergence, the left-hand side of Eq.~\eqref{eq: unitary invariant sphere-packing} is no larger than that of Eq.~\eqref{eq: sphere-packing}. This comparison also reflects the restriction imposed by unitary invariance.

Remarkably, the reduction of the sphere-packing bound due to the unitary invariance is clearly represented as a different type of quantum extension of Rényi divergence appearing in the form.
This phenomenon, where we obtain the sandwiched Rényi divergence instead of the Petz Rényi divergence due to the symmetric restriction, is also observed in other tasks such as work extraction under time-translation covariance~\cite{Watanabe_2026_Reliability}, and quantum hypothesis testing, a state-discrimination task, for the composite hypothesis with a certain unitary orbit~\cite{Hayashi_2026_Operational_interpretation}.

Let us briefly see the proof idea. In Ref.~\cite{Dalai_2013_lower_bound, Dalai_2017_constant_composition}, they derive the sphere-packing bound for the classical-quantum channel by reducing the error exponent of the channel coding to that of the hypothesis testing~\cite{hiai_1991_proper, Ogawa_2000_strong,nagaoka_2006_converse_theorem, Hayashi_2007_error_exponent, audenaert_2008, Mosonyi_Ogawa_2014}, a state-discrimination task, between the output of the channel and a dummy quantum state.
In our case where we consider the unitary-invariant decoder, the converse bound can be related to the hypothesis testing between the unitary-twirled quantum state and a dummy state.
Employing the results on the exponents of quantum hypothesis testing with a group symmetry and correlation~\cite{hiai_error_correlated_2008, Hiai_hypothesis_testing_group_2009, Lipka-Bartosil_quantum_dichotomies}, we establish Theorem~\ref{thm: unitary invariant sphere-packing}.

\paragraph{Universally achievable random coding bound}\,---\,
A natural question that arises now is the following: what is the universally achievable reliability? The known universally achievable reliability obtained in the previous papers~\cite{Hayashi_2009_universal_coding, Matsuura_2025_universal_resolvability} is not tight, as it does not give us the random-coding bound in the classical case, even though the random coding bound is universally achievable for the classical channel.

The following theorem exhibits a universally achievable bound for the error exponent of classical-quantum channel coding, which matches the sphere-packing bound with the unitary-invariant sphere-packing bound in \Cref{thm: unitary invariant sphere-packing}.

\begin{thm}[Universally achievable random-coding bound with constant composition codebooks]\label{thm: universal random coding bound}
    Let $P$ be a probability distribution, and $R$ be such that $0\leq R\leq H(P)$. There exists a fixed sequence of the codebooks with constant composition $\qty{P_n}_n$ satisfying $P_n\to P$, and the decoders such that the decoding error satisfies 
    \bal
    \liminf_{n\to\infty}-\frac{1}{n}\log P^n_e(P_n, R, W)\geq \!\! \sup_{1/2\leq \alpha< 1} \!\! \frac{\alpha-1}{\alpha}\qty(R-\sI^{\rm Aug}_\alpha(P:W))
    \eal
    for any classical-quantum channel $W$.
\end{thm}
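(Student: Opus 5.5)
We will prove the bound with a random constant-composition codebook and a channel-independent decoder built from Schur--Weyl duality. Then we derandomize to get a fixed code sequence. Fix types $P_n\to P$ and draw $M_n=2^{nR}$ codewords i.i.d.\ uniformly from the type class $T_{P_n}$. For a codeword $x^n$, let $\mathcal{S}_{x^n}$ be the symmetric group acting on the positions carrying each letter. Decompose $\mH^{\otimes n}$ jointly under $U^{\otimes n}$ and the permutations fixing $x^n$. By Schur--Weyl duality, applied letter by letter to the blocks $\{i:x_i=x\}$, this yields projections $\Pi_{x^n,\vec\lambda}$ onto the isotypic components labelled by the tuple $\vec\lambda=(\lambda_x)_x$ of Young diagrams, together with the global projections $\Pi_\mu$ onto the Schur--Weyl components of the full $\mH^{\otimes n}$. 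Both families commute with $U^{\otimes n}$, so any decoder built from them is unitary-invariant. Neither depends on $W$, which is what makes the scheme universal.

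Following the classical maximum empirical mutual information decoder, we use a square-root (pretty-good) measurement, or Hayashi--Nagaoka-type tests, built from the channel-independent operators
\begin{equation*}
T_{x^n}\coloneqq\sum_{\vec\lambda,\mu}\mathbb{1}\qty[\,n\hat I(\vec\lambda,\mu)\geq nR+n\delta_n\,]\,\Pi_{x^n,\vec\lambda}\Pi_\mu\Pi_{x^n,\vec\lambda}.
\end{equation*}
Here $\hat I(\vec\lambda,\mu)\coloneqq H(\bar\mu)-\sum_x P_n(x)H(\bar\lambda_x)$ is the empirical mutual information computed from the normalized diagrams, and $\delta_n\to 0$. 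Since there are only polynomially many diagrams, the random-coding error splits by the Hayashi--Nagaoka inequality into two terms. The first is the type-I term $\Tr[\rho_{x^n}(I-T_{x^n})]$. The second is the type-II term $M_n\,\mathbb{E}_{\bar x^n}\Tr[\rho_{x^n}T_{\bar x^n}]$, where $\bar x^n$ is an independent codeword from the same type class.

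We bound each term by the sandwiched quantities through the group-symmetric hypothesis-testing exponents cited in the previous paragraph~\cite{hiai_error_correlated_2008, Hiai_hypothesis_testing_group_2009}. Pinching by the Young-diagram projections costs only a polynomial factor, and the weight of $\rho^{\otimes k}$ on a diagram $\lambda$ is governed by $\exp(-k\,\widetilde{D}$-type rate functions). Concretely, we will show two estimates:
\begin{align*}
\Tr[\rho_{x^n}(I-T_{x^n})]&\lesssim \mathrm{poly}(n)\exp\Bigl(-n\sup_{\alpha\in[1/2,1)}\tfrac{1-\alpha}{\alpha}\bigl(\sI^{\rm Aug}_\alpha(P;W)-R\bigr)\Bigr),\\
M_n\mathbb{E}\Tr[\rho_{x^n}T_{\bar x^n}]&\lesssim \mathrm{poly}(n)\,2^{nR}\max_{\vec\lambda,\mu:\hat I\geq R}\Pr[\vec\lambda]\,\Pr[\mu]\,2^{-n\hat I}.
\end{align*}
The second estimate rests on a covering argument: averaging $\Pi_{\bar x^n,\vec\lambda}$ over the type class produces something dominated by $2^{-n\hat I}$ times a polynomial on the $\mu$-component. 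This is the quantum analogue of the classical fact that a random codeword of a given type has joint type $V$ with probability about $2^{-nI(P,V)}$.

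Optimizing the resulting Csisz\'ar-type expression over the empirical spectra yields $\min_{\text{spectra}}\bigl[D(\text{spectra}\,\|\,W)+|\hat I-R|^+\bigr]$. We then convert this into the Gallager form $\sup_{\alpha}\frac{\alpha-1}{\alpha}(R-\sI^{\rm Aug}_\alpha)$ by Lagrange duality. The sandwiched divergence enters through the variational formula $\widetilde{D}_\alpha(\rho\|\sigma)=\lim_k \frac1k D_\alpha(\mathcal{P}_{\sigma^{\otimes k}}(\rho^{\otimes k})\|\sigma^{\otimes k})$ together with the Keyl--Werner large-deviation principle for Young diagrams. The restriction $\alpha\geq 1/2$ arises exactly where the $|\cdot|^+$ term becomes active, as in the classical derivation. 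Finally, the expurgation/derandomization step is standard: the bound holds for every $W$ simultaneously, because the rate function is continuous in $W$ and the set of channels can be replaced by a polynomial net. Hence a Markov argument over polynomially many channel classes gives one fixed codebook sequence that works for all $W$.

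The main obstacle will be the type-II covering estimate. We must show that, after twirling over the type class, the operator $\mathbb{E}_{\bar x^n}\Pi_{\bar x^n,\vec\lambda}$ restricted to $\Pi_\mu$ is bounded by $2^{-n\hat I(\vec\lambda,\mu)}$ up to polynomial factors. This requires controlling dimensions of irreducible representations of the product of symmetric groups $\prod_x\mathcal{S}_{nP_n(x)}$ inside $\mathcal{S}_n$. Our first approach will be to compare these dimensions via Littlewood--Richardson multiplicities (polynomially bounded) and the hook-length estimate $\dim\lambda\approx 2^{kH(\bar\lambda)}$. The second delicate point is matching the dual optimization to exactly the sandwiched Augustin information rather than the Petz one. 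For that we would invoke the minimax identity for $\sI^{\rm Aug}_\alpha$ and check that the optimizing $\sigma$ may be taken unitarily arbitrary, which is where the universal decoder loses the Petz advantage.
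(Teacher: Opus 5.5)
There is a genuine gap, and it is in the type-II step. It comes from the threshold structure of your decoder, not from the representation-theoretic details.

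\textbf{The type-II term carries no exponent.} Grant your covering estimate.
\begin{itemize}
\item $T_{x^n}$ commutes with $U^{\otimes n}$, and the competitor $\bar x^n$ is uniform on the type class. So $\mathbb{E}_{\bar x^n}T_{\bar x^n}$ commutes with every permutation and every $U^{\otimes n}$, and is therefore central: $\mathbb{E}_{\bar x^n}T_{\bar x^n}=\sum_\mu c_\mu\Pi_\mu$.
\item Your dimension count gives $c_\mu\lesssim\mathrm{poly}(n)\,2^{-n(R+\delta_n)}$, because only pairs with $\hat I\geq R+\delta_n$ enter.
\item The transmitted state then appears only through $\Tr[\rho_{x^n}\Pi_\mu]$. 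No factor $\Pr[\vec\lambda]$ survives, and the type-II term is $\lesssim\mathrm{poly}(n)\,2^{-n\delta_n}$.
\end{itemize}
With $\delta_n\to0$, your argument therefore certifies no positive exponent at all.

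This is not a weakness of the estimate. Already for classical channels, a fixed-threshold empirical-MI test has type-II error of order $2^{-n\delta}$. So a threshold test combined with Hayashi--Nagaoka yields at best $\max_{\delta>0}\min\{\delta,E_{\rm sp}(R+\delta)\}$. Above the critical rate this is strictly below $E_{\rm sp}(R)=E_r(R)$.

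The $|\hat I-R|^+$ term in Csisz\'ar's formula comes from maximum-MI decoding. There each competitor is compared with the transmitted codeword's own random empirical MI, not with a fixed threshold. Your decoder does not do this. A direct quantum version is obstructed because the Young-diagram measurements attached to different codewords do not commute. The paper also remarks that the earlier Schur--Weyl-based universal codes do not reach the random-coding exponent even classically.

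The later steps are only asserted. Keyl--Werner governs the diagram of a single i.i.d.\ state, not the joint law of $(\vec\lambda,\mu)$ under a non-identical product state. The object $D(\text{spectra}\,\|\,W)$ that you would dualize into $\sI^{\rm Aug}_\alpha$ is never defined.

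\textbf{What is missing.} You need a decoder whose error is controlled by a single Gallager-type term, with no type-I/type-II trade-off. The paper proceeds as follows.
\begin{itemize}
\item It uses the quotient decoder $\Lambda_m=S_{x^n(m)}/\sum_{m'}S_{x^n(m')}$ for a permutation-covariant family with $\|\overline S\|\le1$.
\item The operator inequality $I-\Lambda_m\le(B_m/S_{x^n(m)})^s$, twirling over the stabilizer of $x^n(m)$, and the deterministic packing property of the codebook give $1-\Tr[\rho_{x^n(m)}\Lambda_m]\le(M_n-1)^se^{s\delta_1(n)}\Tr[\rho_{x^n(m)}S_{x^n(m)}^{-s}]$ for all $s\in(0,1]$.
\item The dual formula for the measured Augustin information shows that product operators $\widehat T_{\mathbf Y}(x^n)$ make the trace $\lesssim e^{-nsI^{{\rm Aug},\mbM}_\alpha(P;W)}$.
\item Universality comes from one channel-independent family $\Sigma_{x^n}$. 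It is the reduced log-det maximizer over the convex hull of all normalized $\widehat T_{\mathbf Y}(x^n_P)$ in the stabilizer commutant, and it dominates every $\widehat T_{\mathbf Y}(x^n)$ up to $\mathrm{poly}(n)$.
\item Double blocking then lifts the measured quantity to the sandwiched one.
\end{itemize}

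\textbf{Derandomization.} This route also removes the need for your derandomization step, which fails as stated. $P_e$ depends on $W$ only through additive continuity, with changes of order $n\|W-W'\|$. Preserving an exponentially small error therefore needs a net with exponentially many points, and the Markov step over them costs a positive amount of exponent. The paper instead fixes a Csisz\'ar--K\"orner good codebook. Its defining property does not involve $W$, so the bound holds for all channels at once.
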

For the proof, see Appendix~\ref{sup sec: random coding}.
We remark that \Cref{thm: universal random coding bound} is compatible with the previously established results about universal coding. 
That is, the right-hand side is positive whenever $R$ is below the mutual information $I(P;W)$, meaning that all the rates below the mutual information of the channel are universally achievable.
Furthermore, combining with the sphere-packing bound for the high-rate regime in Theorem~\ref{thm: unitary invariant sphere-packing}, the universal random coding bound with the sandwiched Rényi Augustin information gives us the tight characterization of the reliability function of universal classical-quantum channel coding.

 Remarkably, this result gives us a stark separation between the universal coding for the classical channel and the classical-quantum channel coding. In the classical-quantum case, ignorance about the channel degrades the reliability, which is represented as the difference in the emerging quantum Rényi divergence, while the universal coding scheme does not degrade the reliability in the classical case~\cite{Csiszar_Korner_2011}. Since the Petz--Augustin information and the sandwiched Augustin information coincide in the classical case, our results are reduced to the original observation for the classical case.

The central challenge in universal coding is to construct a decoder that works without knowledge of the channel. For example, the channel-aware decoder achieving the Petz random-coding bound in Ref.~\cite{Cheng_operator_layer_cake} requires both the optimizing Rényi parameter $\alpha$ and the spectral decompositions of $\rho_x^\alpha-v(\sum_x\rho_x)^\alpha$ for $0\leq v\leq 1$. These depend on the channel, making this construction difficult to adapt to the universal setting.

Our approach starts from a variational representation of the measured Rényi Augustin information as an optimization over families of positive operators $\qty{Y_x}_{x\in\mX}$. An optimal family defines a quotient decoder~\cite{Beigi_2024_lower_bound} that achieves the sandwiched random-coding bound. This reformulation expresses the decoder's channel dependence through the choice of the operator family, but does not by itself yield a universal decoder: the optimal family generally still depends on the channel.

To remove this dependence, we utilize Schur--Weyl duality to construct a channel-independent family of operators whose associated quotient decoder achieves the sandwiched random-coding bound up to a polynomial overhead. Since this overhead does not affect the asymptotic error exponent, the resulting decoder achieves the sandwiched random-coding bound universally.

\paragraph{Discussion}\,---\,
In this work, we clarified the separation between channel-aware and universal channel coding for the classical-quantum channel, which also exhibits a stark difference between the mathematical structures of classical and quantum channels.
Specifically, we present an example of a classical-quantum channel whose channel-aware random-coding bound is never achievable with any fixed codebook-decoder pair.
Furthermore, we identify the sphere-packing bound under unitary invariance of the decoder, a natural assumption for a universal decoder, as the sandwiched version of the sphere-packing bound. Moreover, we show that the sandwiched version of the random-coding bound is universally achievable, characterizing the optimal reliability of universal classical-quantum channel coding in the high-rate regime.

Our results show that, although channel capacity can be achieved without prior knowledge of the channel, such knowledge—in particular, knowledge of the orientation of the output quantum system with respect to the unitary rotation—is crucial for achieving optimal reliability. The symmetry constraint arising from the lack of this knowledge limits the achievable error exponent. 
Also, the new techniques developed here may also help characterize the ultimate performance of other quantum information tasks when the underlying states or channels are not fully known.

\paragraph{Disclosure of AI use.---} 
The authors used OpenAI’s ChatGPT models 5.6 Sol and 6 Astra to assist with exploring proof strategies, checking proofs, and preparing the manuscript. The authors reviewed and verified the outputs and take full responsibility for the content of this work.

\paragraph{Acknowledgments.---} 
The authors are indebted to Ryuji Takagi and Bartosz Regula for fruitful discussions.
We acknowledge the support of the JSPS KAKENHI Grant No. 26KJ0965, JST, PRESTO Grant Number JPMJPR24FA, and the World-Leading Innovative Graduate Study Program for Advanced Basic Science Course (WINGS-ABC) at the University of Tokyo.
HC acknowledges support from National Science and Technology Council (NSTC 115-2628-E-002-005, NSTC 114-2119-M-001-002, and NSTC 115-2124-M-002-014) and Ministry of Education (NTU-115V2016-1, NTU-CC115L893705, and NTU-115L900702).

\let\oldaddcontentsline\addcontentsline
\renewcommand{\addcontentsline}[3]{}

\bibliographystyle{apsrmp4-2}
\bibliography{myref}

\let\addcontentsline\oldaddcontentsline


\newpage
\newgeometry{hmargin=1.2in,vmargin=0.8in}

\widetext
\begin{center}
{\large \bf End Matters}
\end{center}

In the following, we show that the channel-aware random-coding bound with Petz Rényi Augustin information is never universally achievable by exhibiting a concrete example of a family of channels.

Fix $0<p<1/2$ and consider the binary qubit channel
\begin{equation}
 W:\qty{\pm}\mapsto\ketbra{\psi_\pm},\qquad
 \ket{\psi_\pm}=\sqrt{1-p}\,\ket0\pm\sqrt p\,\ket1.
 \label{eq:channel}
\end{equation}
The actual channel is $W_\theta:\qty{\pm}\mapsto U_\theta\ketbra{\psi_\pm}{\psi_\pm} U_\theta^\dagger$, where
$U_\theta=\operatorname{diag}(1,e^{i\theta})$ and the same unknown
$\theta\in[0,2\pi]$ is used at every channel use.
The outputs do not commute, and the capacity is $h(p)>0$~\cite{Burnashev_1998reliability},
where $h(t)=-t\log t-(1-t)\log(1-t)$. All logarithms are natural.

A length-$n$ code consists of $M_n$ distinct words $x^n(m)\in\{+,-\}^n$
and positive decoding operators $\Lambda_{m}$ with
$\sum_m\Lambda_{m}=I$. Both the words and the operators are independent
of $\theta$. For equally likely messages, the average error probability is
\begin{equation}
 P^n_e(P,R,W_\theta)\coloneqq 1-\frac1{M_n}\sum_{m=1}^{M_n}
\Tr\qty[\Lambda_{m}
U^{\otimes n}_\theta\rho_{x^n(m)}(U^\dagger_\theta)^{\otimes n}].
 \label{eq:error}
\end{equation}
A sequence has rate $R$ when $n^{-1}\log M_n\to R$ and lower error
exponent $\liminf_n[-n^{-1}\log P_e(P,R,W_\theta)]$, with $-\log0=+\infty$.
Let $E_{\mathrm{aware}}(R)$ be the supremum of this over
rate-$R$ sequences allowed to depend on $\theta$; it is independent of
$\theta$, since a known rotation can be undone.
For the channels exhibied above, the channel-aware random coding bound is given by~\cite{Burnashev_1998reliability, Cheng_operator_layer_cake} 
\bal\label{eq:known}
E_{\rm aware}(P, R, W_\theta)= \sup_{0\leq s\leq 1}-\log\!\left[(1-p)^{1+s}+p^{1+s}\right]-sR.
\eal

Write $D(t\Vert p)=t\log(t/p)+(1-t)\log[(1-t)/(1-p)]$.

\begin{thmboxed}
   \begin{thm}[No universal attainment of the optimal channel-aware reliability]
\label{thm:universal-no-go}
For every $0<R<h(p)$, let $t\in(0,p)$ satisfy $h(t)=R$.
Every rate-$R$ code sequence independent of $\theta$ obeys
\begin{equation}
 \liminf_{n\to\infty}-\frac1n\log\max_{\theta\in[0,2\pi]}
 P^n_e(P,R,W_\theta)
 \;\le\;D(t\Vert p)\;<\;E_{\mathrm{aware}}(R).
 \label{eq:theorem}
\end{equation}
\end{thm}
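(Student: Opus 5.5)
The plan is to lower-bound the worst case by the average over a uniformly random phase, $\max_\theta P^n_e(P,R,W_\theta)\ge \frac{1}{2\pi}\int_0^{2\pi}P^n_e(P,R,W_\theta)\,d\theta$. The average is the error probability of the same code on the phase-twirled states $\bar\rho_{x^n}\coloneqq\frac1{2\pi}\int U_\theta^{\otimes n}\rho_{x^n}(U_\theta^\dagger)^{\otimes n}d\theta$. We then show that this twirled problem has exponent at most $D(t\Vert p)$, and finally compare $D(t\Vert p)$ with the formula \eqref{eq:known} for $E_{\rm aware}$.

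\emph{Structure of the twirled states.} Expand the codeword state in the computational basis: $\ket{\psi_{x^n}}=\sum_{z\in\{0,1\}^n}\pm\sqrt{(1-p)^{n-|z|}p^{|z|}}\ket z$, with signs determined by $x^n$. The phase $U_\theta^{\otimes n}$ multiplies $\ket z$ by $e^{i\theta|z|}$, so twirling kills all coherences between different Hamming weights. Writing $\Pi_k$ for the projector onto weight-$k$ strings, we get
\begin{equation}
\bar\rho_{x^n}=\sum_{k=0}^n q_k\,\ketbra{\phi_{x^n,k}},\qquad q_k=\binom nk p^k(1-p)^{n-k},
\end{equation}
where each $\phi_{x^n,k}$ is a unit vector in $\Pi_k$, of dimension $d_k=\binom nk$. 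The weight $k$ is therefore binomially distributed independently of the message.

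\emph{Conditional success bound.} Conditioned on $k$, all $M_n$ messages are encoded in a $d_k$-dimensional space. Hence
\begin{equation}
\sum_m\Tr[\Lambda_m\Pi_k\ketbra{\phi_{m,k}}\Pi_k]\le\sum_m\Tr[\Pi_k\Lambda_m\Pi_k]=d_k .
\end{equation}
The averaged success probability is therefore at most $\sum_k q_k\min\{1,d_k/M_n\}$, and so
\begin{equation}
\bar P_e\ge\sum_{k:\,d_k\le M_n/2}\tfrac12 q_k .
\end{equation}
Fix $\delta>0$ small and take $k=\lfloor(t-\delta)n\rfloor$. Since $t<p<1/2$ and $h$ is increasing on $[0,1/2]$, $d_k\le e^{nh(t-\delta)}\le M_n/2$ for large $n$, because $n^{-1}\log M_n\to R=h(t)$. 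A standard type estimate gives $q_k\ge (n+1)^{-1}e^{-nD(t-\delta\Vert p)}$. Taking $\liminf$ and then $\delta\to0$ by continuity yields the first inequality in \eqref{eq:theorem}.

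\emph{Strict inequality.} Let $g(s)\coloneqq-\log Z_s-sR$ with $Z_s=(1-p)^{1+s}+p^{1+s}$, so that $E_{\rm aware}=\sup_{0\le s\le1}g(s)$ by \eqref{eq:known}. Let $P_s\propto(p^{1+s},(1-p)^{1+s})$ be the tilted distribution. Since $t<p$, there is a unique $s_t>0$ with $P_{s_t}=(t,1-t)$. Using $\log P_s=(1+s)\log P-\log Z_s$, a direct computation gives:
\begin{itemize}
\item $h(t)=-(1+s_t)\sum P_{s_t}\log P+\log Z_{s_t}$;
\item $D(t\Vert p)=s_t\sum P_{s_t}\log P-\log Z_{s_t}$;
\item consequently $g(s_t)=(1+s_t)D(t\Vert p)$.
\end{itemize}
We then split into two cases.
\begin{itemize}
\item If $s_t\le1$, then $E_{\rm aware}\ge(1+s_t)D(t\Vert p)>D(t\Vert p)$, since $D(t\Vert p)>0$ for $t\neq p$.
\item If $s_t>1$, we use that $g$ is concave (because $\log Z_s$ is a log-partition function, hence convex) and $g(0)=0$. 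Concavity gives $g(1)\ge \tfrac1{s_t}g(s_t)=\tfrac{1+s_t}{s_t}D(t\Vert p)>D(t\Vert p)$.
\end{itemize}

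I expect the main obstacle to be the bookkeeping in the conditional converse. One must check that the projection onto the weight-$k$ sector is legitimate for an arbitrary, non-block-diagonal POVM; it is, because $\bar\rho$ is block diagonal, so only $\Pi_k\Lambda_m\Pi_k$ contributes. One must also make sure the $\delta\to0$ limit interacts correctly with the rate only converging to $R$.
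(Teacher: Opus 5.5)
Your proposal is correct and follows essentially the same route as the paper. The paper likewise averages over $\theta$ to kill coherences between Hamming-weight sectors $\Pi_k$, bounds the sector-$k$ success probability by $q_k\min\{1,\binom nk/M_n\}$ using $\sum_m\Lambda_m=I$, retains a single $k\approx nq$ with $q<t$, and obtains the strict gap from convexity of $u\mapsto\log[(1-p)^{1+u}+p^{1+u}]$.

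The only difference is cosmetic. The paper evaluates the channel-aware objective at $s=s_t/(1+s_t)$, which always lies in $(0,1)$, so it needs no case split on $s_t\le 1$ versus $s_t>1$. One small point on your estimate $q_k\ge(n+1)^{-1}e^{-nD(t-\delta\Vert p)}$: since $k/n\neq t-\delta$, it holds only up to a bounded constant factor, which does not affect the exponent.
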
 
\end{thmboxed}
The maximum in Eq.~\eqref{eq:theorem} requires a reliability guarantee uniform
over the unknown phase; the phase is constant throughout each block.

\begin{proof}
Let $\Pi_k$ project onto binary basis vectors with $k$ ones, a space of
dimension $\binom nk$. Averaging over $\theta$ removes matrix elements
between different such spaces. Each codeword's block has trace
$w_k=\binom nk p^k(1-p)^{n-k}$ and is bounded by $w_k\Pi_k$ with respect to the Loewener bound.
Since the decoding operators sum to $I$, this block contributes at most
$w_k\min\{1,\binom nk/M_n\}$ to the average success probability. Hence
\begin{equation}
 \int_0^{2\pi}  P^n_e(P,R,W_\theta)\,\frac{d\theta}{2\pi}
 \ge\sum_{k=0}^n\binom nk p^k(1-p)^{n-k}
       \left(1-\frac{\binom nk}{M_n}\right)_+,
 \label{eq:average}
\end{equation}
where $(a)_+=\max\{a,0\}$.
For $0<q<t$, retain $k=\lfloor nq\rfloor$ in Eq.~\eqref{eq:average}.
Then $\binom nk=e^{nh(q)+o(n)}=o(M_n)$, and the binomial probability
is $e^{-nD(q\Vert p)+o(n)}$.
Since the maximum error is at least its phase average, taking the limit $n\to\infty$ and  $q\to t-0$
gives the first inequality in Eq.~\eqref{eq:theorem}.

Set $s=1-\log[(1-p)/p]/\log[(1-t)/t]\in(0,1)$. Substitution gives
\begin{align}
 D(t\Vert p)
 &=-(1-s)\log\!\left[(1-p)^{1/(1-s)}+p^{1/(1-s)}\right]-sR\notag\\
 &<-\log\!\left[(1-p)^{1+s}+p^{1+s}\right]-sR.
 \label{eq:strict}
\end{align}
Indeed, $u\mapsto\log[(1-p)^{1+u}+p^{1+u}]$ is strictly convex and
vanishes at zero; its slopes from zero at $s$ and $s/(1-s)>s$ give
the strict inequality. Since Eq.~\eqref{eq:known} is achievable,
Eq.~\eqref{eq:strict} proves $D(t\Vert p)<E_{\mathrm{aware}}(R)$.
\end{proof}

Theorem~\ref{thm:universal-no-go} already establishes the uniform no-go
statement. The following lemma yields the stronger fixed-phase conclusion
in Corollary~\ref{cor:fixed}.
\begin{lemboxed}
  \begin{lem}[A fixed phase]
\label{lem:fixed-rotation}
For every sequence of codes independent of $\theta$ for the phase-rotated
channel Eq.~\eqref{eq:channel}, with error defined by Eq.~\eqref{eq:error},
\begin{equation}
 \inf_{\theta\in[0,2\pi]}\liminf_n-\frac1n\log   P^n_e(P,R,W_\theta)
 =\liminf_n-\frac1n\log\max_{\theta\in[0,2\pi]}  P^n_e(P,R,W_\theta).
 \label{eq:fixed-rotation}
\end{equation}
\end{lem}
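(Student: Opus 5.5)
The inequality ``$\geq$'' in Eq.~\eqref{eq:fixed-rotation} is immediate: for every fixed $\theta$ and every $n$ we have $P^n_e(P,R,W_\theta)\le\max_{\theta'}P^n_e(P,R,W_{\theta'})$, so $-\frac1n\log$ of the left quantity dominates that of the right one. Taking $\liminf_n$ and then the infimum over $\theta$ gives LHS $\ge$ RHS. All the work is in the reverse inequality. There we must find a \emph{single} phase $\theta^*$ whose error is, along a subsequence, almost as large as the worst-case error, up to a subexponential factor.

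\textbf{Structure of the error as a function of the phase.} Write $f_n(\theta)\coloneqq P^n_e(P,R,W_\theta)$. In the computational basis, $U_\theta^{\otimes n}$ is diagonal with entries $e^{ik\theta}$, where $k$ is the Hamming weight. Hence every matrix element of $U^{\otimes n}_\theta\rho_{x^n(m)}(U^\dagger_\theta)^{\otimes n}$ is a multiple of $e^{i(k-k')\theta}$ with $|k-k'|\le n$. It follows that $f_n$ is a nonnegative trigonometric polynomial of degree at most $n$, with at most $2n+1$ frequencies. The key analytic input will be a Turán--Nazarov type inequality for such exponential sums: for every arc $I\subset[0,2\pi]$,
\begin{equation*}
\max_{\theta\in[0,2\pi]}f_n(\theta)\le\Bigl(\frac{c}{|I|}\Bigr)^{2n}\max_{\theta\in I}f_n(\theta),
\end{equation*}
where $c$ is an absolute constant. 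The point of this inequality is that, for any arc of fixed length, restricting $\theta$ to that arc costs only a factor $e^{-O(n)\cdot(\log(c/|I|))/n\cdot n}$. After normalizing by $\frac1n$, this is a constant depending only on $|I|$.

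\textbf{Nested-arc construction.} Let $r$ denote the RHS of Eq.~\eqref{eq:fixed-rotation}. If $r=\infty$ there is nothing to prove, so assume $r<\infty$. Fix a sequence $\epsilon_j\downarrow0$. We build nested closed arcs $I_0\supset I_1\supset\cdots$ and indices $n_1<n_2<\cdots$ inductively, starting from $I_0=[0,2\pi]$.

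Given the arc $I_j$, whose length is now a fixed positive number, choose $n_{j+1}>n_j$ with two properties. First, $\frac{2}{n_{j+1}}\log(c/|I_j|)<\epsilon_j$. Second, $-\frac{1}{n_{j+1}}\log\max_\theta f_{n_{j+1}}\le r+\epsilon_j$. This is possible because $r$ is a liminf, so the second condition holds for infinitely many $n$. Next pick a point $\theta_{j+1}\in I_j$ maximizing $f_{n_{j+1}}$ over $I_j$. Let $I_{j+1}\subset I_j$ be a closed arc of positive length around $\theta_{j+1}$ on which $f_{n_{j+1}}\ge\frac12\max_{I_j}f_{n_{j+1}}$; such an arc exists by continuity of $f_{n_{j+1}}$.

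By compactness there is a point $\theta^*\in\bigcap_j I_j$. For every $j$ we then get
\begin{equation*}
f_{n_{j+1}}(\theta^*)\ge\tfrac12\Bigl(\frac{|I_j|}{c}\Bigr)^{2n_{j+1}}\max_\theta f_{n_{j+1}}(\theta)\ge\tfrac12 e^{-n_{j+1}(r+2\epsilon_j)}.
\end{equation*}
Therefore $\liminf_n-\frac1n\log f_n(\theta^*)\le r$, which gives LHS $\le$ RHS.

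\textbf{Main obstacle.} The main obstacle is the first step: justifying the Turán--Nazarov bound with a constant that depends only on $|I|$, with exponent linear in $n$, and uniform over the code. I would cite Nazarov's lemma for exponential sums, $\sup_{\mathbb T}|p|\le(A|\mathbb T|/|E|)^{K-1}\sup_E|p|$ with $K=2n+1$ frequencies. If that citation is unavailable, I would fall back on Bernstein's inequality combined with a Chebyshev-polynomial extremality argument.

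The essential trick is the order of choices. The arc length is fixed \emph{before} the next blocklength is chosen, so the polynomial loss factor is exponentially negligible. This is exactly why the argument yields a fixed $\theta^*$ and not an $n$-dependent maximizer.
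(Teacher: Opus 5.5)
\paragraph{Gap in the nested-arc step.} Your easy direction is correct. So is your observation that $f_n(\theta)=P^n_e(P,R,W_\theta)$ is a nonnegative trigonometric polynomial of degree at most $n$. The nested-arc step, however, fails.

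The Tur\'an--Nazarov factor is $(c/|I_j|)^{2n_{j+1}}=e^{2n_{j+1}\log(c/|I_j|)}$. Your final display therefore needs $2\log(c/|I_j|)\le\epsilon_j$, not $\frac{2}{n_{j+1}}\log(c/|I_j|)<\epsilon_j$. The loss per channel use is a constant fixed by $|I_j|$, as you note yourself a few lines earlier, and no choice of $n_{j+1}$ removes it. What the construction actually gives is
\[
\liminf_n-\tfrac1n\log f_n(\theta^*)\le r+\lim_j 2\log(c/|I_j|),
\]
which is vacuous once $|I_j|\to0$, and nothing prevents that.

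This is not just lossy bookkeeping: the construction itself can fail. Your argument uses only that each $f_n$ is a trigonometric polynomial of degree $n$ with values in $[0,1]$. Take $f_n(\theta)=((1+\cos(\theta-\phi_n))/2)^n$ with $\phi_n=0$ for even $n$ and $\phi_n=\pi$ for odd $n$. Then $r=0$ and $\liminf_n-\frac1n\log f_n(0)=0$. Yet the admissible run with $n_1$ even and $n_2$ odd forces $\theta^*$ into a small arc at an endpoint of $I_1$, bounded away from $0$ (because $f_{n_2}$ vanishes at $0$). Hence $\liminf_n-\frac1n\log f_n(\theta^*)>0=r$. Restricting to an arc of fixed length genuinely costs an exponential factor with non-vanishing rate: already $((1+\cos\theta)/2)^n$ is $e^{-\kappa n}$-small, with $\kappa>0$, on any arc avoiding $0$. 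So the order of choices cannot save the argument.

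\paragraph{The missing idea.} You need to control the bad set in \emph{measure}. Take $\gamma$ below the left-hand side and $\epsilon>0$. Every $\theta$ lies in $A_n=\{\theta:f_n(\theta)\le e^{-n(\gamma-\epsilon)}\}$ for all large $n$, so $|A_n^c|\to0$ by dominated convergence. What is then needed is a bound of $\max_{\mathbb T}f_n$ by $\max_{A_n}f_n$ whose exponential loss rate vanishes as $|A_n^c|\to0$.

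The paper gets this in four steps:
\begin{enumerate}
\item Write $f_n(\theta)=\|F_n(e^{i\theta})\|^2$ for a vector-valued polynomial $F_n$ of degree at most $n$.
\item Bound the subharmonic function $\log\|F_n(z)\|^2$ on $|z|=r<1$ by its Poisson integral. For fixed $r$ the Poisson mass of $A_n^c$ tends to $0$, giving $\log\|F_n(z)\|^2\le -n(\gamma-\epsilon)(1-o(1))$.
\item Return to the unit circle via Cauchy's coefficient estimates, at cost $(n+1)^2r^{-2n}$.
\item Let $r\to1$.
\end{enumerate}

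Alternatively, a Remez inequality for trigonometric polynomials in the small-exceptional-set regime would also work: $\max_{\mathbb T}|T|\le e^{O(ns)}\max_E|T|$ with $s=|\mathbb T\setminus E|$ small. The Nazarov form you quote would not, since its factor stays at least $A^{2n}$ with $A>1$ as $|E|\to|\mathbb T|$.

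Such a Remez bound even rescues your direct strategy. Along the $n$ with $\max_{\mathbb T}f_n\ge e^{-n(r+\epsilon/2)}$, it forces $G_n=\{f_n\ge e^{-n(r+\epsilon)}\}$ to have measure at least $\min\{c\epsilon,\pi/2\}$ for an absolute $c>0$. Since $|\limsup G_n|\ge\limsup|G_n|$ along that subsequence, some $\theta^*$ lies in infinitely many $G_n$. This gives $\liminf_n-\frac1n\log f_n(\theta^*)\le r+\epsilon$.
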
  
\end{lemboxed}

\begin{proof}
The right-hand side of Eq.~\eqref{eq:fixed-rotation} is at most the left.
For the reverse inequality, let $F_n(z)$ be the concatenation
of the vectors
$M_n^{-1/2}(I-\Lambda_{m})^{1/2}\qty(\sum_{k=0}^nz^k\Pi_k)
\bigotimes_{i=1}^n\ket{\psi_{x_i(m)}}$.
This is a vector-valued polynomial of degree at most $n$ and,
by Eq.~\eqref{eq:error}, $P_e(P,R,W_\theta)=\|F_n(e^{i\theta})\|^2\le1$.
If the left-hand side of Eq.~\eqref{eq:fixed-rotation} is zero, equality is
immediate. Otherwise, choose any finite $\gamma>0$ below that side
and $0<\epsilon<\gamma$.

Here, let us define a set 
\bal
A_n\coloneqq\qty{\theta\in [0,2\pi)~|~P^n_e(P,R,W_\theta)\le e^{-n(\gamma-\varepsilon)}}.
\eal
Due to \Cref{thm:universal-no-go}, it holds that $A_n\subset A_{n+1}$ for any $n$, and any $\theta\in[0,2\pi]$ is included in the set $A_n$ for sufficiently large $n$.
For $|z|=r<1$, the mean-value inequality for the logarithm of a polynomial,
applied to each scalar projection of $F_n$, gives
\bal\label{eq:inside}
\log\|F_n(z)\|^2
 &\le\int_0^{2\pi}\frac{1-r^2}{|e^{i\theta}-z|^2}
                \log  P_e(P,R,W_\theta)\,\frac{d\theta}{2\pi}\\
                &\leq \frac{1+r}{1-r}\int_{A_n} \log  P_e(P,R,W_\theta)\,\frac{d\theta}{2\pi}+\frac{1+r}{1-r}\int_{A^c_n} \log  P_e(P,R,W_\theta)\,\frac{d\theta}{2\pi}\\
 &\le\frac{1-r}{1+r}\qty(-n(\gamma-\varepsilon))(1-o(1)).
\eal
Here, in the first inequality, we used the fact that $\log\|F_n(z)\|^2$ is a subharmonic function.
The second inequality is because it holds that 
\bal
\frac{1-r^2}{|e^{i\theta}-z|^2}\leq \frac{1+r}{1-r}.
\eal
The last inequality is because $\abs{A_n^c}\to 0$.
Due to Cauchy's coefficient estimate, the norm of the $k$th coefficient can be upper-bounded by $r^{-k}\max_{|z|=r}\|F_n(z)\|$. Summing up $n+1$
coefficients therefore yields
\begin{equation}
 \max_\theta  P_e(P,R,W_\theta)
 \le(n+1)^2r^{-2n}\max_{|z|=r}\|F_n(z)\|^2.
 \label{eq:outside}
\end{equation}
Thus the right-hand side of Eq.~\eqref{eq:fixed-rotation} is at least
$\gamma-\epsilon+2\log r$.
Let $r\to 1-0$ and $\epsilon\to +0$, then let $\gamma$ increase
to the left-hand side of Eq.~\eqref{eq:fixed-rotation}. This includes the case
where that side is infinite and proves equality.
\end{proof}

\begin{proboxed}
   \begin{cor}[No attainment on every fixed phase]
\label{cor:fixed}
Under the assumptions of Theorem~\ref{thm:universal-no-go},
\begin{equation}
 \exists\theta_*\in[0,2\pi]:\quad
 \liminf_{n\to\infty}-\frac1n\log  P_e(P,R,W_{\theta_*})
 <E_{\mathrm{aware}}(R).
\end{equation}
The phase $\theta_*$ may depend on the code sequence and $R$, but not on $n$.
\end{cor}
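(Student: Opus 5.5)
The plan is to combine Theorem~\ref{thm:universal-no-go} with Lemma~\ref{lem:fixed-rotation}. Fix $0<R<h(p)$ and an arbitrary rate-$R$ code sequence that does not depend on $\theta$. Let $t\in(0,p)$ be defined by $h(t)=R$, as in Theorem~\ref{thm:universal-no-go}.

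First, Theorem~\ref{thm:universal-no-go} gives the uniform bound
\begin{equation*}
\liminf_{n\to\infty}-\frac1n\log\max_{\theta}P^n_e(P,R,W_\theta)\le D(t\Vert p)<E_{\mathrm{aware}}(R).
\end{equation*}
Second, Lemma~\ref{lem:fixed-rotation} says this left-hand side equals
\begin{equation*}
\inf_{\theta\in[0,2\pi]}\liminf_{n}-\frac1n\log P^n_e(P,R,W_\theta).
\end{equation*}
Together these yield
\begin{equation*}
\inf_{\theta}\liminf_{n}-\frac1n\log P^n_e(P,R,W_\theta)\le D(t\Vert p)<E_{\mathrm{aware}}(R).
\end{equation*}

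Third, I extract a single phase from this infimum. Since the infimum is strictly below $E_{\mathrm{aware}}(R)$, set $\delta\coloneqq E_{\mathrm{aware}}(R)-D(t\Vert p)>0$. By the definition of the infimum there is some $\theta_*$ with
\begin{equation*}
\liminf_n-\frac1n\log P^n_e(P,R,W_{\theta_*})<D(t\Vert p)+\delta/2<E_{\mathrm{aware}}(R).
\end{equation*}
This works even if the infimum is not attained, and it also covers the case where the per-$\theta$ exponents are infinite for some phases. The phase $\theta_*$ depends only on the code sequence and on $R$, through the liminf exponents, and not on $n$. This matches the statement.

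The main obstacle has already been handled by Lemma~\ref{lem:fixed-rotation}. That lemma is what allows exchanging the maximum over $\theta$ with the limit over $n$, and it is proved via subharmonicity and Cauchy's coefficient estimate. In the corollary itself, the only points to check are that the lemma's hypotheses hold: the code is independent of $\theta$, and the channel is the phase-rotated family of Eq.~\eqref{eq:channel}. The final step needs care only to keep the inequality strict when passing from the infimum to a single phase.
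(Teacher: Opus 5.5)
Your proposal is correct and follows the paper's proof exactly. You combine the uniform bound of Theorem~\ref{thm:universal-no-go} with the exchange identity of Lemma~\ref{lem:fixed-rotation}, then use the strict gap to pick a single $\theta_*$ from the infimum without needing it to be attained; the paper does the same when it remarks that attainment of the infimum over $\theta$ is unnecessary.
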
 
\end{proboxed}

\begin{proof}
Apply Lemma~\ref{lem:fixed-rotation} to Eq.~\eqref{eq:theorem} and use its
strict gap. Attainment of the infimum over $\theta$ is unnecessary.
\end{proof}
\newpage
\appendix

\setcounter{thm}{0}
\renewcommand{\thethm}{S.\arabic{thm}}
\setcounter{figure}{0}
\renewcommand{\thefigure}{S.\arabic{figure}}

\begin{center}
{\large \bf Appendices}
\end{center}

\setcounter{tocdepth}{2}
\tableofcontents

\section{Preliminaries}

\subsection{Entropic quantities}
In the following, we showcase the entropic quantities relevant to the subsequent discussion.
The Shannon entropy $H(P)$ of the distribution $P$ is defined as
\bal
H(P)\coloneqq -\sum_{x\in\mX}P(x)\log P(x).
\eal
The Kullback-Leibler divergence of a probability distribution $P$ with respect to the probability distribution $Q$ is defined as
\bal
D(P\|Q)\coloneqq \sum_{x\in\mX}P(x)\log\frac{P(x)}{Q(x)}.
\eal
The classical Rényi relative entropy of a probability distribution $P$ with respect to another probability distribution $Q$ is defined as
\bal
D_\alpha(P\|Q)\coloneqq \frac{1}{\alpha-1}\log \sum_{x\in\mX}P(x)^\alpha Q(x)^{1-\alpha}.
\eal
The Umegaki relative entropy of a quantum state $\rho\in\mD(\mH)$ with respect to $\sigma\geq 0$ is defined as~\cite{Umegaki_relative}
\begin{equation}
    D(\rho\|\sigma)\coloneqq 
    \left\{\,
\begin{aligned}
    &\Tr[\rho\log \rho-\rho\log \sigma]~~~(\supp\rho\subset\supp\sigma)\\
    &+\infty~~~(\mbox{otherwise})
\end{aligned}
\right.
\end{equation}
The Petz Rényi relative entropy of $\rho\in\mD(\mH)$ with respect to $\sigma\geq 0$ of order $\alpha\in[0,1)\cp(1,\infty]$ is defined as~\cite{Petz_1986_quasi_entropy}
\begin{equation}
    \pD_\alpha(\rho\|\sigma)\coloneqq \left\{\,
    \begin{aligned}
        &\frac{1}{\alpha-1}\log\Tr\qty[\rho^\alpha\sigma^{1-\alpha}]~~~(\supp\rho\subset\supp\sigma\mbox{ or }\supp\rho\not\perp\supp\sigma, ~0<\alpha<1)\\
        &+\infty~~~(\mbox{otherwise})
    \end{aligned}
    \right.
\end{equation}

Another quantum extension of Rényi relative entropy, the sandwiched Rényi relative entropy of $\rho\in\mD(\mH)$ with respect to $\sigma\in\mD(\mH)$ of order $\alpha\in[0,1)\cup(1,\infty]$, is defined as~\cite{Muller_Lennert_2013_on_quantum_renyi, Wilde_2014_strong_converse}
\begin{equation}
    \sD_\alpha(\rho\|\sigma)\coloneqq \left\{\,
    \begin{aligned}
        &\frac{1}{\alpha-1}\log\Tr\qty[\qty(\sigma^{\frac{1-\alpha}{2\alpha}}\rho\sigma^{\frac{1-\alpha}{2\alpha}})^\alpha]~~~(\supp\rho\subset\supp\sigma\mbox{ or }\supp\rho\not\perp\supp\sigma, ~0<\alpha<1)\\
        &+\infty~~~(\mbox{otherwise})
    \end{aligned}
    \right.
\end{equation}
Note that both the Petz Rényi relative entropy and the sandwiched Rényi relative entropy coincide with the Umegaki relative entropy in the limit $\alpha\to 1$.

The measured Rényi relative entropy $D_\alpha^\mbM(\rho\|\sigma)$ of a quantum state $\rho$ with respect to the positive semidefinite matrix $\sigma\geq 0$ of order $\alpha\in[0,1)\cup(1,\infty]$ is defined as 
\bal
D_\alpha^\mbM(\rho\|\sigma)\coloneqq \sup_{\mM}D_\alpha(\mM(\rho)\|\mM(\sigma)).
\eal
Here, the supremum is taken over all the measurement channels on $\mD(\mH)$. 
The quasi-measured Rényi relative entropy $Q_\alpha^\mbM(\rho\|\sigma)$ is defined as $Q_\alpha^\mbM(\rho\|\sigma)\coloneqq \exp((\alpha-1)D_\alpha^\mbM(\rho\|\sigma))$.
In Ref.~\cite{Berta_2017_on_variational}, it is shown that the measured Rényi relative entropy of $\alpha\in (0,1)$ admits the following variational form:
\bal
 Q_\alpha^{\mbM}(\rho\|\sigma)&=\inf_{Y> 0}\qty(\Tr[\rho Y^{-\frac{\alpha-1}{\alpha}}])^\alpha\bigl(\Tr[\sigma Y]\bigr)^{1-\alpha}\\
 &=\inf_{Y> 0}
 \left\{\alpha\Tr\qty[\rho Y^{-\frac{\alpha-1}{\alpha}}]
 +(1-\alpha)\Tr[\sigma Y]\right\}.
 \label{eq:variational measured renyi}
\eal

The mutual information, a central quantity throughout this work, is defined through the relative entropies raised above.
For a probability distribution $P_X$ and a conditional probability distribution $W_{Y|X}(y|x)$, the classical mutual information $I(P;W)$ is defined as 
\bal
I(P;W)\coloneqq D((W\times P)_{XY}\|P_X(WP)_Y),
\eal
with $(W\times P)_{XY}(x,y)\coloneqq W_{Y|X}(y|x)P_X(x)$ and $(WP)_Y(y)\coloneqq \sum_{x\in\mX}W_{Y|X}(y|x)P_X(x)$.
In the classical-quantum case, the mutual information of the input distribution $P_X=P_X(x)$ and the classical-quantum channel $W:x\mapsto \rho_{B,x}$ is defined using the quantum state $\rho_{XB}=\sum_xP_X(x)\ketbra{x}{x}_X\otimes \rho_{B,x}$ as
\bal
I(P;W)\coloneqq D(\rho_{XB}\|\rho_X\otimes \rho_B).
\eal

Classical $\alpha$-Rényi Augustin information is defined as
\bal
I^{\rm Aug}_\alpha(P;W)\coloneqq \inf_{Q_Y\in\mP(\mY)}\sum_{x\in\mX}P_X(x)D_\alpha\qty(W_{Y|X}(\cdot|x)\| Q_Y(\cdot)).
\eal
For a quantum divergence $\mbD\in\qty{\pD,\sD,D^{\mbM}}$, the corresponding Rényi Augustin information is defined as 
\bal
\mbI^{\rm Aug}_\alpha(P;W)\coloneqq \inf_{\sigma\in\mD(\mH)}\sum_{x\in\mX}P_X(x)\mbD_\alpha\qty(\rho_x\| \sigma).
\eal

\subsection{Type theory}
\label{subsec:types}
In the following, we review the notion of the type theory. For more detail, see e.g. Ref.~\cite{Csiszar_Korner_2011, cover_1999_elements}.
Let $\mathcal{X}$ be a finite alphabet, and let
$\mathcal{P}(\mathcal{X})$ denote the set of probability
distributions on $\mathcal{X}$.
The type (empirical distribution) of a sequence
$x^n=(x_1,\ldots,x_n)\in\mathcal{X}^n$ is defined by
\begin{equation}
    P_{x^n}(a)
    \coloneqq \frac{1}{n}
    \bigl|\{i\in\{1,\ldots,n\}:x_i=a\}\bigr|,
    \qquad a\in\mathcal{X}.
\end{equation}
We denote the set of types of sequences of length $n$ by $\mathcal{P}_n(\mathcal{X})\coloneqq \left\{P\in\mathcal{P}(\mathcal{X}):nP(a)\in\mathbb{Z}_{\geq 0}~\forall a\in\mathcal{X}\right\}.$
For $P\in\mathcal{P}_n(\mathcal{X})$, its type class is defined as $T^n_P\coloneqq \{x^n\in\mathcal{X}^n:P_{x^n}=P\}$.

Each type is uniquely specified by the integer counts
$(nP(a))_{a\in\mathcal{X}}$, each of which takes a value
in $\{0,\ldots,n\}$. Consequently, it holds that
\begin{equation}\label{eq: type counting}
    |\mathcal{P}_n(\mathcal{X})|
    \leq(n+1)^{|\mathcal{X}|}.
\end{equation}
Thus, for a fixed alphabet $\mathcal{X}$, the number of
types grows at most polynomially in $n$, even though
the number of sequences is $|\mathcal{X}|^n$.
Noting that the cardinality of the type class with $n$-type $P$ is bounded as
\begin{equation}
    (n+1)^{-|\mathcal{X}|}2^{nH(P)}
    \leq |T^n_P|
    \leq 2^{nH(P)}.
    \label{eq:type-class-size}
\end{equation}

Let $\mathcal{Y}$ be another finite alphabet.
Given $P\in\mathcal{P}_n(\mathcal{X})$, a conditional
distribution $V:\mathcal{X}\to\mathcal{P}(\mathcal{Y})$
is a conditional type compatible with $P$ if
\begin{equation}
    nP(a)V(b|a)\in\mathbb{Z}_{\geq 0}
    \qquad
    \text{for all }(a,b)\in\mathcal{X}\times\mathcal{Y}.
\end{equation}
Conditional types that agree on the support of $P$
are identified; the values of $V(\cdot|a)$ for
$P(a)=0$ are immaterial.
We denote the resulting set of conditional types by
$\mathcal{V}_n(\mathcal{Y}|P)$.

For $x^n\in T^n_P$ and
$V\in\mathcal{V}_n(\mathcal{Y}|P)$, the $V$-shell of
$x^n$ is defined as
\begin{equation}
    T_V(x^n)\coloneqq 
    \left\{y^n\in\mathcal{Y}^n:
    \begin{array}{l}
    |\{i:(x_i,y_i)=(a,b)\}|=nP(a)V(b|a)\\
    \text{for all }(a,b)\in\mathcal{X}\times\mathcal{Y}
    \end{array}
    \right\}.
    \label{eq:conditional-type-shell}
\end{equation}

\section{Sphere-packing bound under unitary-invariance (Proof of Theorem~\ref{thm: unitary invariant sphere-packing})}\label{sup sec: sphere-packing bound}
In the following, we will derive the sphere-packing bound for the decoder $\qty{\Lambda_m}_m$ with the unitary invariance $[\Lambda_m, U^{\otimes n}]=0, \forall U\in U(d), \forall m$.
Our goal is to show the following.
\begin{thmboxed}
\begin{thm}[Unitary-invariant sphere-packing bound, Theorem~\ref{thm: unitary invariant sphere-packing} in the main text.]\label{sup thm: unitary invariant sphere-packing}
    Let $P_n$ be the $n$-type from which the codewords are drawn, converging to a fixed probability distribution $P$, and $W:\mX\to\mD(\mH)$ be a classical-quantum channel. Then, for any coding scheme with a unitary-invariant decoder, it holds that 
    \bal\label{sup eq: unitary invariant sphere-packing}
    \limsup_{n\to\infty}-\frac{1}{n}\log P^n_e(P_n, R, W)\leq \sup_{1/2\leq \alpha\leq 1}\frac{\alpha-1}{\alpha}\qty(R-\sI^{\rm Aug}_\alpha(P:W))
    \eal
    for the high-rate regime above a certain critical rate $R_{\rm crit}\leq R$.
    Here, $\sI^{\rm Aug}_\alpha(P:W)$ is the sandwiched Augustion information.
\end{thm}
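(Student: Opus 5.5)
We follow the Dalai-type reduction of coding to binary hypothesis testing, adapted to unitary invariance.

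\emph{Step 1 (reduction to testing with a twirled state).} Fix a constant-composition code with codewords of type $P_n$ and a unitary-invariant decoder $\{\Lambda_m\}$. Since $\sum_m \Lambda_m = I$ on $\mH^{\otimes n}$, pick any auxiliary state $\sigma^{\otimes n}$. Replace it by its twirl $\bar\sigma_n := \int U^{\otimes n}\sigma^{\otimes n}(U^\dagger)^{\otimes n}\,dU$; this leaves every $\Tr[\Lambda_m\,\cdot\,]$ unchanged by invariance. Then $\sum_m \Tr[\Lambda_m\bar\sigma_n]=1$, so some message $m$ (in fact a constant fraction of them) has $\Tr[\Lambda_m\bar\sigma_n]\le 2/M_n$ and error at most $2P_e^n$.

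Invariance also gives $\Tr[\Lambda_m\rho_{x^n(m)}]=\Tr[\Lambda_m\mathcal{T}(\rho_{x^n(m)})]$, where $\mathcal{T}$ is the $U(d)$-twirl. Thus $\Lambda_m$ is a test between the twirled codeword state $\mathcal{T}(\rho_{x^n})$ and $\bar\sigma_n$. It has type-II error at most about $e^{-nR}$ and type-I error at most about $2P^n_e$. It suffices to work with permutation-reordered $x^n$, since the type is fixed.

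\emph{Step 2 (hypothesis-testing converse with group symmetry).} We apply the Hoeffding-type converse, i.e.\ the strong-converse / Han–Kobayashi direction, to this pair. By Schur–Weyl duality, $\bar\sigma_n\le \mathrm{poly}(n)\,\sigma^{\otimes n}$ via the universal symmetric state. Likewise, twirling costs only polynomial factors after pinching. Combining the results of \cite{hiai_error_correlated_2008, Hiai_hypothesis_testing_group_2009, Lipka-Bartosil_quantum_dichotomies}, the optimal exponent for discriminating $\mathcal{T}(\bigotimes_i\rho_{x_i})$ from $\bar\sigma_n$ is governed by the measured/pinched divergences. Their regularization is $\sum_x P(x)\sD_\alpha(\rho_x\|\sigma)$, with the sandwiched divergence replacing the Petz one. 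Concretely, we aim to show
\[
-\tfrac1n\log P^n_e \lesssim \sup_{\alpha\in(0,1)}\tfrac{1-\alpha}{\alpha}\Big(\sum_x P_n(x)\sD_\alpha(\rho_x\|\sigma)-R\Big)+o(1).
\]
To do this we apply Nussbaum–Szkoła-type lower bounds to the pinched distributions, or use the Rényi data-processing with $\alpha<1$ applied to the two-outcome measurement $\{\Lambda_m,I-\Lambda_m\}$. The key point is that the eigenvalues of the twirled states are block-wise determined by Young diagrams. So the only quantities available to an invariant test are "spectrum-type" statistics, whose exponents produce $\sD_\alpha$.

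\emph{Step 3 (optimization and regime).} We choose $\sigma$ as the minimizer in $\sI^{\rm Aug}_\alpha(P;W)$ at the optimal $\alpha$, using the usual saddle-point/minimax argument over $\sigma$ and $\alpha$ (Augustin fixed-point, convexity in $\sigma$). This yields the bound with $\sup_{0<\alpha<1}$. Then we use continuity in $P_n\to P$. Restricting to $\alpha\ge 1/2$ is legitimate above the critical rate $R_{\rm crit}$, where the optimizing $\alpha$ lies in $[1/2,1]$; this defines $R_{\rm crit}$.

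The main obstacle is Step 2. We must show that the converse exponent for testing against the twirled dummy is the \emph{sandwiched}, not the Petz, quantity, while the twirled state is not i.i.d. My first attempt will be to pinch by the Schur–Weyl decomposition of $\sigma^{\otimes n}$ (polynomially many blocks). This reduces everything to commuting classical distributions over (Young diagram, weight) pairs. We then apply the classical sphere-packing estimate there and use the limit $\frac1n \pD_\alpha(\mathcal{E}_{\sigma^{\otimes n}}(\rho^{\otimes n})\|\sigma^{\otimes n})\to \sD_\alpha(\rho\|\sigma)$. The delicate point is matching the achievability side, showing the invariant test cannot exploit more than this pinched information.
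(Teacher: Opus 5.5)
Your architecture is the paper's: reduce to testing a codeword state against $\sigma^{\otimes n}$, pinch, regularize the pinched R\'enyi divergence to $\sum_xP(x)\sD_\alpha(\rho_x\|\sigma)$, apply a G\"artner--Ellis/Hoeffding theorem to the resulting commuting non-i.i.d.\ sequence, then do a minimax over $\sigma$. However, the step you defer as the ``delicate point'' is the crux of the theorem, and the mechanisms you offer in its place do not work.

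\textbf{The missing step.} What you need is that a $U^{\otimes n}$-invariant $\Lambda_m$ commutes with $\sigma^{\otimes n}$. By Schur--Weyl duality, $\Lambda_m\in\bigoplus_\lambda I_{\mW_\lambda}\otimes\mB(\mU_\lambda)$, while $\sigma^{\otimes n}\in\bigoplus_\lambda\mB(\mW_\lambda)\otimes I_{\mU_\lambda}$. Hence $\mP_{\sigma^{\otimes n}}(\Lambda_m)=\Lambda_m$, and by self-adjointness of the pinching, $\Tr[\Lambda_m\rho_{x^n}]=\Tr[\Lambda_m\mP_{\sigma^{\otimes n}}(\rho_{x^n})]$ exactly. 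This gives $\alpha^U_\mu(\rho_{x^n}\|\sigma^{\otimes n})\ge\alpha_\mu(\mP_{\sigma^{\otimes n}}(\rho_{x^n})\|\sigma^{\otimes n})$, an unrestricted test on a commuting pair. No twirl of $\sigma^{\otimes n}$ is needed.

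\textbf{Why your substitutes fail.}
\begin{itemize}
\item Your inequality $\bar\sigma_n\le\mathrm{poly}(n)\sigma^{\otimes n}$ is false. For $\sigma=\mathrm{diag}(1-\epsilon,\epsilon)$ with $\epsilon<1/2$, one has $\langle 1^n|\bar\sigma_n|1^n\rangle\ge(1-\epsilon)^n/(n+1)$ but $\langle 1^n|\sigma^{\otimes n}|1^n\rangle=\epsilon^n$. Only $\sigma^{\otimes n}\le\mathrm{poly}(n)\bar\sigma_n$ holds.
\item If ``twirling costs only polynomial factors'' means using $\rho_{x^n}\le\mathrm{poly}(n)\mathcal{T}(\rho_{x^n})$ to pass to unrestricted tests of $\rho_{x^n}$ against $\sigma^{\otimes n}$, you recover the Petz exponent. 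That is exactly the channel-aware bound the theorem is meant to beat.
\item The exact exponent of the twirled pair $(\mathcal{T}(\rho_{x^n}),\bar\sigma_n)$ is not the sandwiched one. For an i.i.d.\ pair, both twirled states depend only on the spectra of $\rho$ and $\sigma$. So Step 2 must be phrased as the one-sided pinching inequality above, which is all a converse needs.
\item An invariant test is not limited to spectral statistics. It acts freely on $\mU_\lambda$, and $\rho_{x^n}$ is not permutation invariant.
\end{itemize}

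\textbf{Step 3 has a second gap.} The fixed-$\sigma$ Hoeffding converse carries a supremum over all $\alpha\in(0,1)$. The minimax exchange, however, is only available for $\alpha\in[1/2,1]$ (i.e.\ $s=(1-\alpha)/\alpha\in[0,1]$), where $\sigma\mapsto\sD_\alpha(\rho_x\|\sigma)$ is convex. You cannot obtain ``$\sup_{0<\alpha<1}$'' from a minimax and then restrict.

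The paper proceeds as follows:
\begin{itemize}
\item It takes a saddle point $(s^*,\sigma^*)$ on $[0,1]\times\mD(\mH)$.
\item It uses $R_{\rm crit}<R<I(P;W)$ to force $s^*\in(0,1)$.
\item It invokes concavity of $s\mapsto s\,\sD_{1/(1+s)}(\rho_x\|\sigma^*)$ on all of $(0,\infty)$ to conclude that $s^*$ is the global maximizer of $F_R(\cdot,\sigma^*)$. So the range $\alpha<1/2$ does no harm.
\end{itemize}
Your phrase ``the optimizing $\alpha$ lies in $[1/2,1]$'' needs exactly this concavity.

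\textbf{Full-rank issue.} $\sigma^*$ is supported on $\supp\sum_xP(x)\rho_x$ and may be singular. The pinching-limit and Hoeffding steps require a full-rank $\sigma$. The paper closes this with $\sigma_\delta=(1-\delta)\sigma^*+\delta I/d$ and a uniform-convergence argument for $F_R(s,\sigma_\delta)$ and $\partial_sF_R(s,\sigma_\delta)$ on a compact $s$-interval around $s^*$.
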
    
\end{thmboxed}

\subsection{Reduction to the pinched hypothesis testing}
First, we reduce the converse bound of the decoding error to the error probability of another information theoretic task called the \emph{quantum hypothesis testing}.
Let us briefly explain the setting of quantum hypothesis testing. Suppose that one is given a quantum state, either $\rho$ or $\sigma$. The goal is to correctly guess which state is given by performing a measurement described by a binary POVM $\qty{M, I-M}$. Here, $M$ corresponds to the measurement outcome inferring that $\rho$ is given, and $I-M$ corresponds to that inferring $\sigma$ is given. In this setting, there are two types of errors: type I error is the case where one is given $\rho$ but infers that $\sigma$ is given, occurring with probability $\alpha=\Tr[\rho (I-M)]$, and type II error corresponds to the other situation occurring with probability $\beta=\Tr[\sigma M]$.

For states $\rho,\tau$ and $0<\mu<1$, define the following quantity.
\begin{equation}
  \alpha_\mu(\rho\Vert\sigma)
  \coloneqq \min\left\{
    \Tr\rho(I-M):0\leq M\leq I,\ \Tr\sigma M\leq\mu
  \right\}.
  \label{eq:binary-alpha}
\end{equation}
Operationally, this quantity represents the minimum type I probability when the probability of type II error is kept smaller than $\mu$. 
The corresponding quantity under the unitary invariance on the test is defined as 
\begin{equation}
  \alpha^U_\mu(\rho\Vert\sigma)
  \coloneqq \min\left\{
    \Tr\rho(I-M):0\leq M\leq I,\ \Tr\sigma M\leq\mu,
    \ [M,U^{\otimes n}]=0,~ \forall U\in U(d)
  \right\}.
  \label{eq:binary-alpha-UI}
\end{equation}

The following lemma allows us to connect the type I error probability under a unitary-invariant test with that without the invariance.
\begin{lemboxed}
  \begin{lem}[Pinching does not change the type I error under unitary-invariance]
For every state $\rho_n\in\mD(\mH^{\otimes n})$, $\sigma\in\mD(\mH)$, and every unitary-invariant test $\Lambda_n$,
\begin{equation}
  \Tr\rho_n\Lambda _n
  =\Tr\mP_{\sigma^{\otimes n}}(\rho_n)\Lambda_n.
  \label{eq:pinching-statistics}
\end{equation}
Consequently,
\begin{equation}
  \alpha^U_\mu(\rho_n\Vert\sigma_n)
  =\alpha^U_\mu(\mP_{\sigma_n}(\rho_n)\Vert\sigma_n)
  \geq
  \alpha_\mu(\mP_{\sigma_n}(\rho_n)\Vert\sigma_n).
  \label{eq:pinching-relaxation}
\end{equation}
\end{lem}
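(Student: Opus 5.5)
The argument rests on the commutant structure of $\{U^{\otimes n}:U\in U(d)\}$. Here $\mP_{\sigma^{\otimes n}}$ is the pinching map with respect to the spectral projections of $\sigma^{\otimes n}$: writing $\sigma^{\otimes n}=\sum_j \lambda_j E_j$ with distinct eigenvalues $\lambda_j$, we have $\mP_{\sigma^{\otimes n}}(X)=\sum_j E_j X E_j$.

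\textbf{Reducing to a group average.} By Schur--Weyl duality, an operator $\Lambda_n$ commuting with every $U^{\otimes n}$ lies in the span of the permutation operators $V_\pi$. It therefore commutes with every operator in the algebra generated by $\{A^{\otimes n}: A\in GL(d)\}$. Since this algebra contains $\sigma^{\otimes n}$ and is closed under functional calculus, each spectral projection $E_j$ belongs to it, and so $[E_j,\Lambda_n]=0$. To avoid invoking Schur--Weyl duality directly, one can argue more elementarily. Diagonalize $\sigma=\sum_i s_i\ketbra{e_i}{e_i}$ and take the diagonal unitaries $U_\phi=\sum_i e^{i\phi_i}\ketbra{e_i}{e_i}$. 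Then $U_\phi^{\otimes n}$ acts on a product basis vector $\ket{e_{i_1}\cdots e_{i_n}}$ by the phase $e^{i\sum_k \phi_{i_k}}$, which depends only on the type of $(i_1,\ldots,i_n)$. Averaging over the torus $\phi\in[0,2\pi)^d$ gives
\[
\int U_\phi^{\otimes n} X (U_\phi^{\dagger})^{\otimes n}\,\frac{d\phi}{(2\pi)^d}=\sum_{T}\Pi_T X\Pi_T ,
\]
where $\Pi_T$ projects onto the span of basis vectors of type $T$. This is the pinching onto type classes, the same mechanism used in the End Matters proof of Theorem~\ref{thm:universal-no-go}. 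The projections $\Pi_T$ refine the eigenprojections of $\sigma^{\otimes n}$: each eigenvalue $\prod_k s_{i_k}$ depends only on the type, so each $E_j$ is a sum of some of the $\Pi_T$.

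\textbf{The identity \eqref{eq:pinching-statistics}.} Unitary invariance and cyclicity of the trace give $\Tr\rho_n\Lambda_n=\Tr[U^{\otimes n}\rho_n(U^\dagger)^{\otimes n}\Lambda_n]$ for every $U$. Averaging over the torus yields $\Tr\rho_n\Lambda_n=\sum_T\Tr[\Pi_T\rho_n\Pi_T\Lambda_n]$. Next we pass from the fine type pinching to $\mP_{\sigma^{\otimes n}}$. Each $E_j$ is a sum of type projections, so the type pinching equals $\mP_{\sigma^{\otimes n}}$ composed with a further pinching. Because $\Lambda_n$ commutes with every $\Pi_T$ (the $\Pi_T$ are spectral projections of $U_\phi^{\otimes n}$ for generic $\phi$), cyclicity gives $\Tr[\Pi_T Y\Pi_T\Lambda_n]=\Tr[Y\Pi_T\Lambda_n]$. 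Summing over $T$ in any grouping shows that both pinchings yield $\Tr[Y\Lambda_n]$, with $Y=\rho_n$ in each case. The same commutation with the $E_j$ gives $\Tr\rho_n\Lambda_n=\sum_j\Tr[E_j\rho_nE_j\Lambda_n]$ directly, which is \eqref{eq:pinching-statistics}.

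\textbf{The consequence \eqref{eq:pinching-relaxation}.} In both optimization problems the feasible set of unitary-invariant tests $M$ is identical, because the constraint $\Tr\sigma_n M\le\mu$ involves only $\sigma_n$, which is not pinched. By \eqref{eq:pinching-statistics} applied to $M$ (and to $I-M$, which is also invariant), the objective $\Tr\rho_n(I-M)$ equals $\Tr\mP_{\sigma_n}(\rho_n)(I-M)$ for each such $M$. The two minima therefore coincide. Dropping the invariance constraint enlarges the feasible set, which can only decrease the minimum; this gives the inequality.

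The one point that needs care is the ``generic $\phi$'' claim, that the $\Pi_T$ are spectral projections of a single $U_\phi^{\otimes n}$. It holds for $\phi$ with rationally independent entries: then distinct types $T$ give distinct total phases $\sum_a T(a)\phi_a$ modulo $2\pi$. Alternatively, one can bypass it by noting that $\Lambda_n=\int U_\phi^{\otimes n}\Lambda_n(U_\phi^\dagger)^{\otimes n}d\phi$ is already type-block-diagonal, hence commutes with each $\Pi_T$.
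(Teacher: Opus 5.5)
Your proof is correct, but it reaches the key step by a different route from the paper. Your final step, that $\Lambda_n$ commutes with the spectral projections $E_j$ of $\sigma^{\otimes n}$ so cyclicity gives $\sum_j\Tr[E_j\rho_nE_j\Lambda_n]=\Tr\rho_n\Lambda_n$, is the same computation as the paper's $\mP_{\sigma^{\otimes n}}(\Lambda_n)=\Lambda_n$ combined with self-adjointness of the pinching. Your treatment of the consequence \eqref{eq:pinching-relaxation} is also the paper's: identical feasible sets, and the inequality from dropping the invariance constraint.

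The difference is in how you establish $[\Lambda_n,E_j]=0$.

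\begin{itemize}
\item The paper reads it off from the Schur--Weyl block forms $\Lambda_n=\bigoplus_\lambda I_{\mW_\lambda}\otimes\Lambda_{n,\lambda}$ and $\sigma^{\otimes n}=\bigoplus_\lambda\sigma_{n,\lambda}\otimes I_{\mU_\lambda}$. Your opening remark is exactly this.
\item Your main argument averages over the torus of unitaries diagonal in an eigenbasis of $\sigma$. This shows that $\Lambda_n$ equals its own torus average, so it is block diagonal with respect to the type projections $\Pi_T$, which refine the $E_j$.
\end{itemize}

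Your route avoids representation theory entirely and proves slightly more. The identity holds for any test invariant only under $U^{\otimes n}$ with $U$ in that torus, so in particular for tests invariant under the unitaries commuting with $\sigma$; full $U(d)$-invariance is not needed. It also makes explicit the link to the type-projection argument in the End Matters. The paper's route is shorter given the Schur--Weyl machinery it uses anyway.

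One small slip in your ``generic $\phi$'' remark: rational independence of $\phi_1,\dots,\phi_d$ alone does not make distinct types carry distinct total phases modulo $2\pi$. For example, $\phi=(\sqrt2,\sqrt2+2\pi)$ is rationally independent yet makes $U_\phi$ a scalar. What you need is that no nontrivial integer relation $\sum_a c_a\phi_a\in2\pi\mathbb{Z}$ holds (with $|c_a|\le n$), which is true for almost every $\phi$. Your bypass, that $\Lambda_n$ equals its torus average and is therefore type-block-diagonal, avoids this issue entirely, so the proof stands as written.
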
  
\end{lemboxed}

\begin{proof}
Due to Schur-Weyl duality
\bal
\mH^{\otimes n}=\bigoplus_{\lambda\in Y^n_d}\mW_\lambda\otimes \mU_\lambda,
\eal
$M_n$ and $\sigma^{\otimes n}$ are decomposed as 
\bal
\Lambda_n=\bigoplus_{\lambda\in Y^n_d}I_{\mW_\lambda}\otimes \Lambda_{n,\lambda},\qquad
\sigma^{\otimes n}=\bigoplus_{\lambda\in Y^n_d}\sigma_{n,\lambda}\otimes I_{\mU_\lambda},
\eal
meaning that $[\Lambda_n,\sigma^{\otimes n}]=0$ holds.
Hence, it holds that $\mP_{\sigma^{\otimes n}}(\Lambda_n)=\Lambda_n$. Self-adjointness of the pinching map gives
\bal
  \Tr\mP_{\sigma^{\otimes n}}(\rho_n)\Lambda_n
  =\Tr\rho_n\mP_{\sigma^{\otimes n}}(\Lambda_n)
  =\Tr\rho_n\Lambda_n.
\eal
This proves the equality in Eq.~\eqref{eq:pinching-relaxation}.  The inequality
follows because the last minimization is over all tests, a superset of the
unitary-invariant tests.
\end{proof}

The following proposition connects the decoding error of the channel coding with the error probability of the hypothesis testing.
\begin{proboxed}
    \begin{pro}\label{pro: lower bound decoding error hypothesis testing}For any classical-quantum channel $W$, any fixed quantum state $\sigma$, and every code with a unitary-invariant decoder, the decoding error probability satisfies
    \bal
    P_{\rm e}^{n}(P_n, R, W)\geq \frac12
  \min_{m\in \mM_n}
  \alpha_{\frac{2}{M_n}}\qty(
    \mP_{\sigma^{\otimes n}}(\rho_{x^n(m)})\middle\Vert\sigma^{\otimes n}).
    \eal
    \end{pro}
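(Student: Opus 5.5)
We follow the standard Dalai-type reduction of channel coding to binary hypothesis testing, using the pinching lemma just proved (Eq.~\eqref{eq:pinching-relaxation}) to handle unitary invariance.

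\textbf{Step 1: find a message that $\sigma^{\otimes n}$ rarely hits.} Fix the code $(\mC_n,\{\Lambda_n(m)\}_m)$ with unitary-invariant decoding operators. Since $\sum_m \Lambda_n(m)=I$ and $\Tr\sigma^{\otimes n}=1$, we have
\[
\sum_{m=1}^{M_n}\Tr[\sigma^{\otimes n}\Lambda_n(m)]=1 .
\]
By Markov's inequality, at most $M_n/2$ messages satisfy $\Tr[\sigma^{\otimes n}\Lambda_n(m)]>2/M_n$. Let $\mathcal{G}$ be the set of messages with $\Tr[\sigma^{\otimes n}\Lambda_n(m)]\le 2/M_n$; then $|\mathcal{G}|\ge M_n/2$.

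\textbf{Step 2: bound the error of each good message by a type I error.} For $m\in\mathcal{G}$, the operator $\Lambda_n(m)$ satisfies $0\le\Lambda_n(m)\le I$. It commutes with $U^{\otimes n}$ for all $U$, and its type II error is $\Tr[\sigma^{\otimes n}\Lambda_n(m)]\le 2/M_n$. It is therefore a feasible test in the definition of $\alpha^U_{2/M_n}(\rho_{x^n(m)}\Vert\sigma^{\otimes n})$, which gives
\[
1-\Tr[\Lambda_n(m)\rho_{x^n(m)}]\ \ge\ \alpha^U_{2/M_n}(\rho_{x^n(m)}\Vert\sigma^{\otimes n}).
\]
Applying Eq.~\eqref{eq:pinching-relaxation} with $\rho_n=\rho_{x^n(m)}$ and $\sigma_n=\sigma^{\otimes n}$, the right-hand side is at least $\alpha_{2/M_n}(\mP_{\sigma^{\otimes n}}(\rho_{x^n(m)})\Vert\sigma^{\otimes n})$. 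Alternatively, one can use Eq.~\eqref{eq:pinching-statistics} directly: it rewrites the success probability with the pinched state, and the unrestricted minimum is then smaller.

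\textbf{Step 3: average over messages.} Every term in the average error is nonnegative, so we may drop the messages outside $\mathcal{G}$:
\[
P_e^n \ \ge\ \frac{1}{M_n}\sum_{m\in\mathcal{G}}\bigl(1-\Tr[\Lambda_n(m)\rho_{x^n(m)}]\bigr).
\]
Each remaining term is bounded below by the minimum over all $m\in\mM_n$ of the pinched type I error. Since $|\mathcal{G}|/M_n\ge 1/2$, this yields the claimed bound with the factor $\frac12$.

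\textbf{Main obstacle.} The only delicate point is justifying the pinching step, namely that unitary invariance forces $[\Lambda_n(m),\sigma^{\otimes n}]=0$. This is already handled by the preceding lemma via Schur--Weyl duality, so the rest is routine bookkeeping. One should check that $\mP_{\sigma^{\otimes n}}$ is taken with respect to the eigenspaces of $\sigma^{\otimes n}$, so that invariance of $\Lambda_n(m)$ under the pinching holds. One should also note that the minimum over $m$ makes the bound independent of which messages lie in $\mathcal{G}$.
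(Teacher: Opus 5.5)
Your proof is correct and is essentially the paper's argument. Both rest on three ingredients: the identity $\sum_m\Tr[\sigma^{\otimes n}\Lambda_n(m)]=1$, a Markov-type expurgation keeping at least $M_n/2$ messages, and the pinching lemma, which passes from the unitary-invariant test to $\alpha_{2/M_n}(\mP_{\sigma^{\otimes n}}(\rho_{x^n(m)})\Vert\sigma^{\otimes n})$.

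The only difference is the order of the bookkeeping. The paper applies Markov to the per-message errors, keeping those with $\varepsilon_m\le 2\bar\varepsilon$, and then pigeonholes the type II errors within that set to find a single $m^*$ with $\Tr[\Lambda_{m^*}\sigma^{\otimes n}]\le 2/M_n$. You instead expurgate on the type II errors first and then average the decoding errors over the surviving set, which gives the same factor $\frac12$ slightly more directly.
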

\end{proboxed}
\begin{proof}
Let us denote $\ve_m\coloneqq 1-\Tr[\rho_{x^n(m)}\Lambda_m]$, and $\bar\ve\coloneqq 1-\frac{1}{M_n}\Tr[\rho_{x^n(m)}\Lambda_m]$.
Let us define a subset $\mG$ of messages as $\mG_n\coloneqq \qty{m\in\mM_n~|~\ve_m\leq 2\bar\ve}$. Here, due to Markov's inequality, it holds that $\abs{\mG_n}\geq M/2$.

For a given unitary-invariant decoder $\qty{\Lambda_m}_m$, let us denote $\beta_m\coloneqq \Tr[\Lambda_m\sigma^{\otimes n}]$. Since it holds that $\sum_m\beta_m=1$, we have
\bal
\min_{m\in\mG_n}\beta_m\leq \frac{1}{\abs{\mG_n}}\sum_{m\in\mG_n}\beta_m\leq\frac{1}{\abs{\mG_n}}\leq \frac{2}{M_n},
\eal
implying that there exists a message $m^*\in\mM_n$ such that $\Tr[\Lambda_{m^*}\sigma^{\otimes n}]\leq 2/M_n$.

For this message $m^*\in\mM_n$, nothing that the unitary-invariant decoder satisfies $\mP_{\sigma^{\otimes n}}(\Lambda_{m^*})=\Lambda_{m^*}$, we have
\bal
\Tr[\Lambda_{m^*}\rho^{x^n}(m^*)]=\Tr[\Lambda_{m^*}\mP_{\sigma^{\otimes n}}(\rho_{x^n(m^*)})]=\ve_m\leq 2\bar\ve,
\eal
meaning that 
\bal
2\bar\ve\geq \alpha_{2/M_n}(\mP_{\sigma^{\otimes n}}(\rho_{x^n(m^*)})\|\sigma^{\otimes n}),
\eal
which concludes the proof.
\end{proof}

The following lemma relates the regularized divergence of the pinched state with the sandwiched divergence of the original quantum state, which is used in the following discussion.
\begin{lemboxed}
    \begin{lem}[Limit of the pinched divergences]\label{prop: pinching-asymptotics}
Let $\qty{P_n}$ be the sequence of $n$-type converging to a probability distribution $P$, and let $\sigma$ be a full-rank state.  Then for any $x^n\in T^n_{P_n}$ and 
$0<\alpha<1$, it holds that 
\begin{equation}
  \lim_{n\to\infty}\frac1n
  D_\alpha(\mP_{\sigma^{\otimes n}}(\rho_n)\Vert\sigma^{\otimes n})
  =\sum_{x\in\mX}P(x)\widetilde D_\alpha(\rho_x\Vert\sigma).
  \label{eq:pinched-limit-sandwiched}
\end{equation}
\end{lem}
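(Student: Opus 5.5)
We need to prove that $\frac1n D_\alpha(\mP_{\sigma^{\otimes n}}(\rho_n)\Vert\sigma^{\otimes n})\to\sum_x P(x)\widetilde D_\alpha(\rho_x\Vert\sigma)$, where $\rho_n=\rho_{x^n}=\bigotimes_i\rho_{x_i}$ with $x^n\in T^n_{P_n}$. Since $\mP_{\sigma^{\otimes n}}(\rho_n)$ commutes with $\sigma^{\otimes n}$, the left-hand side is just the Petz divergence of commuting operators, $\pD_\alpha(\mP_{\sigma^{\otimes n}}(\rho_n)\Vert\sigma^{\otimes n})$. The plan is to sandwich this quantity between two sandwiched divergences whose difference is $O(\log n)$:
\begin{equation}
  \sD_\alpha(\rho_n\Vert\sigma^{\otimes n})-c\log(n+1)
  \le \pD_\alpha(\mP_{\sigma^{\otimes n}}(\rho_n)\Vert\sigma^{\otimes n})
  \le \sD_\alpha(\rho_n\Vert\sigma^{\otimes n}).
\end{equation}
We then use additivity of the sandwiched divergence under tensor products, $\sD_\alpha(\bigotimes_i\rho_{x_i}\Vert\sigma^{\otimes n})=\sum_i\sD_\alpha(\rho_{x_i}\Vert\sigma)=n\sum_x P_n(x)\sD_\alpha(\rho_x\Vert\sigma)$. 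Dividing by $n$ and using $P_n\to P$ (each $\sD_\alpha(\rho_x\Vert\sigma)$ is finite because $\sigma$ is full rank) then gives the limit.

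\emph{Upper bound.} Because the pinched state commutes with $\sigma^{\otimes n}$, its Petz and sandwiched divergences coincide. Pinching is a CPTP map that fixes $\sigma^{\otimes n}$, and the sandwiched divergence satisfies data processing for $\alpha\ge 1/2$. Together these give the upper bound in that range.

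For $0<\alpha<1/2$ data processing of $\sD_\alpha$ fails, so I would instead use the measured divergence. Since the pinched pair commutes, its divergence equals the classical divergence obtained by measuring in a common eigenbasis, so $\pD_\alpha(\mP(\rho_n)\Vert\sigma^{\otimes n})\le D^{\mbM}_\alpha(\mP(\rho_n)\Vert\sigma^{\otimes n})\le D^{\mbM}_\alpha(\rho_n\Vert\sigma^{\otimes n})$, using data processing of the measured divergence. The remaining step, $D^{\mbM}_\alpha\le\sD_\alpha$, holds for all $\alpha$ where both are defined; it follows from the variational form \eqref{eq:variational measured renyi} via Hölder's inequality.

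\emph{Lower bound.} Let $v_n$ be the number of distinct eigenvalues of $\sigma^{\otimes n}$, so $v_n\le(n+1)^{d}$. The pinching inequality gives $\rho_n\le v_n\mP_{\sigma^{\otimes n}}(\rho_n)$. For $0<\alpha<1$:
\begin{itemize}
\item The map $\rho\mapsto\Tr[(\sigma^{\frac{1-\alpha}{2\alpha}}\rho\,\sigma^{\frac{1-\alpha}{2\alpha}})^\alpha]$ is operator monotone in $\rho$, since $t\mapsto t^\alpha$ is operator monotone and conjugation preserves order.
\item Hence $\widetilde Q_\alpha(\rho_n\Vert\sigma^{\otimes n})\le v_n^\alpha\,\widetilde Q_\alpha(\mP(\rho_n)\Vert\sigma^{\otimes n})$, where $\widetilde Q_\alpha$ is the trace expression inside the logarithm.
\end{itemize}
Taking $\frac{1}{\alpha-1}\log$ with $\alpha-1<0$ reverses the inequality and yields $\sD_\alpha(\mP(\rho_n)\Vert\sigma^{\otimes n})\ge\sD_\alpha(\rho_n\Vert\sigma^{\otimes n})-\frac{\alpha}{1-\alpha}\log v_n$. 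The left side equals the Petz (classical) divergence of the commuting pair. Since $\log v_n=O(\log n)$, this correction vanishes after dividing by $n$ for each fixed $\alpha$.

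\emph{Main obstacle.} The main obstacle is the upper bound for $\alpha<1/2$, which needs the comparison $D^{\mbM}_\alpha\le\sD_\alpha$ there; I would derive it from the variational formula \eqref{eq:variational measured renyi}. If that step proves delicate, I would note that Theorem~\ref{sup thm: unitary invariant sphere-packing} only requires $1/2\le\alpha\le 1$, where plain data processing suffices. Everything else is bookkeeping: the fixed type $P_n$, and convergence $P_n\to P$ combined with continuity of the finite-valued map $P\mapsto\sum_x P(x)\sD_\alpha(\rho_x\Vert\sigma)$.
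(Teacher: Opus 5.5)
Your overall plan matches the paper's: squeeze the pinched divergence between $\sD_\alpha(\rho_n\Vert\sigma^{\otimes n})$ and that quantity minus $O(\log n)$, then use additivity. The paper gets both bounds from Hayashi--Tomamichel's Lemma~3, with correction $\log\abs{\spec(\sigma^{\otimes n})}$. However, neither of your two inequalities is justified as written.

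The lower bound, which is the substantive direction, has its sign reversed. Start from $\widetilde Q_\alpha(\rho_n\Vert\sigma^{\otimes n})\le v_n^\alpha\,\widetilde Q_\alpha(\mP(\rho_n)\Vert\sigma^{\otimes n})$ and apply $\frac{1}{\alpha-1}\log$ with $\alpha<1$. This gives $\sD_\alpha(\rho_n\Vert\sigma^{\otimes n})\ge\sD_\alpha(\mP(\rho_n)\Vert\sigma^{\otimes n})-\frac{\alpha}{1-\alpha}\log v_n$, which is an \emph{upper} bound on the pinched divergence. The pinching inequality plus monotonicity is the standard argument for $\alpha>1$, where $\alpha-1>0$ works in your favour. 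For $\alpha<1$ you would need something like $\mP(\rho)\le C\rho$, which fails.

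Your stated inequality is in fact false. Take $\rho=|{+}\rangle\langle{+}|$, $\sigma=\mathrm{diag}(0.9,0.1)$, $\alpha=1/4$, $v=2$. Then $\sD_{1/4}(\rho\Vert\sigma)\approx0.336$ and $\pD_{1/4}(\mP_\sigma(\rho)\Vert\sigma)\approx0.102$, which lies below your claimed bound $0.336-\tfrac13\log2\approx0.105$.

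A correct route uses $X\coloneqq\sigma^{\frac{1-\alpha}{2\alpha}}\rho\,\sigma^{\frac{1-\alpha}{2\alpha}}$. Since $\sigma$ commutes with its spectral projections $\Pi_j$, $\Tr[\mP_\sigma(\rho)^\alpha\sigma^{1-\alpha}]=\Tr[\mP_\sigma(X)^\alpha]=\sum_{j=1}^{v}\Tr[(X^{1/2}\Pi_jX^{1/2})^\alpha]$. The $v$ operators $X^{1/2}\Pi_jX^{1/2}$ sum to $X$, so concavity of $A\mapsto\Tr A^\alpha$ bounds this by $v^{1-\alpha}\Tr[X^\alpha]$. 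That yields $\pD_\alpha(\mP_\sigma(\rho)\Vert\sigma)\ge\sD_\alpha(\rho\Vert\sigma)-\log v$.

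Your upper bound for $0<\alpha<1/2$ rests on $D^\mbM_\alpha\le\sD_\alpha$, which is false in that range, so no H\"older argument can give it. For a pure state $\psi$, the measurement $\{|\psi\rangle\langle\psi|,I-|\psi\rangle\langle\psi|\}$ gives $D^\mbM_\alpha(\psi\Vert\sigma)\ge-\log\langle\psi|\sigma|\psi\rangle$. Meanwhile $\sD_\alpha(\psi\Vert\sigma)=-\frac{\alpha}{1-\alpha}\log\langle\psi|\sigma^{(1-\alpha)/\alpha}|\psi\rangle$, which is strictly smaller by Jensen when $(1-\alpha)/\alpha>1$ and $\psi$ is not an eigenvector of $\sigma$. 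In the example above the two values are $\log 2\approx0.693$ and $0.336$.

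The same $X$ fixes this for every $\alpha$. Pinching is a uniform mixture of unitary conjugations, so concavity gives $\Tr[\mP_\sigma(X)^\alpha]\ge\Tr[X^\alpha]$. Hence $\pD_\alpha(\mP_\sigma(\rho)\Vert\sigma)\le\sD_\alpha(\rho\Vert\sigma)$ for all $\alpha\in(0,1)$, with no data processing needed.

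Falling back to $\alpha\ge1/2$ would not rescue the proposal. The lower-bound error persists in that range, and the lemma is stated for all $0<\alpha<1$.
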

\end{lemboxed}
\begin{proof}
Due to Ref.~\cite[Lemma 3]{Hayashi_2016_correlation_detection}, it holds that 
\bal
D_\alpha(\mP_{\sigma^{\otimes n}}(\rho_{x^n})\Vert\sigma^{\otimes n})\leq \sD_\alpha(\rho_{x^n}\Vert\sigma^{\otimes n})\leq D_\alpha(\mP_{\sigma^{\otimes n}}(\rho_{x^n})\Vert\sigma^{\otimes n})+\log \abs{\spec(\sigma^{\otimes n})}
\eal
Since it holds that $\abs{\spec(\sigma^{\otimes n})}={\rm poly}(n)$ due to the type counting argument in Eq.~\eqref{eq: type counting}, we have
\bal
\lim_{n\to\infty}\frac{1}{n}D_\alpha(\mP_{\sigma^{\otimes n}}(\rho_n)\Vert\sigma^{\otimes n})=\lim_{n\to\infty}\frac{1}{n}\sD_\alpha(\rho_{x^n}\Vert\sigma^{\otimes n}).
\eal
By additivity of the sandwiched R\'enyi divergence, 
\bal
\frac1n\widetilde D_\alpha
  (\rho_{x^n}\Vert\sigma^{\otimes n})
 =\sum_xP_n(x)\widetilde D_\alpha(\rho_x\Vert\sigma),
\eal
holds, which concludes the proof.
\end{proof}

\subsection{Converse bound from large deviation}
\label{sec:triangular-hoeffding}

Fix a faithful state $\sigma$ and a type sequence $P_n\to P$, and let $\rho_n, \sigma_n$ be
$\rho_n\coloneqq \mP_{\sigma_n}(\rho_{x^n})$ and $\sigma_n\coloneqq \sigma^{\otimes n}$ with $x^n\in T^n_{P_n}$.
The two hypotheses commute.  Their eigenvalues in a common eigenbasis
therefore define classical distributions $p_n$ and $q_n$ on $\mZ\coloneqq \mX^n$, respectively,
with $q_n>0$.  Their normalized log-moment function is
\begin{equation}
 \psi_{n,\sigma}(\alpha)
 \coloneqq \frac1n\log\Tr\rho_n^\alpha\sigma_n^{1-\alpha}
 =\frac1n\log\sum_{z\in\mZ}p_n(z)^\alpha q_n(z)^{1-\alpha},
 \qquad 0<\alpha<1.
 \label{eq:psi-n-classical}
\end{equation}

We define the single-letter sandwiched log-moment functions by
\bal
 \widetilde\psi_{x,\sigma}(\alpha)
 \coloneqq \log\Tr\left(
 \sigma^{\frac{1-\alpha}{2\alpha}}\rho_x
 \sigma^{\frac{1-\alpha}{2\alpha}}\right)^\alpha
 =(\alpha-1)\widetilde D_\alpha(\rho_x\Vert\sigma).
\eal
Lemma~\ref{prop: pinching-asymptotics} gives 
\begin{equation}
 \psi_{n,\sigma}(\alpha)\xrightarrow{n\to\infty}
 \psi_{\sigma}(\alpha)
 \coloneqq \sum_xP(x)\widetilde\psi_{x,\sigma}(\alpha)
 =(\alpha-1)J_\alpha(P,W\mid\sigma),
 \label{eq:psi-sigma}
\end{equation}
where
\bal
 J_\alpha(P,W\mid\sigma)
 \coloneqq \sum_xP(x)\widetilde D_\alpha(\rho_x\Vert\sigma).
\eal
Thus, the limit of the functions $\psi_{n,\sigma}(\alpha)$ is the convex
combination of the single-letter sandwiched log-moment functions.
The displayed pinching correction controls the difference at finite $n$.

Each $\widetilde\psi_{x,\sigma}$ is finite and real analytic on $(0,1)$, even when $\rho_x$ is singular.  Indeed, the trace can be evaluated as
\bal
 \Tr\left(\rho_x^{1/2}\sigma^{(1-\alpha)/\alpha}\rho_x^{1/2}\right)^\alpha
\eal
on $\supp \rho_x$, where the operator inside the power is positive
definite.  Hence $\psi_\sigma$ is finite and differentiable on $(0,1)$;
it is convex, as is each $\psi_{n,\sigma}$.  These are the log-moment conditions
needed for the G\"artner--Ellis argument below.  The sufficiency of this
open order interval for the Hoeffding argument is also noted in
\cite[Remark~4.7]{Hiai_hypothesis_testing_group_2009}.

We define a function $ H_\sigma(P,r)$ as 
\begin{equation}
 H_\sigma(P,r)
 \coloneqq \sup_{0<\alpha<1}
 \frac{-(1-\alpha)r-\psi_\sigma(\alpha)}{\alpha}
 =\sup_{0<\alpha<1}\frac{1-\alpha}{\alpha}
 \left[J_\alpha(P,W\mid\sigma)-r\right].
 \label{eq:H-sigma}
\end{equation}

\begin{proboxed}
\begin{pro}[Hoeffding exponent for the pinched states]
\label{prop:triangular-Hoeffding}
Let $\sigma$ be a full-rank state, and $\qty{P_n}_n$ be the sequence of $n$-types converging to a probability distribution $P$, and $R>0$. Then, it holds that 
\begin{equation}
 \lim_{n\to\infty}-\frac1n\log
 \alpha_{\mu_n}(\rho_n\Vert\sigma_n)
 =H_\sigma(P,R).
 \label{eq:triangular-Hoeffding-conclusion}
\end{equation}
\end{pro}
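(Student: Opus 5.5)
Here $\mu_n$ is understood as $\mu_n=e^{-nR+o(n)}$ (e.g.\ $\mu_n=2/M_n$ from Proposition~\ref{pro: lower bound decoding error hypothesis testing}). The plan is to run the classical Hoeffding/Gärtner--Ellis argument on the commuting pair $(\rho_n,\sigma_n)$, viewed as distributions $(p_n,q_n)$. The only input needed from earlier is the pointwise convergence $\psi_{n,\sigma}(\alpha)\to\psi_\sigma(\alpha)$ on $(0,1)$, given by Eq.~\eqref{eq:psi-sigma} via Lemma~\ref{prop: pinching-asymptotics}, together with convexity and differentiability of $\psi_\sigma$. Because the problem is classical, the optimal tests are Neyman--Pearson tests, i.e.\ threshold tests on the log-likelihood ratio $L_n(z)=\frac1n\log\frac{p_n(z)}{q_n(z)}$, possibly randomized on the boundary.

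\emph{Achievability} (the $\geq$ direction). For a real threshold $a$, take the test $M=\{L_n>a\}$. A Chernoff bound gives, for every $\alpha\in(0,1)$,
\begin{align}
\Tr\sigma_nM &\le e^{-n[\alpha a-\psi_{n,\sigma}(\alpha)]},\\
\Tr\rho_n(I-M) &\le e^{-n[(\alpha-1)a-\psi_{n,\sigma}(\alpha)]}.
\end{align}
The second bound is applied with the Chernoff variable chosen separately. We choose $a$ so that the type-II exponent $\sup_\alpha[\alpha a-\psi_\sigma(\alpha)]$ is at least $R+\delta$; using $\psi_{n,\sigma}\to\psi_\sigma$ pointwise at the fixed $\alpha$, this gives $\Tr\sigma_nM\le\mu_n$ for large $n$. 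Optimizing the pair of bounds over $a$ in the standard Legendre-duality way then yields $\liminf-\frac1n\log\alpha_{\mu_n}\ge H_\sigma(P,R+\delta)$. Finally we let $\delta\to0$, using continuity of $r\mapsto H_\sigma(P,r)$, which holds for $r$ in the relevant range since it is a supremum of affine functions and is finite.

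\emph{Converse} (the $\leq$ direction). We use the standard inequality, valid for any test $M$ and any real $a$:
\begin{equation}
\Tr\rho_n(I-M)+e^{na}\Tr\sigma_nM\ \ge\ p_n\{L_n\le a\}.
\end{equation}
With $\Tr\sigma_nM\le\mu_n$ and $a=R-\delta$, the second term is negligible. The problem then reduces to a lower bound on $p_n\{L_n\le R-\delta\}$, i.e.\ the Gärtner--Ellis lower bound under the tilted measure. Concretely, pick $\alpha^*$ attaining the supremum in $H_\sigma(P,R)$, with $\psi_\sigma'(\alpha^*)$ matching the rate. Define the tilted distribution $\tilde p_n\propto p_n^{\alpha^*}q_n^{1-\alpha^*}$. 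Its log-moment generating function converges to $\psi_\sigma(\alpha^*+\cdot)-\psi_\sigma(\alpha^*)$, which is differentiable at $0$ because $\alpha^*\in(0,1)$ is interior. Gärtner--Ellis (or a direct second-moment/Chebyshev argument) therefore shows that $L_n$ concentrates at $\psi_\sigma'(\alpha^*)$ under $\tilde p_n$. The change of measure then gives $p_n\{L_n\approx\psi'_\sigma(\alpha^*)\}\ge e^{-n[H_\sigma(P,R)+\delta']}$.

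\emph{Main obstacle.} The hard part is this converse step. There are two difficulties. First, $p_n$ is a triangular array (not i.i.d., and the dimension changes with $n$), so concentration must come from pointwise convergence of the convex functions $\psi_{n,\sigma}$; this gives local uniform convergence of the derivatives, which suffices here. Second, the boundary case where the optimizer lies at $\alpha\to0$ or $\alpha\to1$ must be handled. I would first treat $R$ such that the supremum is attained in the interior, which is exactly the relevant high-rate regime, and then extend to the other $R$ by continuity and monotonicity in $r$, following \cite[Remark~4.7]{Hiai_hypothesis_testing_group_2009}.
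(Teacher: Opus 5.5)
\textbf{The converse step fails as written.} Your achievability half is sound. Taking the same $\alpha$ in both Chernoff bounds with $a=(R+\delta+\psi_\sigma(\alpha))/\alpha$ reproduces exactly the Hoeffding objective. Also, $H_\sigma(P,R+\delta)\to H_\sigma(P,R)$ as $\delta\downarrow0$ follows from lower semicontinuity and monotonicity in $r$.

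The problem is the term you call negligible. With $a=R-\delta$ and $\Tr\sigma_nM\le\mu_n=e^{-nR+o(n)}$, your inequality gives $\Tr\rho_n(I-M)\ge p_n\{L_n\le R-\delta\}-e^{-n\delta+o(n)}$. Your own Chernoff estimate shows $p_n\{L_n\le R-\delta\}\le e^{-n[\Gamma(R-\delta)+o(1)]}$, where $\Gamma(a)\coloneqq\sup_{0<\alpha<1}[(\alpha-1)a-\psi_\sigma(\alpha)]$. Moreover $\Gamma(R-\delta)\ge\Gamma(R)$, and $\Gamma(R)>0$ if and only if $H_\sigma(P,R)>0$. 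So for every $\delta<\Gamma(R)$ the right-hand side is negative. The reduction is vacuous exactly where the proposition has content.

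The root cause is a confusion between the rate and the likelihood-ratio threshold. Stationarity of the Hoeffding supremum reads $\alpha^*\psi_\sigma'(\alpha^*)-\psi_\sigma(\alpha^*)=R$. This is the limit of $\frac1n D(\tilde p_n\Vert q_n)$, not of the tilted mean of $L_n$. It gives $\psi'_\sigma(\alpha^*)=R-H_\sigma(P,R)<R$. At the threshold $a=\psi'_\sigma(\alpha^*)$, both terms in your inequality are of order $e^{-nH_\sigma(P,R)}$, and no choice of $\delta$ rescues the chain. Making $e^{-n\delta}$ negligible needs $\delta>H_\sigma(P,R)$. But then $\{L_n\approx\psi'_\sigma(\alpha^*)\}\not\subset\{L_n\le R-\delta\}$, so your tilted lower bound no longer applies to the event you need.

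\textbf{How to repair it.} Your tilting and concentration step is the right ingredient; it must be used with a tilt $\alpha<\alpha^*$ and a threshold strictly below $\psi'_\sigma(\alpha^*)$. Take $a=\psi'_\sigma(\alpha)+\epsilon$. Concentration of $L_n$ at $\psi'_\sigma(\alpha)$ under $\tilde p_n\propto p_n^{\alpha}q_n^{1-\alpha}$, together with $p_n=\tilde p_n e^{n\psi_{n,\sigma}(\alpha)+n(1-\alpha)L_n}$, gives $p_n\{L_n\le a\}\ge e^{-n[(\alpha-1)\psi'_\sigma(\alpha)-\psi_\sigma(\alpha)+(1-\alpha)\epsilon+o(1)]}$. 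This dominates $e^{n(a-R)}$ precisely when $\alpha\psi'_\sigma(\alpha)-\psi_\sigma(\alpha)+(2-\alpha)\epsilon<R$. Since $\alpha\mapsto\alpha\psi'_\sigma(\alpha)-\psi_\sigma(\alpha)$ is nondecreasing, it is strictly below $R$ for $\alpha<\alpha^*$ once $\alpha^*$ is taken as the smallest maximizer; this matters only if $\psi_\sigma$ has affine pieces. So the condition holds for small $\epsilon$. Letting $\epsilon\to0$ and then $\alpha\uparrow\alpha^*$, continuity of $\psi'_\sigma$ gives $\limsup-\frac1n\log\alpha_{\mu_n}\le R-\psi'_\sigma(\alpha^*)=H_\sigma(P,R)$. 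Equivalently, you can run the two-sided change-of-measure argument, bounding $\tilde p_n(M)$ via $\Tr\sigma_nM$ and $\tilde p_n(I-M)$ via $\Tr\rho_n(I-M)$, with a tilt whose $\lim\frac1n D(\tilde p_n\Vert q_n)<R$.

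With this fix, your proof becomes a self-contained version of what the paper does by citation. The paper checks that $\psi_{n,\sigma}\to\psi_\sigma$ pointwise, with $\psi_\sigma$ finite, convex and differentiable on $(0,1)$. It identifies $\partial_-\psi_\sigma(1)$ with the limiting relative entropy. It then invokes the G\"artner--Ellis-based Hoeffding theorem of Hiai, Mosonyi, Ogawa (and Fannes), and relies on it for all $R>0$, including the boundary cases you defer.
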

\end{proboxed}
\begin{proof}
Due to the discussion above, the sequence of the log-moment functions $ \qty{\psi_{n,\sigma}(\alpha)}_n$ converges to $\psi_{\sigma}(\alpha)$, which is finite, convex, and differentiable on $(0,1)$.
Furthermore, one can check that the left-hand derivative $\partial_-\psi_{\sigma}(1)$ of the log-moment function satisfies
\bal
\partial_-\psi_{\sigma}(1)&=\lim_{\alpha\to 1}\sum_{x\in\mX} P_n(x)\sD_\alpha(\rho_x\|\sigma)\\
&=\sum_{x\in\mX} P(x)D(\rho_x\|\sigma)=\lim_{n\to\infty}\frac{1}{n}D(\rho_n\|\sigma_n).
\eal

This allows us to employ the previous results about the error exponent of hypothesis testing with correlated hypotheses~\cite {hiai_error_correlated_2008, Hiai_hypothesis_testing_group_2009} (see also Ref.~\cite{Lipka-Bartosil_quantum_dichotomies}), which yields
\bal
 \lim_{n\to\infty}-\frac1n\log
 \alpha_{\mu_n}(\rho_n\Vert\sigma_n)
 =\sup_{0<\alpha\le1}
   \frac{-(1-\alpha)r-\psi_\sigma(\alpha)}{\alpha}=H_\sigma(P,R).
\eal

\end{proof}

Combining Proposition~\ref{prop:triangular-Hoeffding} with the lower bound on the decoding error with respect to the type I error in Proposition~\ref{pro: lower bound decoding error hypothesis testing}, we immediately have the following.

\begin{proboxed}
   \begin{pro}[Fixed-auxiliary-state converse]
\label{pro: fixed-sigma-converse}
Let $R>0$ be a target rate, $\qty{P_n}_n$ be a sequence of $n$-types satisfying $P_n\to P$ and $\sigma$ be a full-rank state.
Then, every sequence of constant-composition codes with unitary-invariant decoders satisfies
\begin{equation}
  \limsup_{n\to\infty}-\frac1n\log P_{\rm e}^{n}(P_n, R, W) \leq H_\sigma(P,R).
  \label{eq:fixed-sigma-sphere-packing}
\end{equation}
\end{pro}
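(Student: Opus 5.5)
The bound follows by chaining Proposition~\ref{pro: lower bound decoding error hypothesis testing} with Proposition~\ref{prop:triangular-Hoeffding}, taking $\mu_n\coloneqq 2/M_n=2e^{-nR}$. Since $\frac1n\log(1/\mu_n)\to R$, this sequence lies at type-II exponent $R$.

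First, for each $n$, Proposition~\ref{pro: lower bound decoding error hypothesis testing} gives
\begin{equation*}
P_{\rm e}^n(P_n,R,W)\geq \tfrac12\min_{m}\alpha_{\mu_n}\bigl(\mP_{\sigma^{\otimes n}}(\rho_{x^n(m)})\Vert\sigma^{\otimes n}\bigr).
\end{equation*}
Every codeword lies in the same type class $T^n_{P_n}$. So the quantity $\alpha_{\mu_n}(\mP_{\sigma^{\otimes n}}(\rho_{x^n})\Vert\sigma^{\otimes n})$ does not depend on which $x^n\in T^n_{P_n}$ is chosen. The reason is that permuting tensor factors is a unitary $V_\pi$. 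It commutes with $\sigma^{\otimes n}$, hence with its pinching, so $\mP_{\sigma^{\otimes n}}(V_\pi\rho V_\pi^\dagger)=V_\pi\mP_{\sigma^{\otimes n}}(\rho)V_\pi^\dagger$. The constraint $\Tr\sigma^{\otimes n}M\leq\mu$ is also invariant under $M\mapsto V_\pi M V_\pi^\dagger$. The minimum over $m$ therefore collapses to $\alpha_{\mu_n}(\rho_n\Vert\sigma_n)$ with $\rho_n=\mP_{\sigma_n}(\rho_{x^n})$ for one fixed representative $x^n\in T^n_{P_n}$, exactly as in Proposition~\ref{prop:triangular-Hoeffding}.

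Second, take $-\frac1n\log$ of both sides and then the $\limsup$. The factor $\frac12$ contributes $\frac{\log 2}{n}\to0$, so
\begin{equation*}
\limsup_{n}-\tfrac1n\log P_{\rm e}^n\leq \lim_n -\tfrac1n\log\alpha_{\mu_n}(\rho_n\Vert\sigma_n)=H_\sigma(P,R)
\end{equation*}
by Proposition~\ref{prop:triangular-Hoeffding}.

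The main subtlety is the constant $2$ in $\mu_n$ together with the precise rate. If the code rate is only $R_n\to R$, or Proposition~\ref{prop:triangular-Hoeffding} is stated for $\mu_n=e^{-nR}$ exactly, I would use a monotonicity argument. The function $\mu\mapsto\alpha_\mu$ is nonincreasing, so $\alpha_{\mu_n}\geq\alpha_{e^{-n(R-\delta)}}$ for large $n$ and any $\delta>0$. This gives the bound $H_\sigma(P,R-\delta)$. I would then let $\delta\to0$, using continuity of $r\mapsto H_\sigma(P,r)$ at $r=R$. That continuity holds because $H_\sigma(P,\cdot)$ is a supremum of affine functions of $r$, hence convex, and it is finite in a neighbourhood of $R>0$. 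I expect this continuity and uniformity step to be the only non-routine point.
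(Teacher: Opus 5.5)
Your proposal is correct and follows the paper's argument. The paper obtains this proposition ``immediately'' by chaining Proposition~\ref{pro: lower bound decoding error hypothesis testing} (type-II level $\mu_n=2/M_n$) with Proposition~\ref{prop:triangular-Hoeffding}, and your permutation-covariance argument, which collapses $\min_m$ to a single representative of $T^n_{P_n}$, fills in a step the paper leaves implicit.

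One caveat applies only to your fallback paragraph. $H_\sigma(P,\cdot)$ need not be finite near $R$: it equals $+\infty$ for $r<\sum_xP(x)\lim_{\alpha\to0}\widetilde D_\alpha(\rho_x\Vert\sigma)$. For example, $\sigma=I/d$ with pure outputs gives $+\infty$ for $r<\log d$ and $0$ at $r=\log d$, so left-continuity can fail at that endpoint. The paper's Hoeffding proposition glosses over the same boundary case. It also does not arise in the high-rate regime where the bound is used, since the proof of Lemma~\ref{lem:no-optimizer-escape} shows $H_{\sigma_\delta}(P,\cdot)$ is finite on a neighbourhood of $R$.
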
 
\end{proboxed}

\subsection{Sandwiched sphere-packing bound via minimax}\label{sec:high-rate-minimax}
Even though we already have a converse bound through an entropic quantity, the statement of Theorem~\ref{sup thm: unitary invariant sphere-packing} does not immediately hold from Proposition~\ref{pro: fixed-sigma-converse}. Indeed, if we compare the right-hand side of Eq.~\eqref{eq:fixed-sigma-sphere-packing} with the right-hand side of Theorem~\ref{sup thm: unitary invariant sphere-packing}, we can see that
\bal
\inf_{\sigma>0} H_\sigma(P,R)&= \inf_{\sigma>0}\sup_{0<\alpha<1}\frac{1-\alpha}{\alpha}
\qty[J_\alpha(P,W\mid\sigma)-R]\\
&\geq \sup_{0<\alpha<1}\inf_{\sigma>0}\frac{1-\alpha}{\alpha}
\qty[J_\alpha(P,W\mid\sigma)-R]\\
&\geq \sup_{0<\alpha<1}\inf_{\sigma\in\mD(\mH)}\frac{1-\alpha}{\alpha}\qty[J_\alpha(P,W\mid\sigma)-R]\\
&=\sup_{0<\alpha<1}\frac{\alpha-1}{\alpha}\qty[R-\sI^{\rm Aug}_\alpha(P;W)].
\eal
Therefore, to obtain the sphere-packing bound with the sandwiched Rényi Augustin information, we need to justify the exchange of the inf and sup, and also extend the range of taking the infimum.
To this end, we define the objective function of the optimization as 
\begin{equation}
 F_R(s,\sigma)
 \coloneqq s\left[J_{1/(1+s)}(P,W\mid\sigma)-R\right],
 \qquad s>0.
 \label{eq:minimax-F}
\end{equation}
We put $F_R(0,\sigma)\coloneqq 0$ when taking the supremum over
$0\le s\le1$.

\begin{proboxed}
\begin{pro}[Minimax identity]
\label{prop:truncated-sion}
For any classical-quantum state $W:x\mapsto \rho_x$, it holds that 
\begin{equation}
 \inf_{\sigma\in\mD(\mH)}\sup_{0\le s\le1}F_R(s,\sigma)
 =\sup_{0\le s\le1}\inf_{\sigma\in\mD(\mH)} F_R(s,\sigma).
 \label{eq:truncated-sion}
\end{equation}
Here, we allow the quantities to diverge.
Moreover, the infimum and supremum are attained, and their optimizers form a saddle point.
\end{pro}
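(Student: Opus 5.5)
We plan to apply Sion's minimax theorem on the compact convex set $[0,1]\times\mD(\mH)$. To do so we need two properties of $F_R(s,\sigma)=s[J_{1/(1+s)}(P,W\mid\sigma)-R]$: concavity and upper semicontinuity in $s$ for fixed $\sigma$, and convexity and lower semicontinuity in $\sigma$ for fixed $s$.

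\emph{Convexity in $\sigma$.} Write $\alpha=1/(1+s)\in[1/2,1]$. Then
\[
sJ_\alpha=-\sum_xP(x)\log \Tr\bigl(\sigma^{\frac{1-\alpha}{2\alpha}}\rho_x\sigma^{\frac{1-\alpha}{2\alpha}}\bigr)^\alpha,
\]
since $s=(1-\alpha)/\alpha$ and $(\alpha-1)\sD_\alpha=\log\widetilde Q_\alpha$, so that $\frac{1-\alpha}{\alpha}\sD_\alpha=-\frac1\alpha\log\widetilde Q_\alpha$. More precisely $sJ_\alpha=-(1+s)\sum_xP(x)\log\widetilde Q_\alpha(\rho_x\|\sigma)$. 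For $\alpha\in[1/2,1)$ the map $\sigma\mapsto\widetilde Q_\alpha(\rho\|\sigma)$ is concave; this is the Frank--Lieb joint concavity, valid exactly on the range $\alpha\ge 1/2$, which explains the restriction $s\le1$. Since $-\log$ is convex and decreasing, $\sigma\mapsto F_R(s,\sigma)$ is convex. Lower semicontinuity on all of $\mD(\mH)$, including singular $\sigma$ where the value may be $+\infty$, follows because $\widetilde Q_\alpha$ is continuous in $\sigma$ for $\alpha<1$. To see this, use the form $\Tr(\rho^{1/2}\sigma^{(1-\alpha)/\alpha}\rho^{1/2})^\alpha$ noted before Proposition~\ref{prop:triangular-Hoeffding}. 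Then $-\log$ of a continuous nonnegative function is lsc with values in $(-\infty,+\infty]$.

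\emph{Concavity in $s$.} We use the standard fact that $s\mapsto -(1+s)\log\Tr(\sigma^{\frac{s}{2}}\rho\sigma^{\frac s2})^{1/(1+s)}$ is concave on $[0,1]$. Equivalently, $\alpha\mapsto(\alpha-1)\sD_\alpha$ is convex, and the perspective-type transformation $s=(1-\alpha)/\alpha$ preserves the relevant property. This is the sandwiched analogue of the concavity of the Gallager-type function $E_0$; see Mosonyi--Ogawa or Cheng--Hsieh--Tomamichel for the sandwiched Augustin setting. We would cite this rather than reprove it. Subtracting the linear term $sR$ preserves concavity. For fixed full-rank $\sigma$, $F_R(\cdot,\sigma)$ is continuous. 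For singular $\sigma$ the value may jump to $+\infty$ only when $\supp\rho_x\perp\supp\sigma$ for some $x$; in that case $F_R(s,\sigma)=+\infty$ for every $s>0$, which is still concave and usc except at $s=0$.

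\emph{Handling the boundary and attainment.} This is the main obstacle. Sion requires usc in $s$ for every $\sigma$, but the convention $F_R(0,\sigma)=0$ breaks usc at $s=0$ when the value is $+\infty$ for $s>0$. We would first apply Sion with $\sigma$ restricted to the compact set $\mD_\delta=\{\sigma\ge\delta I/d\text{-mixtures}\}$, namely $\sigma=(1-\delta)\tau+\delta I/d$. On this set everything is finite and jointly continuous, so the minimax equality holds with attained optimizers. Letting $\delta\to0$, the left side $\inf_\sigma\sup_s$ decreases to $\inf_{\sigma\in\mD(\mH)}\sup_sF_R$: by convexity in $\sigma$, $F_R(s,(1-\delta)\tau+\delta I/d)\le(1-\delta)F_R(s,\tau)+\delta F_R(s,I/d)$, and $F_R(s,I/d)$ is bounded uniformly in $s$. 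The right side also converges. The sup over $s$ is over a compact interval, and the functions $g_\delta(s)=\inf_{\sigma\in\mD_\delta}F_R(s,\sigma)$ are usc and decrease to $g(s)$, so a Dini-type argument gives $\sup_s g_\delta\to\sup_s g$. The remaining inequality, $\inf\sup\ge\sup\inf$, is trivial.

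For attainment, $\sup_s\inf_\sigma$ is attained because $g$ is usc on $[0,1]$. $\inf_\sigma\sup_s$ is attained because $\sigma\mapsto\sup_sF_R(s,\sigma)$ is lsc on the compact set $\mD(\mH)$. The pair of optimizers then forms a saddle point by the standard argument from equality of the two values.
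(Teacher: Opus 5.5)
Your proof is correct, but it takes a different route from the paper.

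\textbf{The paper's route.} The paper invokes a minimax lemma (Lemma~II.3 of Mosonyi's divergence-radii paper) directly on $(0,1]\times\mD(\mH)$. That lemma needs only convexity and lower semicontinuity in the compact variable $\sigma$, with singular states and the value $+\infty$ allowed, together with plain concavity in $s$. It imposes no semicontinuity in $s$, so it immediately yields a minimizer $\sigma^*$ over all of $\mD(\mH)$. The endpoint $s=0$ is then adjoined by hand: $F_R(s,\sigma)\ge -sR$ handles the left side, and $\lim_{s\to0^+}B(s)=0$ handles the right side. Attainment in $s$ comes from upper semicontinuity of $B$ on $[0,1]$.

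\textbf{Your route.} You use Sion's theorem, which does require upper semicontinuity in $s$. Having noticed that this breaks at $s=0$, you work on $\{\sigma\ge\delta I/d\}$, where $F_R$ is jointly continuous on $[0,1]\times\mD_\delta$. You then let $\delta\to0$, using the convexity estimate $F_R(s,(1-\delta)\tau+\delta I/d)\le(1-\delta)F_R(s,\tau)+\delta F_R(s,I/d)$ and a Dini-type argument. This costs an extra approximation step but relies only on the textbook Sion theorem. It also gives upper semicontinuity of $g$ on all of $[0,1]$, including $s=0$, for free, since $g$ is an infimum of functions continuous on $[0,1]$. The paper instead has to treat $s=0$ separately.

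\textbf{Two minor points.}
\begin{enumerate}
\item The failure of upper semicontinuity at $s=0$ is broader than the $+\infty$ case you describe. For any singular $\sigma$, $\sigma^s$ converges to the support projection $\Pi_\sigma$, so $\lim_{s\to0^+}F_R(s,\sigma)=-\sum_xP(x)\log\Tr[\rho_x\Pi_\sigma]$. This gives a jump above $F_R(0,\sigma)=0$ whenever some $\rho_x$ with $P(x)>0$ is not supported in $\supp\sigma$. The jump is finite unless some $\rho_x$ is orthogonal to $\sigma$. Your truncation covers this case anyway.
\item The pointwise convergence $g_\delta\to g$ used in your Dini argument should be stated explicitly. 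The same convexity estimate supplies it. Combined with $F_R(s,I/d)\le s(\log d-R)$ and $g(s)\ge -sR$, it in fact gives $0\le g_\delta(s)-g(s)\le\delta s\log d$ uniformly in $s$. So the Dini step can be replaced by uniform convergence.
\end{enumerate}
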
    
\end{proboxed}

\begin{proof}
For a fixed $\sigma$, concavity of $s\mapsto F_R(s,\sigma)$ on
$s>0$ follows from the concavity of the sandwiched auxiliary
function \cite[Appendix~B]{mosonyi_strong_cq_channel_2017}.  For $0<s\le1$,
the map $\sigma\mapsto F_R(s,\sigma)$ is convex and lower
semicontinuous by the sandwiched divergence's convexity for
$\alpha\ge1/2$ \cite[Theorem~2]{Frank_2013}.
Then, due to  Ref.~\cite[Lemma~II.3]{Mosonyi_2021_divergence_radii}, we have 
\bal
\min_{\sigma\in\mD(\mH)}\sup_{0< s\le1}F_R(s,\sigma)
 =\sup_{0< s\le1}\min_{\sigma\in\mD(\mH)} F_R(s,\sigma).
\eal
Indeed, $\inf$ on both sides can be replaced with $\min$. We denote the minimizer of the left-hand side as $\sigma^*\in\mD(\mH)$.

Let us see that including $s=0$ in the range of the supremum does not change the optimized values on both sides.
As for the left-hand side, the nonnegativity of $ J_{1/(1+s)}(P,W\mid\sigma)\coloneqq \sum_xP(x)\widetilde D_{1/(1+s)}(\rho_x\Vert\sigma)$ implies $F_R(s,\sigma)\geq -sR$, meaning that $\sup_{0<s<1}F_R(s,\sigma)\geq 0$. Since we fixed $F_R(0, \sigma)=0$, the added point $s=0$ does not contribute to the optimal value of the left-hand side. 
As for the right-hand side, we denote $B(s)\coloneqq \inf_{\sigma\in\mD(\mH)}F_R(s,\sigma)$. Let us take a full-rank $\sigma$ and take the limit $s\to 0$; then we can upper bound $\limsup_{s\to+0}B(s)\leq 0$. Combining this with the lower bound $B(s)\geq 0$ yields $\lim_{s\to+ 0}B(s)=0$.
From this, adjoining $s=0$ does not change the right-hand side either.

Finally, let us show that there is a saddle point.
The function $B(s)-sR$ is upper semicontinuous on $(0,1]$ as the
infimum of fixed-state continuous functions
\cite[Lemma~III.3]{mosonyi_strong_cq_channel_2017}. Furthermore, since we see that $\lim_{s\to +0}B(s)=0=B(0)$, $B(s)$ is upper semicontinuous in the compact interval $s\in[0,1]$. From this, we can see that the supremum on the right-hand side is attained at some $s^*\in[0,1]$.
From the observation above, we have
\bal
F_R(s,\sigma^*)\leq F_R(s^*,\sigma^*)\leq F_R(s^*,\sigma),
\eal
meaning that the pair $(s^*,\sigma^*)$ is the saddle point.
\end{proof}

Let us remark that, due to Ref.~\cite[Lemmas~III.9 and III.11]{Mosonyi_2021_divergence_radii}, $ \supp\sigma^*=K\coloneqq \supp\sum_{x\in\mX}p(x)\rho_x$ holds.

\subsubsection{The right-hand side: Sandwiched Augustin information}
Let us see the connection between the quantity on the right-hand side of Eq.~\eqref{eq:truncated-sion} and the sandwiched Augustin information.
Since $\sigma\mapsto F_R(s,\sigma)$ is not convex in general for $s>1$, one can apply the minimax theorem only for the interval $0\leq s\leq 1$, which restricts the applicable communication rate region as follows:
Let us define
\begin{equation}
 A(s)\coloneqq s\widetilde I^{\rm Aug}_{1/(1+s)}(P,W),
 \qquad
 R_{\mathrm{crit}}(P)\coloneqq A'_-(1),
 \qquad
 R_{\max}(P)\coloneqq A'_+(0).
 \label{eq:critical-rates}
\end{equation}
Here, $A'_-(1)$ is the right-hand deriverate of $A(s)$, and $A'_+(0)$ is the left-hand deriverate of $A(s)$
As the infimum of the fixed-state concave functions, $A$ is concave.
Continuity at order one and the order-one Augustin identity
\cite[Proposition~5]{Cheng_properties_2022} give
\bal
 R_{\max}(P)=\lim_{\alpha\to 1}\sI_\alpha^{\rm Aug}(P;W)=I(P;W).
\eal

If we choose the target rate $R$ so that  it satisfies $R_{\mathrm{crit}}(P)<R<I(P;W)$,
the right derivative of $A(s)-sR$ at zero is
$R_{\max}(P)-R>0$, while its left derivative at one is
$R_{\mathrm{crit}}(P)-R<0$.  Concavity excludes both endpoints;
thus every maximizer over $[0,1]$ lies in $(0,1)$.

Choose a saddle point $(s_*,\sigma_*)$ with $0<s_*<1$, and set
$\alpha_*\coloneqq 1/(1+s_*)\in(1/2,1)$. 
From the definition of the saddle point $F_R(s^*,\sigma)\geq F_R(s^*,\sigma^*)\geq F_R(s,\sigma^*)$, we can see that $\sigma^*$ minimizes $J_{\alpha^*}(P,W\mid\sigma)$.

\subsubsection{The left-hand side: Hoeffding divergence}
The result of Proposition~\ref{prop:truncated-sion} can actually be strengthened: Indeed, the following also holds.
\bal
 \inf_{\sigma\in\mD(\mH)}\sup_{0\le s\le1}F_R(s,\sigma)
 =\sup_{s\geq 0}\inf_{\sigma\in\mD(\mH)} F_R(s,\sigma).
\eal
This can be verified as follows:
For the minimizer $\sigma^*$, define a function $G_R(s)\coloneqq F_R(s,\sigma_*)$. Noting that $s\widetilde D_{1/(1+s)}(\rho_x\Vert\sigma^*)
 =-(1+s)\widetilde \psi_{1/(1+s)}(\rho_x\Vert\sigma^*)$, and following the discussion in \cite[Corollary~B.2]{mosonyi_strong_cq_channel_2017}, we can see that $s\widetilde D_{1/(1+s)}(\rho_x\Vert\sigma^*)$ is concave on
$s>0$ for any $x\in\mX$.  Hence $G_R$, is also concave on $s>0$, implying that 
\begin{equation}
H_{\sigma^*}(P,R)=\sup_{s\ge0}F_R(s,\sigma^*)=F_R(s^*,\sigma^*).
 \label{eq:fixed-center-full-maximum}
\end{equation}
holds. 
Also, the maximizer $s^*\in(0,1)$ of the function $G_R(s)$ is unique, which can be verified as follows: Noting that $G_R(s)$ is real analytic and concave, if there are multiple maximizers, $G_R(s)$ needs to be a constant over $s\in(0,\infty)$, which is not true because we have
\bal
 \lim_{s\to +0}\frac{G_R(s)}s
 =J_1(P,W\mid\sigma_*)-R
 \ge R_{\max}^{*,A}(P)-R>0.
\eal

The following lemma guarantees that we can first perturb the maximizer $\sigma^*$ to $\sigma_\delta\coloneqq (1-\delta)\sigma^*+\delta I/d$, and then take the limit $ \delta\to 0$ to make the second argument full-rank
\begin{lemboxed}
    \begin{lem}\label{lem:no-optimizer-escape}
    Let $R>0$ be the target rate.
Set $\sigma_\delta\coloneqq (1-\delta)\sigma^*+\delta I/d$ for $\delta>0$.
Then, it holds that 
\begin{equation}
 \lim_{\delta\downarrow0}\sup_{s\ge0}F_R(s,\sigma_\delta)
 =F_R(s^*,\sigma^*).
 \label{eq:faithful-approximation-limit}
\end{equation}
\end{lem}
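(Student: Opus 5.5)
The plan is to sandwich $\sup_{s\ge0}F_R(s,\sigma_\delta)$ between two quantities that both tend to $F_R(s^*,\sigma^*)$. The lower bound comes from the minimax identity. The upper bound comes from concavity in $s$, which keeps the supremum over the unbounded range $s\ge0$ from escaping to large $s$.

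\emph{Lower bound.} For every $\delta>0$ we have
\begin{equation*}
\sup_{s\ge0}F_R(s,\sigma_\delta)\ \ge\ \sup_{0\le s\le1}F_R(s,\sigma_\delta)\ \ge\ \min_{\sigma\in\mD(\mH)}\sup_{0\le s\le 1}F_R(s,\sigma)\ =\ F_R(s^*,\sigma^*).
\end{equation*}
The last equality holds because $(s^*,\sigma^*)$ is a saddle point by Proposition~\ref{prop:truncated-sion}. Hence the $\liminf_{\delta\downarrow0}$ of the left-hand side is at least $F_R(s^*,\sigma^*)$.

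\emph{Pointwise convergence and concavity.} Set $G_\delta(s)\coloneqq F_R(s,\sigma_\delta)$ and $G_R(s)=F_R(s,\sigma^*)$.
\begin{itemize}
\item Since $\supp\sigma^*=K\supseteq\supp\rho_x$ for all $x$ with $P(x)>0$, the map $\sigma\mapsto\widetilde D_\alpha(\rho_x\Vert\sigma)$ is continuous along the segment $\sigma_\delta\to\sigma^*$ for each fixed $\alpha\in(0,1)$. Hence $G_\delta(s)\to G_R(s)$ for each fixed $s>0$.
\item The argument of \cite[Corollary~B.2]{mosonyi_strong_cq_channel_2017} used above for $\sigma^*$ applies verbatim to any state. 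So each $G_\delta$ is concave on $s>0$.
\item Pointwise convergence of concave functions on an open interval is locally uniform. Thus $G_\delta\to G_R$ uniformly on every compact $[a,b]\subset(0,\infty)$.
\end{itemize}

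\emph{Trapping the maximizer.} This is the key step. We know $G_R$ has a unique maximizer $s^*\in(0,1)$, so we fix $s^*<s_1<s_2<1$ with $G_R(s_2)<G_R(s_1)$. For all sufficiently small $\delta$ we then have $G_\delta(s_2)<G_\delta(s_1)$. By concavity, $G_\delta$ is nonincreasing on $[s_2,\infty)$, so $\sup_{s\ge0}G_\delta=\sup_{0\le s\le s_2}G_\delta$.

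\emph{Handling small $s$.} On $[a,s_2]$ with small $a>0$, uniform convergence gives $\sup_{[a,s_2]}G_\delta\to\sup_{[a,s_2]}G_R\le G_R(s^*)$. On $[0,a]$ we bound
\begin{equation*}
G_\delta(s)\le s\,[J_1(P,W\mid\sigma_\delta)-R]\le a\,[J_1(P,W\mid\sigma^*)-\log(1-\delta)],
\end{equation*}
using two facts. First, $\alpha\mapsto\widetilde D_\alpha$ is nondecreasing. Second, $\sigma_\delta\ge(1-\delta)\sigma^*$ together with antimonotonicity of the Umegaki divergence in its second argument. This bound is uniformly small as $a\to0$. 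Since $F_R(s^*,\sigma^*)>0$ at rates $R<I(P;W)$, the interval $[0,a]$ is irrelevant for small $a$. Hence $\limsup_{\delta\downarrow0}\sup_{s\ge0}G_\delta\le G_R(s^*)$, and together with the lower bound this proves Eq.~\eqref{eq:faithful-approximation-limit}.

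I expect the trapping step to be the main obstacle. For $s>1$ (that is, $\alpha<1/2$) neither convexity in $\sigma$ nor monotonicity in $\sigma$ of the sandwiched divergence is available. The only control comes from concavity in $s$ together with the strict uniqueness of the interior maximizer $s^*$ established just before the lemma.
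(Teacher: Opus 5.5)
Your proof is correct. It shares the paper's central mechanism: concavity of $s\mapsto F_R(s,\sigma_\delta)$ on all of $(0,\infty)$, together with the unique interior maximizer $s^*\in(0,1)$ of $G_R$, confines the supremum to a compact subinterval, on which one then passes to the limit $\delta\downarrow0$. The implementation, however, differs in several places.

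The paper restricts to $K=\supp\sigma^*$ and writes $F_R(s,\sigma_\delta)$ through the positive definite blocks $B_x(s,\delta)=\rho_x^{1/2}S_\delta^s\rho_x^{1/2}|_{L_x}$. Smooth dependence of matrix powers then gives uniform convergence of both $F_R(\cdot,\sigma_\delta)$ and $\partial_sF_R(\cdot,\sigma_\delta)$ on $[a,b]$. The strict signs $G_R'(a)>0>G_R'(b)$ therefore persist for small $\delta$, and the tangent-line inequality excludes $[0,a)$ and $(b,\infty)$ at once. Both directions of the limit then follow from the same uniform convergence.

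Your route differs as follows:
\begin{itemize}
\item You obtain the lower bound for every $\delta$ directly from the saddle point, with no convergence argument.
\item You replace derivative convergence by the zeroth-order fact that pointwise convergence of finite concave functions is locally uniform. This needs only continuity of $\sigma\mapsto\widetilde D_\alpha(\rho_x\Vert\sigma)$ at $\sigma^*$ for fixed $\alpha\in(0,1)$, which holds because the supports are not orthogonal.
\item You trap the right end by the secant comparison $G_\delta(s_2)<G_\delta(s_1)$ instead of a derivative sign.
\item You handle the left end by the explicit bound $G_\delta(s)\le a\,[J_1(P,W\mid\sigma^*)-\log(1-\delta)]$, using monotonicity in $\alpha$ and antimonotonicity of the Umegaki divergence in its second argument.
\end{itemize}

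Your version is more elementary: it needs no $C^1$ estimates and no smooth functional calculus on the support. The cost is two extra standard facts for the left end. The paper's derivative-based version treats both ends symmetrically. It also gives control uniform in the rate $r$ over a neighborhood of $R$, which is more than the lemma needs.

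Incidentally, you do not even need $F_R(s^*,\sigma^*)>0$. It suffices that $G_R(s^*)\ge\lim_{s\downarrow0}G_R(s)=0$.
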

\end{lemboxed}
\begin{proof}
Let $G_R(s)\coloneqq F_R(s,\sigma_*)$. By the preceding argument, $G_R$ is
concave and differentiable on $(0,\infty)$, and $s_*\in(0,1)$ is its
unique global maximizer. Choose $0<a<s_*<b<1$. Concavity and uniqueness
give
\bal
 G_R'(a)
 \ge \frac{G_R(s_*)-G_R(a)}{s_*-a}>0,
 \qquad
 G_R'(b)
 \le \frac{G_R(b)-G_R(s_*)}{b-s_*}<0.
\eal
We will show that these strict derivative inequalities persist when
$\sigma_*$ is replaced by $\sigma_\delta$ and the rate is varied in a
fixed neighborhood of $R$.

To this end, let $K\coloneqq \supp(\sigma^*)$.
Due to the discussion above, any $\rho_x$ with $P(x)>0$ is supported on $K$, and $S^*\coloneqq \sigma^*|_K$ is positive definite. Moreover, $\sigma_\delta$
is block diagonal with respect to $K\oplus K^\perp$, and its
restriction to $K$ is
\bal
 S_\delta\coloneqq (1-\delta)S^*+\frac{\delta}{d}I_K.
\eal
For $0\le\delta\le1/2$, we have
$S_\delta\ge \frac12\lambda_{\min}(S_*)I_K$.
Thus the restrictions remain uniformly positive definite as
$\delta\to +0$, even though $\sigma^*$ may be singular on the
full output space.

For each $x\in \supp P$, regard $\rho_x$ as an operator on
$K$, let $L_x\coloneqq \supp\rho_x$, and define
\bal
 B_x(s,\delta)
 \coloneqq \left.\rho_x^{1/2}S_\delta^s\rho_x^{1/2}\right|_{L_x}.
\eal
This operator is positive definite on $L_x$, including at $\delta=0$.
The nonzero eigenvalues of
$\sigma_\delta^{s/2}\rho_x\sigma_\delta^{s/2}$ are precisely those of
$B_x(s,\delta)$. Consequently, the definition of the sandwiched
R\'enyi divergence yields
\bal
 F_R(s,\sigma_\delta)
 =-(1+s)\sum_{x\in\operatorname{supp}P}P(x)
   \log\operatorname{Tr}_{L_x}
   \left[B_x(s,\delta)^{1/(1+s)}\right]-sR.
\eal
Matrix powers depend smoothly on their exponent and on a positive
definite matrix. Hence the expression on the right and its first
$s$-derivative are jointly continuous on
$[a,b]\times[0,1/2]$. Since this set is compact, it follows that
\bal
 \sup_{s\in[a,b]}
 \left(
   \left|F_R(s,\sigma_\delta)-G_R(s)\right|
   +\left|\partial_sF_R(s,\sigma_\delta)-G_R'(s)\right|
 \right)
 \longrightarrow0
 \qquad(\delta\to +0).
\eal

Now set $m\coloneqq \min\{G_R'(a),-G_R'(b)\}>0$.
By the uniform convergence of the derivatives, there exists
$\delta_0\in(0,1/2)$ such that, for every $0<\delta<\delta_0$,
\bal
 \partial_sF_R(a,\sigma_\delta)>\frac{3m}{4},
 \qquad
 \partial_sF_R(b,\sigma_\delta)<-\frac{3m}{4}.
\eal
Choose $0<\eta<\min\{m/4,R/2\}$ and let
$I\coloneqq (R-\eta,R+\eta)$. Since
\bal
 \partial_sF_r(s,\sigma_\delta)
 =\partial_sF_R(s,\sigma_\delta)-(r-R),
\eal
we obtain, simultaneously for every $r\in I$ and
$0<\delta<\delta_0$,
\bal
 \partial_sF_r(a,\sigma_\delta)>\frac m2,
 \qquad
 \partial_sF_r(b,\sigma_\delta)<-\frac m2.
\eal

Fix such $r$ and $\delta$, and write $f(s)\coloneqq F_r(s,\sigma_\delta)$.
The function $f$ is concave on $(0,\infty)$ and extends continuously
to $s=0$ with $f(0)=0$. A differentiable concave function lies below
each of its tangent lines. Therefore, for $0\le s<a$,
\bal
 f(s)\le f(a)+f'(a)(s-a)<f(a),
\eal
whereas, for $s>b$,
\bal
 f(s)\le f(b)+f'(b)(s-b)<f(b).
\eal
It follows that no point outside $[a,b]$ can be a global maximizer.
As $f$ is continuous on $[a,b]$, its global maximum is attained there,
and thus
\bal
 \sup_{s\ge0}F_r(s,\sigma_\delta)
 =\max_{s\in[a,b]}F_r(s,\sigma_\delta).
\eal

Applying this identity at $r=R$ and using the uniform convergence
proved above, we conclude that
\begin{align*}
 &\left|
   \sup_{s\ge0}F_R(s,\sigma_\delta)-F_R(s_*,\sigma_*)
  \right|\\
 &\quad=
  \left|
   \max_{s\in[a,b]}F_R(s,\sigma_\delta)
   -\max_{s\in[a,b]}G_R(s)
  \right|\\
 &\quad\le
  \sup_{s\in[a,b]}
  \left|F_R(s,\sigma_\delta)-G_R(s)\right|
  \xrightarrow{\delta\to +0}0.
\end{align*}

\end{proof}

Combining these, we reach the main statement as follows:
\begin{proof}[{Proof of Theorem~\ref{sup thm: unitary invariant sphere-packing}}]
For every sufficiently small $\delta>0$, the state $\sigma_\delta$ of
Lemma~\ref{lem:no-optimizer-escape} is faithful on the full output space.
Due to Proposition~\ref{pro: fixed-sigma-converse}, it holds that
\bal
\limsup_{n\to\infty}-\frac1n\log P_{\rm e}^{n}(P_n, R_n, W) \le H_{\sigma_\delta}(P,R)=\sup_{s\ge0}F_R(s,\sigma_\delta), \qquad\forall\delta>0.
\eal
For the high-rate regime $R_{\mathrm{crit}}(P)<R<I(P;W)$, combining Lemma~\ref{lem:no-optimizer-escape} and Proposition~\ref{prop:truncated-sion} and taking the limit $\delta\to + 0$ yields
\bal
\lim_{\delta\to +0}\sup_{s\ge0}F_R(s,\sigma_\delta)&=F(s^*,\sigma^*)\\
&=\sup_{0\le s\le1}\inf_{\sigma\in\mD(\mH)} F_R(s,\sigma)\\
&=\sup_{0\le s\le1}s\qty[
 \widetilde I^A_{1/(1+s)}(P,W)-R].
\eal
Recalling that $\alpha=\frac{1}{1+s}$, we reach the claim.

\end{proof}

\section{Proof of Theorem~\ref{thm: universal random coding bound}}\label{sup sec: random coding}
In the following, we will show that the random-coding bound with the sandwiched Augustin information can be achievable universally. Our statement is as follows:
\begin{thmboxed}
    \begin{thm}[Theorem~\ref{thm: universal random coding bound} in the main text]\label{supthm: universal random coding bound}
    Let $P$ be a probability distribution, and $R$ be such that $0\leq R\leq H(P)$. There exists a fixed sequence of the codebook with constant composition $\qty{P_n}_n$ converging to $P$, and the decoder such that the decoding error satisfies 
    \bal
    \liminf_{n\to\infty}-\frac{1}{n}\log P^n_e(P_n, R, W)\geq \sup_{1/2\leq \alpha\leq 1}\frac{\alpha-1}{\alpha}\qty(R-\sI^{\rm Aug}_\alpha(P:W))
    \eal
    for any classical-quantum channel $W$.
\end{thm}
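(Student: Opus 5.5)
Our plan is to combine a random constant-composition codebook (derandomized at the end by a type-counting/expurgation argument), a quotient (pretty-good-type) decoder built from a channel-independent family of operators, and Schur--Weyl duality. The steps are ordered so that every channel-dependent quantity appears only in the analysis, never in the construction.

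\emph{Step 1: the random-coding estimate.} Fix $P_n\in\mathcal{P}_n(\mX)$ with $P_n\to P$. Draw $M_n=e^{nR}$ codewords independently and uniformly from $T^n_{P_n}$. Given positive operators $\{Y_{x^n}\}_{x^n\in\mX^n}$ on $\mH^{\otimes n}$, decode with the quotient POVM
\[
\Lambda_m=\Big(\sum_{m'}Y_{x^n(m')}\Big)^{-1/2}Y_{x^n(m)}\Big(\sum_{m'}Y_{x^n(m')}\Big)^{-1/2}.
\]
Following the analysis of Ref.~\cite{Beigi_2024_lower_bound}, the expected error is bounded using $I-\Lambda_m\le$ a function of $Y_{x^n(m)}$ and $\sum_{m'\ne m}Y_{x^n(m')}$, together with operator concavity of $t\mapsto t^{1-\alpha}$ (or $t^{s}$ with $s=(1-\alpha)/\alpha$). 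This gives, for $1/2\le\alpha<1$,
\[
\mathbb{E}\,P_e\lesssim M_n^{\frac{1-\alpha}{\alpha}}\,\mathbb{E}_{x^n}\Tr\big[\rho_{x^n}Y_{x^n}^{-\frac{\alpha-1}{\alpha}}\big]^{\cdots}\,\Tr\big[\bar\rho\,Y\big]^{\cdots},
\]
where $\bar\rho$ is the average output. After optimizing $Y$, the right-hand side has the shape of the variational formula \eqref{eq:variational measured renyi}. The aim is an exponent $\frac{1-\alpha}{\alpha}\big(\frac1n\sum_x nP_n(x)D^{\mbM}_\alpha(\cdot)-R\big)$ for $n$-letter operators.

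\emph{Step 2: a universal choice of the operators.} Take $Y_{x^n}$ to be the universal symmetric state $\sigma_{u,n}=\sum_\lambda \frac{1}{|\lambda\text{-block}|}\,\Pi_\lambda$, or a conditional version adapted to the type structure of $x^n$. The conditional version averages over permutations preserving $x^n$ and uses the Schur--Weyl decomposition within each symbol's positions. It satisfies $\tau^{\otimes} \le \mathrm{poly}(n)\,\sigma_{u}$ for every product state, uniformly. Since $Y^{-(\alpha-1)/\alpha}$ has a positive exponent for $\alpha<1$, the operator $\sigma_u$ cannot simply be substituted in directly. Instead, we pinch $\rho_{x^n}$ by the universal state, which costs only polynomially many eigenvalue classes. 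This turns measured quantities into regularized ones, and $\frac1n D^{\mbM}_\alpha(\rho_{x^n}\|\sigma^{\otimes n})\to\sum_xP(x)\sD_\alpha(\rho_x\|\sigma)$ by the same pinching inequality used in Lemma~\ref{prop: pinching-asymptotics}. Minimizing over $\sigma$ then yields $\sI^{\rm Aug}_\alpha(P;W)$. Channel independence holds because the Schur--Weyl projectors and $\sigma_u$ do not depend on $W$, and every channel-dependent optimizer is dominated up to $\mathrm{poly}(n)$ factors.

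\emph{Step 3: derandomization and the main obstacle.} To obtain a single codebook good for all $W$, we use the fact that the expected error bound holds for every $W$. We discretize channels by a net, whose size is $e^{o(n)}$ only if handled carefully. A cleaner route is the standard Csisz\'ar--K\"orner trick: choose the codebook so that a channel-free quantity controlling the bound (pairwise overlaps of conditional types) is small, and run Markov's inequality over the polynomially many joint types. Expurgating half the codewords gives maximal error within polynomial factors. The hardest step is Step 2: proving that one fixed family $Y_{x^n}$ achieves, up to $\mathrm{poly}(n)$, the optimum of the variational expression simultaneously for all $\alpha$ and all $W$. The difficulty is that the exponent $-(\alpha-1)/\alpha$ on $Y$ in the first term pushes in the opposite direction from the second term. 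I would try $Y_{x^n}=\sigma_u^{-1}$-weighted pinched versions, i.e., block-diagonal operators in the Schur basis with scalars given by dimension ratios. Both traces would then be evaluated blockwise, and the sandwiched Rényi divergence would be recovered through Hayashi's pinching bound before finally taking the supremum over $\alpha\in[1/2,1)$.
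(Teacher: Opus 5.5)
There is a genuine gap, and it lies in your Step 2. That step carries the whole theorem, and you leave it open.

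\textbf{What the analysis actually requires.} In the quotient-decoder analysis you invoke (a Csisz\'ar--K\"orner good codebook plus a twirl over the stabilizer of $x^n(m)$, as in Proposition~\ref{pro: quotient error bound}), a permutation-covariant family $\{S_{x^n}\}$ needs $\|\overline S\|\le 1$. Under that condition the error is at most $(M_n-1)^s e^{s\delta_1(n)}\Tr[\rho_{x^n}S_{x^n}^{-s}]$ with $s=(1-\alpha)/\alpha$. Equivalently, the scale-invariant quantity $\|\overline S\|^s\Tr[\rho_{x^n}S_{x^n}^{-s}]$ is what gets controlled. The power on $S$ is negative. The displayed variational formula in the preliminaries has a sign slip; the Berta--Fawzi--Tomamichel form uses $Y^{(\alpha-1)/\alpha}$. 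So the obstacle you name is not the real one, and domination from above is exactly what is needed. Suppose $\widehat T_{\mathbf Y}(x^n)\le \mathrm{poly}(n)\,S_{x^n}$ for every family $\mathbf Y$, where $\widehat T_{\mathbf Y}(x^n)=\bigotimes_iY_{x_i}/\|\overline T_{\mathbf Y}\|$. Then operator anti-monotonicity of $t\mapsto t^{-s}$ lets you take the infimum over $\mathbf Y$ inside the analysis.

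\textbf{Why your candidates fail.} The real difficulty is the normalization by the type-class average. Take the noiseless channel $\rho_x=\ketbra{x}{x}$, whose target exponent is $H(P)-R$.
\begin{enumerate}
\item An $x^n$-independent choice such as $\sigma_{u,n}$ turns the quotient decoder into $\Lambda_m=I/M_n$.
\item Consider $\bigotimes_x\sigma_{u,m_x}$, or any covariant family that is scalar on the blocks of the per-symbol Schur--Weyl decomposition; this includes your dimension-ratio proposal. Both $\ket{x^n}$ and $\ket{v}^{\otimes n}$, for any unit $v$, lie in the all-symmetric block and get the same eigenvalue $c$. For $\ket{v}^{\otimes n}$ this holds for every $S_{y^n}$, so $\|\overline S\|\ge c$. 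Meanwhile $\Tr[\rho_{x^n}S_{x^n}^{-s}]=c^{-s}$. The certified bound is therefore no better than $(M_n-1)^s$.
\end{enumerate}
Pinching $\rho_{x^n}$ in the analysis cannot repair a decoder that is already fixed.

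\textbf{The missing idea.} This is Theorem~\ref{thm:schur-universal}. Maximize $\sum_{\vlam}\log\det S_{\vlam}$ over the compact convex hull of $\{\widehat T_{\mathbf Y}(x^n_P)\}$ inside the $H_P$-commutant $\bigoplus_{\vlam}\mB(\mW_{\vlam})\otimes I$, which uses its non-scalar part. First-order optimality gives $T\le\kappa_{n,d}S_\star$ for every $T$ in the set, with $\kappa_{n,d}=\sum_{\vlam}\dim\mW_{\vlam}$ polynomial in $n$. Every generator twirls to a norm-one operator, so convexity gives $\|\overline\Sigma\|\le1$. Covariant transport then defines $\Sigma_{x^n}$.

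\textbf{Two further missing ingredients.}
\begin{enumerate}
\item With a fixed decoder you cannot ``minimize over $\sigma$'' at the end. The Augustin form comes from the dual identity of Theorem~\ref{thm: measured-augustin duality}:
$\inf_{\mathbf Y}\prod_x\bigl(\Tr[\rho_xY_x^{-s}]\bigr)^{P(x)}\|\sum_xP(x)Y_x\|^{s}=\exp[-sI^{\rm Aug,\mbM}_\alpha(P;W)]$.
It is proved via weighted AM--GM, Lagrange duality and Sion's theorem. Your $\Tr[\bar\rho\,Y]$ term would instead pin the second argument to the average output.
\item This identity yields only the single-letter measured Augustin information. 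The paper reaches $\sI^{\rm Aug}_\alpha$ by double blocking. It applies the universal family on the super-alphabet $\mX^{\ell_n}$ with $\ell_n\approx(\log n)^{1/3}$, so that the $d^{\ell_n}$-dependent polynomial overhead stays subexponential. It then uses the regularization of Proposition~\ref{prop:regularization}.
\end{enumerate}

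Your derandomization via a Csisz\'ar--K\"orner good codebook matches the paper. There, however, the interference bound is made channel-free by the permutation covariance of the decoder family, through the stabilizer twirl and Lemma~\ref{lem: packing}. No net over channels is needed.
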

\end{thmboxed}

The proof consists of three steps:
\begin{enumerate}
    \item Under a fixed good codebook, we bound the decoding error probability for a decoder constructed from a certain family of positive definite matrices associated with the codewords.
    \item Showing that we can bound the error probability derived in the previous step by the measured Rényi Augustin information up to the polynomial additional overhead, by a channel-independent decoder.
    \item Observing that the regularized version of the measured Augustin information coincides with the sandwiched Augustin information, we reach the claim.
\end{enumerate}
In Ref.~\cite{Beigi_2024_lower_bound}, they prove that the random-coding bound with the sandwiched Rényi mutual information is achievable in the channel-aware scenario.
In our case, we need to choose a fixed codebook and a universal decoder that work independently of the given channel.

We begin by constructing the encoder and decoder that achieve the measured Augustin random-coding bound, as follows, and then reduce it to the sandwiched Augustin random-coding bound.

\subsection{Construction of the codebook}
We first discuss how we construct the codebook $\mC_n\subset T^n_P$.
The existence of the following sets is key of the construction.
\begin{defnboxed}
    \begin{defn}[$(P, R, \delta)$-good set for $x^n$, see e.g. Ref.~{\cite[Definition 1]{Matsuura_2025_universal_resolvability}}]
        For a $n$-type $P$, positive number $R<H(X)_P$, $\delta>0$, a set $\mL_n\subset T^n_P$ of length-$n$ sequences is called $(P, R, \delta)$-good set for $x^n$ if 
        \bal
        \abs{\mL_n}&=e^{nR-\delta},\\
        \abs{T_V(x^n)\cap\mL_n}&\leq \abs{T_V}e^{-n(H(X)_P-R)}
        \eal
        for any conditional $n$-type $V$. Here, $T_V(x^n)$ is $V$-shell with respect to $x^n$.
    \end{defn}
\end{defnboxed}

\begin{defnboxed}
    \begin{defn}[$(P, R, \delta)$-good codebook, see e.g. Ref.~{\cite[Definition 1]{Matsuura_2025_universal_resolvability}}]
        A set $\mM_n\subset T^n_P$ is a $(P, R, \delta)$-good codebook if, for any $x^n\in\mM_n$, the set $\mM_n\backslash\qty{x^n}$ is a $(P,R,\delta)$-good set of $x^n$. 
    \end{defn}
\end{defnboxed}
In fact, one can find a $(P,R,\delta)$-codebook, which is also employed in the previous constructions of the universal classical-quantum channel coding scheme~\cite{Hayashi_2009_universal_coding, Matsuura_2025_universal_resolvability}.

\begin{lemboxed}
    \begin{lem}[Lemma 10.1 of Ref.~\cite{Csiszar_Korner_2011}]\label{lem: existence of codebook csiszar}
        For a type $P$
        For a probability distribution $P$, a sequence $\qty{P_n}_n$ of $n$-types satisfying $P_n\to P$, and positive number $R>0$, there exists a sequence $\qty{\mC_n}_{n\geq N}$ of codebook for a sufficiently large $N$ such that every $\mC_n$ is a $(P_n, R, \delta_1(n))$-good codebook for $\delta_1(n)=\Omega((\log n)^{\aX^2})$
    \end{lem}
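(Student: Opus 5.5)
We prove the lemma with the standard random selection plus expurgation argument over a fixed type class. Fix $n$ and write $P=P_n$. Draw $N=2e^{nR-\delta}$ sequences $X_1,\dots,X_N$ independently and uniformly from $T^n_P$.

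\textbf{Expected shell counts.} For a fixed $x^n\in T^n_P$ and a conditional type $V$, the probability that a uniform element of $T^n_P$ lies in $T_V(x^n)$ is
\[
\abs{T_V(x^n)\cap T^n_P}/\abs{T^n_P}\le \abs{T_V(x^n)}\,(n+1)^{\aX}e^{-nH(P)} .
\]
The bound uses the type-class size estimate in Eq.~\eqref{eq:type-class-size}. Hence the expected number of $j\neq i$ with $X_j\in T_V(X_i)$ is at most
\[
N\abs{T_V}(n+1)^{\aX}e^{-nH(P)}=2(n+1)^{\aX}e^{-\delta}\abs{T_V}e^{-n(H(P)-R)} .
\]
We handle this in two steps.

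\textbf{Step 1 (Markov bound per codeword).} For each index $i$, define the bad event
\[
B_i:\ \text{there is some } V \text{ with } \abs{\{j\neq i: X_j\in T_V(X_i)\}}>\abs{T_V}e^{-n(H(P)-R)} .
\]
For each $V$, Markov's inequality combined with the expectation above bounds the probability by $2(n+1)^{\aX}e^{-\delta}$. The number of conditional types is at most $(n+1)^{\aX^2}$, so a union bound over $V$ gives
\[
\Pr(B_i)\le 2(n+1)^{\aX+\aX^2}e^{-\delta} .
\]
Choose $\delta=\delta_1(n)$ growing slightly faster than $(\aX^2+\aX)\log(n+1)$, for instance of order $(\log n)^{\aX^2}$ as in the statement, or any $\omega(\log n)$. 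Then $\Pr(B_i)\le 1/4$ for large $n$.

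\textbf{Step 2 (expurgation).} By linearity of expectation, the expected number of bad indices is at most $N/4$. So some realization has at least $N/2=e^{nR-\delta}$ good indices. Keep exactly $e^{nR-\delta}$ good ones, discarding the rest, and call the result $\mC_n$.

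Discarding codewords only removes elements from each set $T_V(x^n)\cap(\mC_n\setminus\{x^n\})$, so the shell bound still holds for every retained $x^n$. Hence $\mC_n$ is a $(P_n,R,\delta_1(n))$-good codebook. If some draws repeat, this only helps: a repeated pair lies in the diagonal shell, whose bound $\abs{T_V}e^{-n(H-R)}<1$ forces such indices to be bad, so distinctness of the retained codewords comes for free.

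\textbf{Main obstacle.} The delicate point is Step 1 when the threshold $\abs{T_V}e^{-n(H(P)-R)}$ is below one. The inequality then demands that no other codeword lies in that shell at all, so Markov's inequality must be applied with a threshold below one. This still works because the count is integer-valued: $\Pr(\text{count}\ge1)\le\mathbb{E}[\text{count}]$, and the expectation is already bounded by the threshold times the polynomial factor $2(n+1)^{\aX}e^{-\delta}$. The polynomial factor is absorbed by $\delta_1(n)$.

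The remaining concerns are bookkeeping. Integrality of $e^{nR-\delta}$ is handled with floors, and $R<H(P_n)$ for large $n$ follows from $P_n\to P$ and $R<H(P)$.
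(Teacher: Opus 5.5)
Your proof is correct. It is essentially the random-selection-plus-expurgation argument behind the Csisz\'ar--K\"orner packing lemma that the paper cites without proof, specialized to $\mathcal{Y}=\mathcal{X}$ and $\hat V=\mathrm{id}$ (for which $I(P,\hat V)=H(P)$). Your explicit bound $\Pr(B_i)\le 2(n+1)^{|\mathcal{X}|^2+|\mathcal{X}|}e^{-\delta}$ is what actually licenses a sublinear $\delta_1(n)$, and also its later use with the growing super-alphabet $\mathcal{U}_n$, since the cited lemma is usually stated only for fixed $\delta>0$ and $n\ge n_0(|\mathcal{X}|,\delta)$. The concern in your final paragraph is unnecessary, because Markov's inequality holds for any positive threshold, including thresholds below one.
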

\end{lemboxed}

\begin{lemboxed}
    \begin{lem}[Lemma 6 of Ref.~{\cite{Matsuura_2025_universal_resolvability}}]\label{lem: packing}
        Let $\mL_n$ be a $(P_n, R, \delta)$- good set for $x^n$. Let $p_{\mL_n}$ be a distribution on $\mX^n$ defined as 
        \begin{equation}
            p_{\mL_n}(y^n)=
            \left\{ \,
    \begin{aligned}
    &\abs{\mL_n}^{-1}~~~(y^n\in\mL_n)\\
    &0~~~(\mbox{otherwise})
    \end{aligned}
\right.
        \end{equation}
        Let $\mfS_{x^n}\subset\mfS_n$ be a stabilizer of $x^n$. Then, for any $y^n\in T^n_{P_n}\backslash\qty{x^n}$, we have
        \bal
        \frac{1}{\abs{\mfS_{x^n}}}\sum_{\pi\in\mfS_{x^n}}p_{\mL_n}(\pi(y^n))\leq e^{-n H(X)_{P_n}+\delta}.
        \eal
    \end{lem}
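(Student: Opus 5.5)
The plan is to show that averaging over the stabilizer $\mfS_{x^n}$ turns the left-hand side into a uniform average over a single $V$-shell $T_V(x^n)$. The shell-intersection condition in the definition of a $(P_n,R,\delta)$-good set then gives the bound directly. No channel quantities enter; this is pure type counting.

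\emph{Step 1 (orbits are shells).} Fix $y^n\in T^n_{P_n}\setminus\{x^n\}$ and let $V$ be the conditional type of $y^n$ given $x^n$, so that $y^n\in T_V(x^n)$. A permutation $\pi\in\mfS_{x^n}$ only permutes positions inside each block $\{i: x_i=a\}$. Hence it preserves all joint counts $|\{i:(x_i,y_i)=(a,b)\}|$, so $\pi(y^n)\in T_V(x^n)$.

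Conversely, take any $z^n\in T_V(x^n)$. For each $a$, the restriction of $z^n$ to the block $\{i:x_i=a\}$ has the same multiset of symbols as the restriction of $y^n$. So some permutation within that block maps one to the other. Therefore the orbit $\mfS_{x^n}\cdot y^n$ is exactly $T_V(x^n)$.

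\emph{Step 2 (uniformity over the orbit).} Because the action is transitive on $T_V(x^n)$, the map $\pi\mapsto\pi(y^n)$ hits every element of $T_V(x^n)$ equally often. The number of hits is $|\mathrm{Stab}(y^n)\cap\mfS_{x^n}|$, which is the same for every element by the orbit-stabilizer theorem. Consequently
\begin{equation*}
\frac{1}{\abs{\mfS_{x^n}}}\sum_{\pi\in\mfS_{x^n}}p_{\mL_n}(\pi(y^n))
=\frac{1}{\abs{T_V(x^n)}}\sum_{z^n\in T_V(x^n)}p_{\mL_n}(z^n)
=\frac{\abs{T_V(x^n)\cap\mL_n}}{\abs{\mL_n}\,\abs{T_V(x^n)}}.
\end{equation*}

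\emph{Step 3 (apply goodness).} The definition of a good set gives $\abs{T_V(x^n)\cap\mL_n}\le \abs{T_V(x^n)}e^{-n(H(X)_{P_n}-R)}$ and $\abs{\mL_n}=e^{nR-\delta}$. Substituting into Step 2 yields
\begin{equation*}
e^{-(nR-\delta)}\,e^{-n(H(X)_{P_n}-R)}=e^{-nH(X)_{P_n}+\delta},
\end{equation*}
which is the claim.

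The only step needing care is Step 2: establishing transitivity and the equal-multiplicity property. This is standard orbit-stabilizer reasoning, but it must be stated precisely to justify replacing the group average by the shell average. The hypothesis $y^n\neq x^n$ plays no essential role beyond matching the setting in which the lemma is later applied.
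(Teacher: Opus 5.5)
Your proof is correct: the stabilizer orbit of $y^n$ is exactly the shell $T_V(x^n)$, the orbit--stabilizer theorem turns the group average into the uniform average over that shell, and the two defining conditions of a good set then give $e^{-nH(X)_{P_n}+\delta}$ directly. The paper gives no proof of its own here and simply imports the result from Matsuura et al.; your argument is essentially the standard type-counting proof behind that cited lemma, so there is nothing to add.
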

\end{lemboxed}
For any $n$, we take the sequence $\mC_n$ of $(P_n, R,\delta_1(n))$-good codebooks as our codebook.

\subsection{Construction of the decoder}
In the following, we explain how to construct the decoder to achieve the random-coding bound with the measured Augustin information.

\subsubsection{Quotient decoder}
Here, we define the quotient operator considered in Ref.~\cite{Beigi_2024_lower_bound, Cheng_operator_layer_cake,Matsuura_2025_universal_resolvability}: for operators $A\geq 0$ and $B>0$
\bal
\frac{A}{B}\coloneqq D\log (B)[A]=\int_0^\infty\dd \lambda (B+\lambda I)^{-1}A(B+\lambda I)^{-1}.
\eal
The quotient operator satisfies the following~\cite{Beigi_2024_lower_bound}:
\begin{enumerate}
    \item $\frac{A}{B}$ is positive semidefinite.
    \item $\frac{A+B}{C}=\frac{A}{C}+\frac{B}{C}$.
    \item $\frac{A}{A+B}\leq \frac{A}{B}$.
    \item $U\frac{A}{B}U^{\dagger}=\frac{UAU^\dagger}{UBU^\dagger}$
\end{enumerate}

Let us consider a family $\qty{S_{x^n}}_{x^n\in \mC_n}$ of positive semidefinite operators on the output Hilbert space $\mH^{\otimes n}$, each of which is associated with the codewords in the codebook $\mC_n$.
From this, one can define the family of operators $\qty{\Lambda_m}_{m\in\mM_n}$ as 
\bal
\Lambda_m\coloneqq \frac{S_{x^n(m)}}{\sum_{m'=1}^{M_n}S_{x^n(m')}},~~m=1,\ldots, M_n,
\eal
the family $\qty{\Lambda_m}_m$ forms a complete set of POVM.
The following lemma gives us a way to obtain an upper bound for the decoding error when the family of positive operators satisfies a certain covariance.

\begin{proboxed}
    \begin{pro}\label{pro: quotient error bound}
        Let $\mC_n$ be a $(P, R, \delta_1(n))$-codebook in Lemma~\ref{lem: existence of codebook csiszar}, and $\qty{S_{x^n}}_{x^n\in T^n_P}$ be a family of the positive definite operators on $\mH^{\otimes n}$ satisfying the covariance condition
        \bal
        U_\pi S_{x^n} U^{\dagger}_\pi=S_{\pi(x^n)}.
        \eal
        Furthermore, suppose that the family $\qty{S_{x^n}}_{x^n\in T^n_P}$ satisfies
        \bal 
        \|\overline S\|\leq 1,
        \qquad 
        \overline S\coloneqq \frac{1}{\abs{T_P^n}}\sum_{x^n\in T_P^n}S_{x^n}.
        \eal
        Then, for sufficiently large $n$ and any codeword $x^n(m)\in\mC_n$, it holds that
        \bal\label{Eq: quotient error bound}
        1-\Tr[\rho_{x^n(m)}\Lambda_{m}]\leq (M_n-1)^se^{s\delta_1(n)}\Tr[\rho_{x^n(m)}S_{x^n(m)}^{-s}],\qquad\forall s\in(0,1]
        \eal
    \end{pro}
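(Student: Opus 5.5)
We follow the quotient-decoder analysis of Ref.~\cite{Beigi_2024_lower_bound}, and then use the covariance of $\{S_{x^n}\}$ together with the good-codebook property to control the interference term. Fix $m$ and write $x^n=x^n(m)$, $S=S_{x^n}$ and $T=\sum_{m'\neq m}S_{x^n(m')}$. By the properties of the quotient operator listed above,
\begin{equation*}
1-\Tr[\rho_{x^n}\Lambda_m]=\Tr\Bigl[\rho_{x^n}\Bigl(I-\frac{S}{S+T}\Bigr)\Bigr].
\end{equation*}
The first step is the operator inequality
\begin{equation*}
I-\frac{S}{S+T}\le S^{-s/2}T^{s}S^{-s/2}, \qquad 0<s\le1,
\end{equation*}
or a variant whose trace against $\rho_{x^n}$ is at most $\Tr[\rho_{x^n}S^{-s}]^{1-s}\Tr[\rho_{x^n}T]^{s}$. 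This is the key single-shot lemma of Beigi--Tomamichel (their quotient-decoder error bound), and we would invoke it in that form. Because the quotient error is concave/operator-monotone in $T$, we may replace $T$ by any operator dominating it. The resulting bound has the shape $\Tr[\rho\,S^{-s}\,\cdot\,]$, into which the averaged interference can be substituted.

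Next we bound $T$. Since $\mC_n\setminus\{x^n\}$ is a $(P,R,\delta_1(n))$-good set for $x^n$, we exploit symmetry under the stabilizer $\mfS_{x^n}$. The state $\rho_{x^n}$ and $S_{x^n}$ are both invariant under $U_\pi$ for $\pi\in\mfS_{x^n}$, the latter by the covariance condition. Hence the error term equals its average over $\pi\in\mfS_{x^n}$, which replaces $T$ by
\begin{equation*}
\bar T=\frac{1}{|\mfS_{x^n}|}\sum_{\pi}U_\pi TU_\pi^\dagger=\sum_{y^n\neq x^n}\Bigl(\frac{M_n-1}{|\mfS_{x^n}|}\sum_{\pi}p_{\mL_n}(\pi(y^n))\Bigr)S_{y^n},
\end{equation*}
where again covariance is used. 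Lemma~\ref{lem: packing} bounds each coefficient by $(M_n-1)e^{-nH(P)+\delta_1(n)}$. Using $|T^n_P|\le e^{nH(P)}$, this gives
\begin{equation*}
\bar T\le (M_n-1)e^{\delta_1(n)}\,\overline S\le (M_n-1)e^{\delta_1(n)}I.
\end{equation*}
Operator monotonicity of $t\mapsto t^s$ for $s\in(0,1]$ then yields $\bar T^{s}\le (M_n-1)^se^{s\delta_1(n)}I$. Substituting into the inequality of the first step gives $(M_n-1)^se^{s\delta_1(n)}\Tr[\rho_{x^n}S^{-s}]$, which is Eq.~\eqref{Eq: quotient error bound}.

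The main obstacle is the symmetrization step. The error functional $T\mapsto \Tr[\rho(I-\frac{S}{S+T})]$ must be concave in $T$, so that averaging over $\pi$ only increases it (Jensen), while the bound $S^{-s/2}T^sS^{-s/2}$ is also concave in $T$. We would prove concavity from the integral representation of $\frac{A}{B}$, in which $I-\frac{S}{S+T}$ involves $D\log$ at $S+T$. If this fails directly, we would instead average the final bound: $T\mapsto\Tr[\rho S^{-s/2}T^sS^{-s/2}]$ is concave by operator concavity of $t^s$, and invariance of $\rho$ and $S$ under $\mfS_{x^n}$ lets us pass the average inside.
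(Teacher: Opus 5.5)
\textbf{There is a genuine gap: your Step 1 is false, and it carries the proof.} The operator inequality $I-\frac{S}{S+T}\le S^{-s/2}T^{s}S^{-s/2}$ fails for every $s\in(0,1]$, so it cannot be invoked as a lemma.

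Here is a counterexample. Take $d=2$, $S=\mathrm{diag}(a,b)$ with $a\neq b$, and $T=\epsilon|v\rangle\langle v|$ with $v=(1,1)/\sqrt2$.
\begin{itemize}
\item The right-hand side is $\epsilon^{s}|w\rangle\langle w|$ with $w=S^{-s/2}v$, which has rank one.
\item The left-hand side is $\frac{T}{S+T}=\epsilon\int_0^\infty|u_\lambda\rangle\langle u_\lambda|\,d\lambda$ with $u_\lambda=(S+T+\lambda I)^{-1}v$. Since $v$ is not an eigenvector of $S+T$, the $u_\lambda$ are not all parallel, so $\frac{T}{S+T}$ has rank two.
\item Hence for $z\perp w$ you get $\langle z|\frac{T}{S+T}|z\rangle>0=\langle z|S^{-s/2}T^sS^{-s/2}|z\rangle$.
\end{itemize}
Choosing $\rho=|z\rangle\langle z|$ breaks the trace form too. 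By continuity, the violation survives $T\mapsto T+\eta I$ for small $\eta>0$, so positive definiteness of $T$ does not rescue it.

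Your fallback ``variant'' also fails. The quotient is invariant under $(S,T)\mapsto(cS,cT)$, but $\Tr[\rho S^{-s}]^{1-s}\Tr[\rho T]^{s}$ scales as $c^{s^2}$ and tends to $0$ as $c\to0$.

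\textbf{The rest of your argument matches the paper.} The stabilizer averaging, Lemma~\ref{lem: packing}, and $\bar T\le(M_n-1)e^{\delta_1(n)}\overline S\le(M_n-1)e^{\delta_1(n)}I$ are exactly what the paper does. The missing idea is to keep the quotient instead of sandwiching:
\begin{itemize}
\item The quotient property $\frac{A}{A+B}\le\frac{A}{B}$ gives $I-\Lambda_m=\frac{T}{S+T}\le\frac{T}{S}$.
\item Since $0\le I-\Lambda_m\le I$, operator monotonicity of $x^s$ gives $I-\Lambda_m\le(I-\Lambda_m)^s\le(\frac{T}{S})^s$.
\item For $\pi\in\mfS_{x^n}$, invariance of $S$ and the covariance $U\frac{A}{B}U^\dagger=\frac{UAU^\dagger}{UBU^\dagger}$ give $U_\pi^\dagger\frac{T}{S}U_\pi=\frac{U_\pi^\dagger TU_\pi}{S}$.
\item The quotient is linear in the numerator and $\rho_{x^n}$ is stabilizer-invariant. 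So operator concavity of $x^s$ (Jensen for the unitary average) yields $\Tr[\rho_{x^n}(\frac{T}{S})^s]\le\Tr[\rho_{x^n}(\frac{\bar T}{S})^s]$.
\item Positivity and linearity in the numerator give $\frac{\bar T}{S}\le(M_n-1)e^{\delta_1(n)}\frac{I}{S}=(M_n-1)e^{\delta_1(n)}S^{-1}$.
\item Operator monotonicity of $x^s$ then produces the factor $S^{-s}$.
\end{itemize}
This route never needs the concavity or monotonicity of $T\mapsto\frac{T}{S+T}$ that you assert without proof.

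Your final formula looks right only because the sandwiched expression and the quotient coincide once $\bar T$ is replaced by a multiple of $I$. The intermediate inequality you rely on is not valid.
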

\end{proboxed}
\begin{proof}

Let us denote $B_m\coloneqq \sum_{x^n\in\mC_n\backslash\qty{x^n(m)}}S_{x^n}$.
Then, it holds that 
\bal
I-\Lambda_m=I-\frac{S_{x^n(m)}}{S_{x^n(m)}+B_m}=\frac{B_m}{S_{x^n(m)}+B_m}\leq \frac{B_m}{S_{x^n(m)}}
\eal
Here, the last inequality is due to the property of the quotient operator.
Since $0\leq I-\Lambda_m\leq I$ and the operator monotonicity of $x\mapsto x^s$ with $0\leq s\leq 1$, we have
\bal
I-\Lambda_m \leq \qty(I-\Lambda_m)^s\leq \qty(\frac{B_m}{S_{x^n(m)}})^s.
\eal

Since $\rho_{x^n(m)}$ and $S_{x^n(m)}$ are invariant under the stabilizer $\mfS_{x^n(m)}$, we have
\bal
1-\Tr[\rho_{x^n(m)}\Lambda_m ]&\leq \Tr\qty[\rho_{x^n(m)}\qty(\frac{B_m}{S_{x^n(m)}})^s]\\
&=\frac{1}{\abs{\mfS_{x^n(m)}}}\sum_{\pi\in \mfS_{x^n(m)}}\Tr\qty[U_\pi\rho_{x^n(m)}U_\pi^\dagger\qty(\frac{B_m}{S_{x^n(m)}})^s]\\
&=\Tr\qty[\rho_{x^n(m)}\frac{1}{\abs{\mfS_{x^n(m)}}}\sum_{\pi\in \mfS_{x^n(m)}}U^\dagger_\pi\qty(\frac{B_m}{S_{x^n(m)}})^s U_\pi]
\eal

Since $x\mapsto x^s$ is operator concave, Jensen's inequality gives
\bal\label{eq: eq so far direct}
\Tr\qty[\rho_{x^n(m)}\frac{1}{\abs{\mfS_{x^n(m)}}}\sum_{\pi\in \mfS_{x^n(m)}}U^\dagger_\pi\qty(\frac{B_m}{S_{x^n(m)}})^s U_\pi]&\leq \Tr\qty[\rho_{x^n(m)}\qty(\frac{1}{\abs{\mfS_{x^n(m)}}}\sum_{\pi\in \mfS_{x^n(m)}}U^\dagger_\pi\frac{B_m}{S_{x^n(m)}} U_\pi)^s]\\
&=\Tr\qty[\rho_{x^n(m)}\qty(\frac{1}{\abs{\mfS_{x^n(m)}}}\sum_{\pi\in \mfS_{x^n(m)}}\frac{U^\dagger_\pi B_m   U_\pi}{S_{x^n(m)}})^s]\\
&=\Tr\qty[\rho_{x^n(m)}\qty(\frac{\overline B_m  }{S_{x^n(m)}})^s]
\eal
where
\bal
 \overline B_m\coloneqq \frac1{\abs{\mfS_{x^n(m)}}}\sum_{\pi\in \mfS_{x^n(m)}}U^\dagger_\pi B_m U_\pi.
\eal
Here, we used the property of the quotient and the invariance of $S_{x^n(m)}$ under the stabilizer $\mfS_{x^n(m)}$.

Now, let us obtain the bound for $\overline B_m$. Utilizing the probability distribution $p_{\mC_n\backslash\qty{x^n(m)}}(y^n)$ on $\mX^n$, defined in Lemma~\ref{lem: packing}, we have
\bal
\overline B_m&=\frac{1}{\abs{\mfS_{x^n(m)}}}\sum_{\pi\in \mfS_{x^n(m)}}\sum_{x^n\in\mC_n\backslash\qty{x^n(m)}}U^\dagger_{\pi}S_{x^n(m)}U_{\pi}\\
&=\frac{M_n-1}{\abs{\mfS_{x^n(m)}}}\sum_{\pi\in \mfS_{x^n(m)}}\sum_{y^n\in T^n_P\backslash\qty{x^n(m)}}p_{\mC_n\backslash\qty{x^n(m)}}(y^n)U^\dagger_{\pi}S_{y^n}U_{\pi}\\
&=\frac{M_n-1}{\abs{\mfS_{x^n(m)}}}\sum_{\pi\in \mfS_{x^n(m)}}\sum_{y^n\in T^n_P\backslash\qty{x^n(m)}}p_{\mC_n\backslash\qty{x^n(m)}}(y^n)S_{\pi^{-1}(y^n)}\\
&=\frac{M_n-1}{\abs{\mfS_{x^n(m)}}}\sum_{\pi\in \mfS_{x^n(m)}}\sum_{y^n\in T^n_P\backslash\qty{x^n(m)}}p_{\mC_n\backslash\qty{x^n(m)}}(\pi(y^n))S_{y^n}.
\eal
Employing Lemma~\ref{lem: packing}, we have
\bal
\frac{M_n-1}{\abs{\mfS_{x^n(m)}}}\sum_{\pi\in \mfS_{x^n(m)}}\sum_{y^n\in T^n_P\backslash\qty{x^n(m)}}p_{\mC_n\backslash\qty{x^n(m)}}(\pi(y^n))S_{y^n}&\leq (M_n-1)\sum_{y^n\in T^n_P\backslash\qty{x^n(m)}}e^{-nH(X)_P+\delta(n)}S_{y^n}\\
&\leq (M_n-1)e^{-nH(X)_P+\delta(n)}\abs{T^n_P}\overline S\\
&\leq (M_n-1)e^{\delta_1(n)}\overline S.
\eal
In the last inequality, we used $\abs{T^n_P}\leq e^{nH(X)_P}$.
Combining the discussions above, we have 
\bal
\overline B_m\leq (M_n-1)e^{\delta_1(n)}\overline S \leq  (M_n-1)e^{\delta_1(n)}I.
\eal
Here, the last inequality follows from the assumption $\|\overline S\|\leq 1$.
Since the quotient preserves the Löewner order of the numerator, it follows that
\bal
 \frac{\overline B_m}{S_x}\leq  (M-1) e^{\delta_1(n)}\frac{I}{S_x}
 =(M-1) e^{\delta_1(n)}S_x^{-1}.
\eal
Plugging this into Eq.~\eqref{eq: eq so far direct} and noting that $x\mapsto x^s$ for $s\in(0,1]$ is operator monotone, we reach the claim.
\end{proof}

\subsubsection{Measured Rényi Augustin information and its duality}
Next, we show that, for a family of positive-definite operators depending on the channel, we can bound the decoding error probability by the measured Rényi Augustin information.
To this end, we first exhibit the variational form of the measured Rényi Augustin information.

\begin{thmboxed}
\begin{thm}[Dual form of the measured Augustin information]\label{thm: measured-augustin duality}
Let $\alpha\in[1/2,1)$. For every finite ensemble, including ensembles with singular states,
\begin{equation}
 \exp\qty[-\frac{\alpha-1}{\alpha}I_\alpha^{{\rm Aug},\mbM}(P;W)]
 =\inf_{Y_x\geq0, \forall x\in\mX} \qty(\prod_x\qty(\Tr[\rho_xY_x^{-\frac{\alpha-1}{\alpha}}])^{P(x)})
 \left\|\sum_xP(x)Y_x\right\|^{\frac{\alpha-1}{\alpha}}.
 \label{eq:score-duality}
\end{equation}
\end{thm}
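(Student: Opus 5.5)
The plan is to start from the definition of $I^{{\rm Aug},\mbM}_\alpha$ and the variational formula \eqref{eq:variational measured renyi} of Ref.~\cite{Berta_2017_on_variational}, and then reduce the claim to a minimax exchange between the state $\sigma$ and the families $\qty{Y_x}_x$.

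\textbf{Step 1: rewrite the left-hand side.} Set $\beta\coloneqq\frac{1-\alpha}{\alpha}\in(0,1]$. Since $\mbD_\alpha=\frac{1}{\alpha-1}\log Q^{\mbM}_\alpha$, we have $\beta\,D^{\mbM}_\alpha(\rho_x\|\sigma)=-\frac1\alpha\log Q^{\mbM}_\alpha(\rho_x\|\sigma)$. Hence the left-hand side of \eqref{eq:score-duality} equals
\begin{equation*}
\exp\qty[\beta I^{{\rm Aug},\mbM}_\alpha]=\inf_{\sigma}\prod_x Q^{\mbM}_\alpha(\rho_x\|\sigma)^{-P(x)/\alpha}
=\Bigl(\sup_{\sigma}\prod_x \bigl(Q^{\mbM}_\alpha(\rho_x\|\sigma)^{1/\alpha}\bigr)^{P(x)}\Bigr)^{-1}.
\end{equation*}

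\textbf{Step 2: insert the variational form.} Taking the $1/\alpha$-th power of the product form in \eqref{eq:variational measured renyi} gives
\begin{equation*}
Q^{\mbM}_\alpha(\rho\|\sigma)^{1/\alpha}=\inf_{Y>0}\Tr[\rho Y^{-\frac{\alpha-1}{\alpha}}]\,(\Tr\sigma Y)^{\beta}.
\end{equation*}
This expression is invariant under the rescaling $Y\mapsto cY$ up to the factor $c^{\beta}\cdot c^{-\beta}$. I will double-check the sign of the exponent so that the two factors scale oppositely. If they do not, I will substitute $Y\mapsto Y^{-1}$ and use the homogeneous form instead.

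Substituting into Step 1 turns the left-hand side into the reciprocal of
\begin{equation*}
\sup_\sigma\inf_{\qty{Y_x}}\prod_x\Tr[\rho_xY_x^{-\frac{\alpha-1}{\alpha}}]^{P(x)}\,\prod_x(\Tr\sigma Y_x)^{\beta P(x)} .
\end{equation*}

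\textbf{Step 3: exchange sup and inf.} This is the main obstacle. I will take logarithms. As a function of $\sigma$, the objective is $\sum_xP(x)\beta\log\Tr\sigma Y_x$ plus a constant, which is concave, and $\mD(\mH)$ is compact. In the $Y$ variables the objective is not obviously convex. To handle this, I will first use the per-$x$ scale invariance to normalise $\Tr\sigma Y_x$. Alternatively, I will switch to the additive form in the second line of \eqref{eq:variational measured renyi}, written as $\alpha\Tr[\rho Y^{\beta}]+(1-\alpha)\Tr\sigma Y$ with the appropriate sign convention. That form is convex in $Y$ (operator convexity or concavity of the power $t^{\pm\beta}$ for $\beta\le1$ is exactly where the restriction $\alpha\ge1/2$ enters) and affine in $\sigma$. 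Sion's theorem then applies with $\sigma$ ranging over the compact set.

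\textbf{Step 4: evaluate the supremum over $\sigma$ at fixed $\qty{Y_x}$.} By weighted AM--GM,
\begin{equation*}
\prod_x(\Tr\sigma Y_x)^{P(x)}\le\Tr\Bigl[\sigma\sum_xP(x)Y_x\Bigr]\le\Bigl\|\sum_xP(x)Y_x\Bigr\|.
\end{equation*}
Because each term of the product is invariant under $Y_x\mapsto c_xY_x$, the constants $c_x$ can be chosen so that all $\Tr\sigma Y_x$ coincide. With that choice AM--GM is tight, so after optimising over the rescalings the $\sigma$-supremum produces exactly $\|\sum_xP(x)Y_x\|^{\beta}$.

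\textbf{Step 5: return to the stated form.} Taking the reciprocal converts the resulting $\inf$ over $Y$ into the right-hand side of \eqref{eq:score-duality}, with the exponent $\frac{\alpha-1}{\alpha}$ on the norm, possibly after the substitution $Y_x\mapsto Y_x^{-1}$.

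\textbf{Singular states.} For ensembles with singular $\rho_x$, I will pass to the limit $Y_x+\epsilon I$ and use lower semicontinuity. This is also why the statement allows $Y_x\ge0$ rather than $Y_x>0$.
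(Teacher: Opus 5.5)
There is a genuine gap: the exchange in Step~3 is exactly the hard direction of the theorem, and you give no valid argument for it. Write $F\coloneqq\inf_{\{Y_x\}}\prod_x(\Tr[\rho_xY_x^{-\beta}])^{P(x)}\|\sum_xP(x)Y_x\|^{\beta}$ for what Steps~2--4 aim at. Here \eqref{eq:variational measured renyi} is read with $Y^{\frac{\alpha-1}{\alpha}}$, as your scaling check rightly suggests.

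\textbf{The exchange in Step~3.} Weak duality $\sup_\sigma\inf_{\{Y_x\}}\le\inf_{\{Y_x\}}\sup_\sigma$, combined with the easy half of Step~4 ($\prod_x(\Tr\sigma Y_x)^{P(x)}\le\|\sum_xP(x)Y_x\|$), only gives $F\ge e^{-\beta I^{{\rm Aug},\mbM}_\alpha}$, which is immediate anyway. The nontrivial direction is $F\le e^{-\beta I^{{\rm Aug},\mbM}_\alpha}$, i.e.\ the existence of a near-optimal family $\{Y_x\}$, which is what Proposition~\ref{pro: product-score} uses. It needs the reverse exchange. It also needs the reverse of your AM--GM step: equalizing the values $\Tr\sigma Y_x$ at one $\sigma$ is not enough, and you need first-order optimality at the maximizing $\sigma$ to get $\Tr[\sigma\sum_xP(x)c_xY_x]=\|\sum_xP(x)c_xY_x\|$.

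Your objective is a $P$-weighted geometric mean over $x$ of per-letter factors coupled through one $\sigma$. In general it is not quasi-convex in $\{Y_x\}$, so Sion does not apply.
\begin{itemize}
\item A $\sigma$-dependent normalization of $\Tr\sigma Y_x$ cannot be used inside a minimax.
\item The additive form does not help, because a product of convex functions of independent variables need not be quasi-convex. At $\alpha=1/2$, two equiprobable letters and $d=1$, it becomes $\frac14(y_1+y_1^{-1})(y_2+y_2^{-1})$, whose sublevel sets are not convex.
\end{itemize}

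The paper avoids this by putting the minimax somewhere else.
\begin{itemize}
\item Starting from the right-hand side, it normalizes to $\sum_xP(x)Y_x\le I$.
\item It linearizes the geometric mean via $\prod_xa_x^{P(x)}=\min_{\prod_xc_x^{P(x)}=1}\sum_xP(x)c_xa_x$. This minimum commutes trivially with $\inf_{\{Y_x\}}$.
\item For fixed $c$ the problem in $\{Y_x\}$ is convex, by operator convexity of $t\mapsto t^{-\beta}$ for $\beta\le1$ (this is where $\alpha\ge1/2$ enters), and Slater holds, so Lagrange duality applies. The multiplier $Z=t\sigma$ of the norm constraint is where $\sigma$ comes from. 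The per-letter infima become $C_\alpha c_x^\alpha Q^{\mbM}_\alpha(\rho_x\|Z)$.
\item Sion is applied only to $f(v,\sigma)=\sum_xP(x)e^{\alpha v_x}Q^{\mbM}_\alpha(\rho_x\|\sigma)$, which is convex in the log-multipliers $v$ and concave in $\sigma$.
\end{itemize}

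\textbf{Step~5 and the target.} Once completed, Steps~1--4 give $e^{\beta I^{{\rm Aug},\mbM}_\alpha}=1/F$. The reciprocal of an infimum is a supremum, and no substitution $Y_x\mapsto Y_x^{-1}$ turns $1/F$ into the infimum printed in \eqref{eq:score-duality}.

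In fact that display has its signs flipped and is false as printed. Take all $\rho_x=\rho$ and $d\ge2$. Its left-hand side is $1$, while $Y_x=|v\rangle\langle v|$, with $v$ a minimal eigenvector of $\rho$, makes its right-hand side at most $\lambda_{\min}(\rho)\le1/d$. What the paper's argument actually proves, and what Proposition~\ref{pro: product-score} uses, is $F=\exp[\frac{\alpha-1}{\alpha}I^{{\rm Aug},\mbM}_\alpha(P;W)]$. Take that as your goal rather than hedging toward the printed form.

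\textbf{Singular states.} For singular ensembles the paper perturbs the states, $\rho_x\mapsto(1-\ve)\rho_x+\ve I/d$, and shows both sides converge as $\ve\to0$. Perturbing $Y_x$ only relates $\inf_{Y_x\ge0}$ to $\inf_{Y_x>0}$; it does not extend the identity to singular $\rho_x$.
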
 
\end{thmboxed}

\begin{proof}
In the following, let us denote $s\coloneqq (\alpha-1)/\alpha$.
Letters with $P(x)=0$ do not affect either side, so we consider $x\in\mX$ with $P(x)>0$. We first suppose that every $\rho_x$ is strictly positive and denote the right-hand side of Eq.~\eqref{eq:score-duality} by $F$.

Common rescaling $Y_x\mapsto tY_x$ leaves $F$ invariant.  If
$r=\norm{\sum_xP(x)Y_x}$, rescaling by $r^{-1}$ makes the norm equal to one, and the resulting product of trace factors equals the original scale-invariant objective.  Conversely, for every family satisfying
$\sum_xP(x)Y_x\leq I$, the original objective is no larger than its product of trace factors; if the norm is strictly smaller than one, scaling up to norm one makes these two quantities equal.  These two observations give
\begin{equation}
 F=\inf_{\substack{Y_x\geq 0\\\sum_xP(x)Y_x\leq I}}\prod_x\qty(\Tr[\rho_xY_x^{-s}])^{P(x)}.
 \label{eq:F-normalized}
\end{equation}
For positive scalars $a_x$ and $c_1,\ldots, c_{\aX}>0$ satisfying $\prod_xc_x^{P(x)}=1$, weighted AM--GM gives
\bal
\sum_xP(x)c_xa_x\geq \prod_xc_x^{P(x)}\prod_xa_x^{P(x)}=\prod_xa_x^{P(x)}.
\eal
The equality is attained if $c_xa_x=\prod_xa_x^{P(x)}$ holds.
From the observation above, we have the exact identity
\begin{equation}
 \prod_xa_x^{P(x)}
 =
 \min_{\substack{c_x>0\\\prod_xc_x^{P(x)}=1}}
 \sum_xP(x)c_xa_x
 \label{eq:weighted-amgm}
\end{equation}
with equality when $c_x=(\prod_ya_y^{P(y)})/a_x$ holds. From this, the optimization of $F$ is reduced to 
\bal
F&=\inf_{\substack{Y_x\geq 0\\\sum_xP(x)Y_x\leq I}}\prod_x\qty(\Tr[\rho_xY_x^{-s}])^{P(x)}\\
&=\inf_{\substack{Y_x\geq 0\\\sum_xP(x)Y_x\leq I}} \min_{\substack{c_x>0\\\prod_xc_x^{P(x)}=1}}\sum_xP(x)c_x\Tr[\rho_xY_x^{-s}]\\
&=\min_{\substack{c_x>0\\\prod_xc_x^{P(x)}=1}}\inf_{\substack{Y_x\geq 0\\\sum_xP(x)Y_x\leq I}}\sum_xP(x)c_x\Tr[\rho_xY_x^{-s}].
\eal

Fix $\Vec{c}\coloneqq (c_1,\ldots, c_\aX)$, and we focus the following optimization:
\bal
 \phi(\Vec{c})&\coloneqq \inf_{\substack{Y_x>0\\\sum_xP(x)Y_x\leq I}}\sum_xP(x)c_x\Tr[\rho_xY_x^{-s}]
\eal
Since $t\mapsto t^{-s}$ is operator convex for $0<s\le1$~\cite[Chapter~V]{bhatia_matrix_1997}, the optimization over the $Y_x$ is convex.  The constraint implies
$0\leq Y_x\leq P(x)^{-1}I$.  Because $\rho_x>0$, the objective diverges when an eigenvalue of some $Y_x$ tends to zero; consequently, the infimum is attained in the interior of a compact subset.  The strictly feasible choice $Y_x=\eta I$, $0<\eta<1$, verifies Slater's condition.  Strong finite-dimensional Lagrange duality therefore gives, with $Z\geq 0$,
\begin{align}
 \phi(\Vec{c})
 &\coloneqq \inf_{\substack{Y_x>0\\\sum_xQ(x)Y_x\leq I}}
 \sum_xQ(x)c_x\Tr[\rho_xY_x^{-s}]
 \notag\\
 &=\sup_{Z\geq 0}
 \left\{-\Tr Z+\sum_xP(x)
 \inf_{Y>0}
 \bigl(c_x\Tr[\rho_xY^{-s}]+\Tr[ZY]\bigr)\right\}.
 \label{eq:lagrange}
\end{align}
Now, let us rescale $Y$ as $Y= tY'$. Since taking the infimum over $Y>0$ is equivalent to taking the infimum over $t>0$ and $Y'>0$, we have
\bal
\inf_{Y>0}\qty(c_x\Tr[\rho_xY^{-s}]+\Tr[ZY])&=\inf_{Y'>0}\inf_{t>0}\qty(c_xt^{-s}\Tr[\rho_xY'^{-s}]+t\Tr[ZY']).\\
\eal
Since the infimum over $t$ for a fixed $Y'$ is attained with $t^*=(sc\Tr[\rho_xY'^{-s}]/\Tr[ZY'])^{1/(s+1)}$, substituting it yields
\bal
\inf_{Y'>0}\inf_{t>0}\qty(c_xt^{-s}\Tr[\rho_xY'^{-s}]+t\Tr[ZY'])&=\inf_{Y'>0}\alpha^{-\alpha}(1-\alpha)^{-(1-\alpha)}c_x^\alpha\Tr[\rho_xY'^{-s}]^\alpha\Tr[ZY']^{(1-\alpha)}.
\eal
Due to the variational form of the measured quasi-entropy in Eq.~\eqref{eq:variational measured renyi}, we have
\bal
\inf_{Y'>0}\alpha^{-\alpha}(1-\alpha)^{-(1-\alpha)}c_x^\alpha\Tr[\rho_xY'^{-s}]^\alpha\Tr[ZY']^{(1-\alpha)}=C_\alpha c_x^\alpha Q_\alpha^{\mbM}(\rho_x\Vert Z), \qquad C_\alpha\coloneqq \alpha^{-\alpha}(1-\alpha)^{-(1-\alpha)}.
\eal

Writing $Z=t\sigma$ with $\sigma\in\mD(\mH)$ and using
$Q_\alpha^{\mbM}(\rho_x\Vert t\sigma)=t^{1-\alpha}Q_\alpha^{\mbM}(\rho_x\Vert\sigma)$, we have
\bal
  \phi(\Vec{c})&=\sup_{Z\geq 0}\qty{-\Tr Z+\sum_xP(x)C_\alpha c_x^\alpha Q_\alpha^{\mbM}(\rho_x\Vert Z)}\\
  &=\sup_{\sigma\in\mD(\mH)}\sup_{t\geq 0}\qty{-t+t^{1-\alpha}\sum_xP(x)C_\alpha c_x^\alpha Q_\alpha^{\mbM}(\rho_x\Vert \sigma)}.
\eal
If we explicitly calculate the supremum over $t$, the constant $C_\alpha$ cancels out, and we have
\begin{equation}
 \phi(\Vec{c})
 =\max_{\sigma\in\mD(\mH)}
 \left(
   \sum_xP(x)c_x^\alpha Q_\alpha^{\mbM}(\rho_x\Vert\sigma)
 \right)^{1/\alpha}.
 \label{eq:phi}
\end{equation}

Set $c_x=e^{v_x}$ with $v_x\in\mbR$. Then, the constraint on $c_x$ yields $\sum_xP(x)v_x=0$. The function
\bal
 f(v,\sigma)\coloneqq \sum_xP(x)e^{\alpha v_x}
 Q_\alpha^{\mbM}(\rho_x\Vert\sigma)
\eal
is continuous and convex in $v$.  For a fixed $v$, it is concave and upper semicontinuous in $\sigma$: by Eq.~\eqref{eq:variational measured renyi}, each measured quasi-entropy is an infimum of affine continuous functions of $\sigma$.  The state space is compact and convex, while the affine hyperplane for $v$ is convex.  Sion's minimax theorem~\cite{sion_1958_general_minimax_theorems} therefore applies and gives
\begin{align}
 F^\alpha
 &=\inf_{\substack{v\in\mathbb R^{\mX}\\\sum_xP(x)v_x=0}}\max_{\sigma\in\mD(\mH)} f(v,\sigma)
 \notag\\
 &=\max_{\sigma\in\mD(\mH)}
 \inf_{\substack{v\in\mathbb R^{\mX}\\\sum_xP(x)v_x=0}}
 f(v,\sigma)
 \notag\\
 &=\max_{\sigma\in\mD(\mH)}
 \prod_xQ_\alpha^{\mbM}(\rho_x\Vert\sigma)^{P(x)}.
 \label{eq:minimax}
\end{align}
Here, the last equality is due to Eq.~\eqref{eq:weighted-amgm} and $c_x=e^{\alpha v_x}$.  Since
$Q_\alpha^{\mbM}(\rho\Vert\sigma)=
\exp[(\alpha-1)D_\alpha^{\mbM}(\rho\Vert\sigma)]$ and
$\alpha-1<0$, it holds that
\bal \label{eq:quasi-radius}
\max_{\sigma\in\mD(\mH)}\prod_xQ_\alpha^{\mbM}(\rho_x\Vert\sigma)^{P(x)} &=\exp\qty[(\alpha-1)\min_{\sigma\in\mD(\mH)}\sum_{x\in\mX}P(x)D^\mbM(\rho_x\|\sigma)]\\
 &=\exp\qty[(\alpha-1)I^{{\rm Aug}, \mbM}_\alpha(P;W)].
\eal
Taking the $1/\alpha$ power proves Eq.~\eqref{eq:score-duality} for full-rank states.

Let us discuss the general scenario where $\rho_x$ is not necessarily full-rank.
For arbitrary states, put $\rho_x^{(\varepsilon)}=(1-\ve)\rho_x+\ve I/d$.
It suffices to show that the limits as $ \varepsilon\to 0$ on both sides coincide with the quantities defined without the perturbation.
Let $F_\varepsilon$ be the right-hand side of Eq.~\eqref{eq:score-duality} with the states $\qty{\rho_x^{(\varepsilon)}}_x$. Then, it holds that
\bal
 \Tr[\rho_x^{(\varepsilon)}Y_x^{-s}]
 \ge(1-\varepsilon)\Tr[\rho_xY_x^{-s}],
\eal
so $F_\varepsilon\ge(1-\varepsilon)F$, which means that we have $\liminf_{\ve\to 0}F_\ve\geq F$. 
On the other hand, fix $\forall\delta>0$. Then, there exists a family $\qty{Y^\delta_x}_{x\in\mX}$ of operators such that 
\bal
 \qty(\prod_x\qty(\Tr[\rho_x(Y^\delta_x)^{-\frac{\alpha-1}{\alpha}}])^{P(x)})
 \left\|\sum_xP(x)Y_x^\delta\right\|^{\frac{\alpha-1}{\alpha}}\leq F+\delta.
\eal
Then, it holds that
\bal
\Tr[\rho_x^{(\varepsilon)}(Y_x^{\delta})^{-\frac{\alpha-1}{\alpha}}]
&=(1-\varepsilon)
\Tr[\rho_x(Y_x^{\delta})^{-\frac{\alpha-1}{\alpha}}]
+\frac{\varepsilon}{d}\Tr[(Y_x^{\delta})^{-\frac{\alpha-1}{\alpha}}]\\
&\xrightarrow{\ve\to +0} \Tr[\rho_x(Y_x^{\delta})^{-\frac{\alpha-1}{\alpha}}], 
\eal
implying that
\bal
\limsup_{\ve\to 0}\qty(\prod_x\qty(\Tr[\rho_x^{(\varepsilon)}(Y^\delta_x)^{-\frac{\alpha-1}{\alpha}}])^{P(x)})
 \left\|\sum_xP(x)Y_x^\delta\right\|^{\frac{\alpha-1}{\alpha}}=\qty(\prod_x\qty(\Tr[\rho_x(Y^\delta_x)^{-\frac{\alpha-1}{\alpha}}])^{P(x)})
 \left\|\sum_xP(x)Y_x^\delta\right\|^{\frac{\alpha-1}{\alpha}}.
\eal
Since $F_\ve$ is defined as the infimimum over $\qty{Y_x}_{x\in\mX}$, we have $\limsup_{\ve\to 0}\leq F_0$. Combining these, we have $\lim_{\ve \to +0}F_\ve=F_0$.

For the quasi-radius in Eq.~\eqref{eq:quasi-radius}, the variational formula
Eq.~\eqref{eq:variational measured renyi} shows that
$Q_\alpha^{\mbM}$ is jointly upper semicontinuous and monotone in its first argument in Loewner order.  Therefore the maximum in
Eq.~\eqref{eq:quasi-radius} is attained and its value, denoted $R_\varepsilon$, satisfies
$R_\varepsilon\ge(1-\varepsilon)^\alpha R_0$.  If
$\sigma_\varepsilon$ is a maximizer, compactness gives a convergent subsequence
$\sigma_{\varepsilon_j}\to\sigma_\natural$; joint upper semicontinuity then gives
\bal
 \limsup_{j\to\infty}R_{\varepsilon_j}
 \le
 \prod_uQ_\alpha^{\mbM}(\rho_u\Vert\sigma_\natural)^{Q(u)}
 \le R_0.
\eal
Thus $R_\varepsilon\to R_0$.  Passing to the limit in the full-rank identity completes the proof.
\end{proof}

The family $\qty{Y_x}_{x\in\mX}$ of positive semidefinite matrices appearing in the optimization of Theorem~\ref{thm: measured-augustin duality} turns out to be a potential candidate for the family of operators that forms a decoder in Proposition~\ref{pro: quotient error bound} and gives us the bound with respect to the measured Rényi Augustin information.

\begin{proboxed}
    \begin{pro}[Upper bound by the measured Rényi Augustin information]
\label{pro: product-score}
Fix $\alpha\in[1/2,1)$ and $s=(1-\alpha)/\alpha$.  Suppose that a family of positive semidefinite matrices $\qty{Y_x}_{x\in\mX}$ satisfies
\begin{equation}
   \qty(\prod_x\qty(\Tr[\rho_xY_x^{-s}])^{P(x)})\left\|\sum_xP(x)Y_x\right\|^s\leq
 e^\ve
 \exp\bigl[-sI_\alpha^{{\rm Aug},\mbM}(P;W)\bigr].
 \label{eq:near-score}
\end{equation}
For a string $x^n\in T^n_P$ with type $P$, we define
\bal
 T_{\mathbf Y}(x^n)\coloneqq \bigotimes_{i=1}^NY_{x_i},
 \qquad
 \overline T_{\mathbf Y}\coloneqq 
 \frac1{\abs{T_P^n}}
 \sum_{x^n\in T_P^n}T_{\mathbf Y}(x^n),\qquad\widehat T_{\mathbf Y}(x^n)\coloneqq 
 \frac{T_{\mathbf Y}(x^n)}{\norm{\overline T_{\mathbf Y}}}.
\eal
Then, with $p_{n,P}\coloneqq P^{\otimes n}(T_P^n)$,
\begin{equation}
 \Tr[\rho_{x^n}\widehat T_{\mathbf Y}(x^n)^{-s}]
 \le
 p_{n,P}^{-s}e^{n\ve}
 \exp\bigl[-nsI_\alpha^{{\rm Aug},\mbM}(P;W)\bigr]
 \label{eq:product-score-bound}
\end{equation}
for every $x^n\in T_P^n$.
\end{pro}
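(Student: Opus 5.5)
The bound follows from the tensor-product structure of $T_{\mathbf Y}(x^n)$ together with one operator inequality for $\overline T_{\mathbf Y}$. The trace factor is exactly multiplicative, and the normalization $\norm{\overline T_{\mathbf Y}}$ is controlled by $\norm{\sum_xP(x)Y_x}^n$ at the cost of the factor $p_{n,P}^{-1}$. Throughout, I take the $Y_x$ to be positive definite, so that negative powers are well defined. This is the case in which the hypothesis \eqref{eq:near-score} has a finite left-hand side for all states; otherwise one inverts on supports.

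\emph{Step 1: factorize the trace.} Since $t\mapsto t^{-s}$ acts factorwise on tensor products, $T_{\mathbf Y}(x^n)^{-s}=\bigotimes_i Y_{x_i}^{-s}$. Using $\rho_{x^n}=\bigotimes_i\rho_{x_i}$ and the fact that $x^n\in T^n_P$ contains each letter $x$ exactly $nP(x)$ times,
\begin{equation*}
\Tr\bigl[\rho_{x^n}T_{\mathbf Y}(x^n)^{-s}\bigr]=\prod_{i=1}^n\Tr[\rho_{x_i}Y_{x_i}^{-s}]=\Bigl(\prod_x\bigl(\Tr[\rho_xY_x^{-s}]\bigr)^{P(x)}\Bigr)^{n}.
\end{equation*}
Scaling by the positive constant $c=\norm{\overline T_{\mathbf Y}}$ gives $\widehat T_{\mathbf Y}(x^n)^{-s}=c^{s}\,T_{\mathbf Y}(x^n)^{-s}$.

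\emph{Step 2: bound the normalization.} This is the only nontrivial step. Expand the tensor power:
\begin{equation*}
\Bigl(\sum_xP(x)Y_x\Bigr)^{\otimes n}=\sum_{y^n\in\mX^n}P^{\otimes n}(y^n)\,T_{\mathbf Y}(y^n)\geq\sum_{y^n\in T^n_P}P^{\otimes n}(y^n)\,T_{\mathbf Y}(y^n).
\end{equation*}
The inequality holds because every summand is positive semidefinite, so dropping the terms outside $T^n_P$ can only decrease the sum.

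On $T^n_P$ the weight $P^{\otimes n}(y^n)$ is constant and equal to $p_{n,P}/\abs{T^n_P}$. Hence the right-hand side equals $p_{n,P}\,\overline T_{\mathbf Y}$. Since the operator norm is monotone in Loewner order on positive operators and is multiplicative on tensor powers,
\begin{equation*}
\norm{\overline T_{\mathbf Y}}\leq p_{n,P}^{-1}\Bigl\|\sum_xP(x)Y_x\Bigr\|^n.
\end{equation*}

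\emph{Step 3: combine.} Raising the bound of Step 2 to the power $s>0$ and multiplying by the factorized trace of Step 1 gives
\begin{equation*}
\Tr[\rho_{x^n}\widehat T_{\mathbf Y}(x^n)^{-s}]\leq p_{n,P}^{-s}\Bigl[\Bigl(\prod_x(\Tr[\rho_xY_x^{-s}])^{P(x)}\Bigr)\Bigl\|\sum_xP(x)Y_x\Bigr\|^{s}\Bigr]^n.
\end{equation*}
The bracket is at most $e^{\ve}\exp[-sI^{{\rm Aug},\mbM}_\alpha(P;W)]$ by the hypothesis \eqref{eq:near-score}. Raising this to the $n$-th power yields \eqref{eq:product-score-bound}.

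Step 2 is the only place where the type-class restriction costs anything. That cost is the factor $p_{n,P}^{-s}$, which is only polynomial in $n$ and therefore does not affect the error exponent.
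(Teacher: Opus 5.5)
Your proposal is correct and follows the paper's proof essentially step for step. You factorize the trace over the type class, bound $\|\overline T_{\mathbf Y}\|\le p_{n,P}^{-1}\|\sum_x P(x)Y_x\|^n$ by expanding the tensor power and dropping the positive terms outside $T^n_P$, and combine the two with the hypothesis. You also spell out the identity $P^{\otimes n}(y^n)=p_{n,P}/|T^n_P|$ on the type class and the positive-definiteness of the $Y_x$ needed for negative powers, both of which the paper leaves implicit.
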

\end{proboxed}

\begin{proof}
The type constraint gives
\bal
\Tr[\rho_{x^n}T_{\mathbf Y}(x^n)^{-s}]
 =\prod_x\bigl(\Tr[\rho_xY_x^{-s}]\bigr)^{nP(x)}.
\eal
On the other hand, expanding the tensor product of $(\sum_xP(x)Y_x)^{\otimes n}$ gives
\bal
\left(\sum_xP(x)Y_x\right)^{\otimes n}\geq  p_{n,P}\overline T_{\mathbf Y},
\eal
which yields
\bal
\left\|\overline T_{\mathbf Y}\right\|
 \le p_{N,Q}^{-1}
 \left\|\sum_xP(x)Y_x\right\|^n.
\eal

Multiplying these two and using Eq.~\eqref{eq:near-score} proves the claim.
\end{proof}

\subsubsection{The universal family of positive operators}
Even though one can upper-bound the decoding error probability by constructing the quotient decoder from the family $\qty{\widehat{T}_{\bf Y}(x^n)}_{x^n\in T^n_P}$ of positive definite operators, the choice of the family $\qty{Y_x}_{x\in\mX}$ generally depends on the details of the given classical-quantum channel $W$, which can be observed from Eq.~\eqref{eq:score-duality}. In the following, we show that there exists a family of operators that universally upper-bounds $\qty{\widehat{T}_{\bf Y}(x^n)}_{x^n\in T^n_P}$ up to a polynomial factor, using Schur-Weyl duality, a representation-theoretic tool.

For a blocklength $n$ and a fixed $n$-type $P$, we take a canonical string $x^n_P$ as
\bal
x^n_P\coloneqq 1^{m_1}2^{m_2}\cdots \abs{\mX}^{m_\abs{\mX}},
\eal
where $m_x\coloneqq nP(x)$. In the following, we only consider the symbol $x\in\mX$ with $P(x)>0$.
Let $H_P$ denote the subgroup of the symmetric group $\mfS_n$ that stabilizes the canonical string $x^n_P$. $H_P$ is written as
\bal
H_P&\coloneqq\qty{\pi\in\mfS_n~|~\pi(x^n_P)=x^n_P}\\
&=\prod_{x\in\mX}\mfS_{m_x}.
\eal

For each $x$, Schur--Weyl duality gives
\begin{equation}
 \mH^{\otimes m_x}=\bigoplus_{\lambda_x\in Y_d^{m_x}}\mW_{\lambda_x}\otimes\mU_{\lambda_x},
 \label{eq:schur-block}
\end{equation}
where $\mW_{\lambda_x}$ is irreducible representation of the $GL(d)$, and $\mU_{\lambda_x}$ is the irreducible representation of the symmetric group $\mfS_{m_x}$. 
Consequently, the $H_P$-commutant $\mA_P$, the algebra of operators commutative with the action of $H_Q$, is decomposed as 
\begin{equation}
 \mA_P=\bigoplus_{\Vec{\lambda}}\mB\qty(\mW_{\vlam})\otimes I_{\mU_{\vlam}},
 \label{eq:commutant}
\end{equation}
where $\vlam\coloneqq (\lambda_{m_1},\ldots, \lambda_{m_\aX})$ is the tuple of the young tableau, and $\mW_{\vlam}$ and $\mU_{\vlam}$ are the tensor products of each irreps.~defined as
\bal
\mW_{\vlam}\coloneqq \bigotimes_{x\in\mX}\mW_{\lambda_x}, \qquad   \mU_{\vlam}\coloneqq \bigotimes_{x\in\mX}\mU_{\lambda_x}.
\eal
An element $A$ of $\mA_P$ has the reduced block form
\bal
A=\bigoplus_{\vlam}A_\vlam\otimes I_{\mU\vlam}.
\eal

The following shows that there exists a universal family $\qty{\Sigma_{x^n}}_{x^n\in T_P^n}$ of operators which can dominate any family $\qty{\widehat T_{\mathbf Y}(x^n)}_{x^n}$ in Proposition~\ref{pro: product-score} up to a polynomial factor, and thus gives us the universal decoder.
\begin{thmboxed}
    \begin{thm}[Universal family of operators for the universal decoder]
\label{thm:schur-universal}
There is a family $\qty{\Sigma_{x^n}}_{x^n\in T_P^n}$ of positive definite operators depending only on $n,P,d$, satisfying the permutation covariance $U_{\pi}\Sigma_{x^n}U^\dagger_{\pi}=\Sigma_{\pi(x^n)}$, the normalization condition,
\begin{equation}
\left\|\overline\Sigma\right \|\leq1,\qquad \overline\Sigma\coloneqq \frac1{\abs{T_P^n}}
 \sum_{x^n\in T_P^n}\Sigma_{x^n},
 \label{eq:sigma-mean}
\end{equation}
and, for every family of positive operators $\mathbf Y=\qty{Y_x}_{x\in\mX}$,
\begin{equation}
 \widehat T_{\mathbf Y}(x^n)\leq  (n+1)^{\aX(d+2)(d-1)/2}\Sigma_{x^n}, \qquad \forall x^n\in T^n_P.
 \label{eq:universal-domination}
\end{equation}
\end{thm}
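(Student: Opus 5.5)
The plan is to build $\Sigma_{x^n}$ on the canonical string $x^n_P$ out of the block structure of the commutant $\mA_P$ in Eq.~\eqref{eq:commutant}, and then transport it to all of $T^n_P$ by permutations. The exponent $\aX(d+2)(d-1)/2$ tells us which two polynomial factors to spend, separately for each letter $x$:
\begin{itemize}
\item at most $(m_x+1)^{d-1}$ Young diagrams $\lambda_x\in Y^{m_x}_d$;
\item $\dim\mW_{\lambda_x}\le (m_x+1)^{d(d-1)/2}$.
\end{itemize}
Their product is $(m_x+1)^{(d+2)(d-1)/2}\le(n+1)^{(d+2)(d-1)/2}$ per letter. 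So the target is: dominate each block at the cost of $\dim\mW_{\vlam}$, and pay the number of blocks when reassembling.

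\textbf{Step 1: set-up on the canonical string.} Since $T_{\mathbf Y}(x^n_P)=\bigotimes_x Y_x^{\otimes m_x}$ commutes with $H_P$, it lies in $\mA_P$. It has the block form $\bigoplus_{\vlam}\bigl(\bigotimes_x \pi_{\lambda_x}(Y_x)\bigr)\otimes I_{\mU_{\vlam}}$, where $\pi_{\lambda_x}$ is the $GL(d)$ irrep. Hence
\[
T_{\mathbf Y}(x^n_P)\le\bigoplus_{\vlam}\Bigl\|\bigotimes_x\pi_{\lambda_x}(Y_x)\Bigr\|\,\Pi_{\vlam},
\]
with $\Pi_{\vlam}$ the isotypic projector.

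\textbf{Step 2: define $\Sigma$.} Set
\[
a_{\vlam}\coloneqq\sup_{\mathbf Y}\frac{\|\bigotimes_x\pi_{\lambda_x}(Y_x)\|}{\|\overline T_{\mathbf Y}\|},
\qquad
\Sigma_{x^n_P}\coloneqq c_n\bigoplus_{\vlam}a_{\vlam}\Pi_{\vlam}+\epsilon_n I,
\]
where the tiny $\epsilon_n$ ensures positive definiteness. Then put $\Sigma_{\pi(x^n_P)}\coloneqq U_\pi\Sigma_{x^n_P}U_\pi^\dagger$. This is well defined because $\Sigma_{x^n_P}$ commutes with $H_P$, and covariance holds by construction. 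Domination \eqref{eq:universal-domination} then holds with factor $c_n^{-1}$ by Step 1, applied at $x^n_P$ and transported to every $x^n\in T^n_P$ by covariance. The whole content is therefore to choose $c_n^{-1}\le(n+1)^{\aX(d+2)(d-1)/2}$ while keeping $\|\overline\Sigma\|\le1$.

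\textbf{Step 3 (the main obstacle): the normalization.} For each $\vlam$, pick a near-maximizer $\mathbf Y^{\vlam}$ of $a_{\vlam}$. A trace argument inside the block gives
\[
a_{\vlam}\Pi_{\vlam}\lesssim\dim\mW_{\vlam}\int\!dV\;\Pi_{\vlam}\,\widehat T_{V\mathbf Y^{\vlam}V^\dagger}(x^n_P)\,\Pi_{\vlam},
\]
using a Haar twirl by $\bigotimes_x V_x^{\otimes m_x}$. The twirl of $\bigotimes_x\pi_{\lambda_x}(Y_x)$ equals its normalized trace times identity, and that trace is at least its norm divided by $\dim\mW_{\vlam}$.

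The difficulty is that the twirl uses independent unitaries $V_x$ for different letters. This changes $\|\overline T_{\mathbf Y}\|$, which is invariant only under a common unitary. I would first try to replace $\|\overline T_{\mathbf Y}\|$ by the controllable surrogate $\prod_x\|\pi_{\lambda_x}\|$-type quantities. Via Proposition~\ref{pro: product-score}, $\|\overline T_{\mathbf Y}\|$ is comparable up to $p_{n,P}^{-1}$ (polynomial) with $\|\sum_xP(x)Y_x\|^n$. The aim is a bound that is itself invariant under independent rotations.

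If that works, the averaged operator $\frac{1}{|T^n_P|}\sum_{x^n}\int dV\,\widehat T_{V\mathbf Y V^\dagger}(x^n)$ has norm at most one for each fixed $\vlam$. Summing over the polynomially many $\vlam$ and choosing $c_n^{-1}=\prod_x(\#Y^{m_x}_d)\dim\mW_{\lambda_x}$ then gives $\|\overline\Sigma\|\le1$. Any residual polynomial or $p_{n,P}^{-1}$ loss is acceptable for the exponent. If necessary, I would absorb it by weakening the stated constant, or by restricting $Y_x$ to a common eigenbasis after a single common twirl.
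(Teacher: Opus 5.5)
\textbf{There is a genuine gap, and it lies in your choice of $\Sigma$, not only in Step 3.} Your $\Sigma_{x^n_P}=c_n\bigoplus_{\vlam}a_{\vlam}\Pi_{\vlam}+\epsilon_nI$ is a combination of isotypic projectors, i.e.\ a central element of $\mA_P$. It therefore commutes with $\bigotimes_xV_x^{\otimes m_x}$ for \emph{independent} unitaries $V_x$. No operator with this symmetry can satisfy both requirements with a subexponential constant.

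To see this, take $\mX=\{1,2\}$, $d=2$, $m_1=m_2=m=n/2$, and $Y_1=|0\rangle\langle0|+\eta I$, $Y_2=|1\rangle\langle1|+\eta I$ with $\eta\to0$. Then $\overline T_{\mathbf Y}$ tends to $|T^n_P|^{-1}$ times the projector onto the span of computational-basis strings of Hamming weight $m$. Hence
\[
\widehat T_{\mathbf Y}(x^n_P)\longrightarrow |T^n_P|\,|0^m1^m\rangle\langle0^m1^m| ,
\]
and domination with constant $K$ forces $\langle0^m1^m|\Sigma_{x^n_P}|0^m1^m\rangle\ge|T^n_P|/K$. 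Conjugate by $I^{\otimes m}\otimes V_2^{\otimes m}$, with $V_2$ the bit flip; this leaves your $\Sigma_{x^n_P}$ invariant and gives $\langle0^n|\Sigma_{x^n_P}|0^n\rangle\ge|T^n_P|/K$. Since $|0^n\rangle$ is permutation invariant, $\langle0^n|\overline\Sigma|0^n\rangle=\langle0^n|\Sigma_{x^n_P}|0^n\rangle$. So $\|\overline\Sigma\|\le1$ forces $K\ge|T^n_P|=\binom{n}{n/2}$, which is exponential.

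In your construction this appears as $a_{\vlam}\approx|T^n_P|$ on the block $\lambda_1=\lambda_2=(m)$. Your inequality $a_{\vlam}\Pi_{\vlam}\lesssim\dim\mW_{\vlam}\int dV\,\Pi_{\vlam}\widehat T_{V\mathbf YV^\dagger}(x^n_P)\Pi_{\vlam}$ then fails by an exponential factor. The reason is that independent rotations can make $\|\overline T_{V\mathbf YV^\dagger}\|$ exponentially larger than $\|\overline T_{\mathbf Y}\|$; the bit flip above turns $|T^n_P|^{-1}$ into $1$. The surrogate $\|\sum_xP(x)Y_x\|^n$ behaves the same way, going from $2^{-n}$ to $1$.

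So the obstacle you flagged is an impossibility, not a technicality. The counterexample uses commuting $Y_x$, so restricting to a common eigenbasis does not help. The loss is exponential, so it cannot be absorbed into the constant. The information that matters, namely the relative orientation of the $Y_x$, is already discarded in Steps 1--2, where you replace each block by its norm times $\Pi_{\vlam}$.

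\textbf{The missing idea} is to let $\Sigma_{x^n_P}$ be a non-central element of $\mA_P$, shaped by the whole set $\mathscr K$ of normalized operators $\widehat T_{\mathbf Y}(x^n_P)$. The paper maximizes the reduced log-determinant $\Phi(S)=\sum_{\vlam}\log\det S_{\vlam}$ over $\mathscr C=\operatorname{conv}\mathscr K$. The first-order optimality condition at the maximizer $S_\star$ gives $\sum_{\vlam}\Tr[S_{\star,\vlam}^{-1}T_{\vlam}]\le\sum_{\vlam}\dim\mW_{\vlam}=\kappa_{n,d}$, hence $T\le\kappa_{n,d}S_\star$ for every $T\in\mathscr K$.

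Normalization then comes for free. $S_\star$ is a convex combination of elements of $\mathscr K$, and each of their permutation twirls $\mE_P(\widehat T_{\mathbf Y}(x^n_P))=\overline T_{\mathbf Y}/\|\overline T_{\mathbf Y}\|$ has norm one. By convexity, $\|\overline\Sigma\|=\|\mE_P(S_\star)\|\le1$, with no need to control how $\|\overline T_{\mathbf Y}\|$ behaves under rotations. Your counting $\kappa_{n,d}\le(n+1)^{\aX(d+2)(d-1)/2}$ is exactly the bound the paper uses, so only the definition of $\Sigma$ needs replacing.
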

\end{thmboxed}

\begin{proof}
Let $L\coloneqq \abs{T_P^n}$.  The $H_P$-commutant $\mA_P$ is a finite-dimensional real vector space when restricted to its Hermitian part. Let us consider a set of operators
\begin{equation}
 \mathscr K\coloneqq \overline{\left\{
 \widehat T_{\mathbf Y}(x^n_P):Y_x>0
 \right\}}\subset\mA_P.
 \label{eq:K-set}
\end{equation}
Here, note that it is compact. Indeed, Since it holds that $\overline T_{\mathbf Y}\geq \frac1L T_{\mathbf Y}(x^n_Q),$ we have $ 0\leq \widehat T_{\mathbf Y}(x^n_Q)\leq LI.$
Thus the defining set is bounded, and its closure is compact in finite dimension.  Moreover, $I\in\mathscr K$ by choosing every $Y_x=I$.

Let $\mathscr C\coloneqq \operatorname{conv}\mathscr K$.  It is again compact in finite dimension. Take an element of $S\in\mathscr{C}$, and then $S$ is decomposed as 
\bal
 S=\bigoplus_{\vlam}S_{\vlam}\otimes I_{\mU_{\vlam}}
\eal
On the strictly positive part of $\mathscr C$, define the reduced log-determinant
\begin{equation}
 \Phi(S)\coloneqq \sum_{\boldsymbol\lambda}
 \log\det S_{\boldsymbol\lambda},
 \label{eq:reduced-logdet}
\end{equation}
 Also, we set $\Phi(S)=-\infty$ for any singular matrix $S$.  Note that $\Phi(S)$ is continuous as an extended-real-valued function. Since $I\in\mathscr C$, compactness gives a maximizer $S_\star\in\mathscr C$ with every reduced block strictly positive.

Fix an arbitrary $T\in\mathscr K$, and consider the segment $S_t=(1-t)S_\star+tT$.
Noting that the Fr\'echet deriverate of the function $f(X)\coloneqq \log \det X$ is given by $Df(X)[H]=\Tr[X^{-1}H]$~\cite{bhatia_matrix_1997},
the one-sided directional derivative at $t=0$ is nonpositive and equals
\bal
 0\ge
 \left.\frac{d}{dt}\Phi(S_t)\right|_{t=0+}
 &= \sum_{\vlam}
 \Tr[S_{\star,\vlam}^{-1}(T_{\vlam}-S_{\star,\vlam})]\\
 &=\sum_{\vlam}
 \Tr[S_{\star,\vlam}^{-1}T_{\vlam}]-\sum_{\vlam}\dim\mW_{\vlam},\\
\sum_{\vlam}\Tr[S_{\star,\vlam}^{-1}T_{\vlam}]&\leq \sum_{\vlam}\dim\mW_{\vlam}=:\kappa_{n,  d}.
\eal
From this, focusing on each block associated with $\vlam$, we have
\bal
 \lambda_{\max}(S_{\star,\vlam}^{-1/2}T_{\vlam}S_{\star,\vlam}^{-1/2})\leq \Tr[S_{\star,\boldsymbol\lambda}^{-1}T_{\vlam}]
 \le\kappa_{n,d},
\eal
implying that 
\begin{equation}
 T\leq \kappa_{n,  d}S_\star\qquad \forall T\in\mathscr K.
 \label{eq:logdet-domination}
\end{equation}

Let us define $\Sigma_{x^n_P}\coloneqq S_\star$. For other strings $x^n\in T^n_P$, we define $\Sigma_{x^n}$ as
\bal
\Sigma_{x^n}\coloneqq U_\pi\Sigma_{x^n_P}U^\dagger_\pi
\eal
with the permutation $\pi\in\mfS_n$ satisfying $\pi(x^n_P)=x^n$. 
Here, note that the definition of $\Sigma_{x^n}$ is well-defined because $\Sigma_{x^n_P}$ is stabilized by the action of $H_P$.
Moreover, the permutation covariance of the family $\qty{\Sigma_{x^n}}_{x^n\in T^n_P}$ of operators follows directly from the definition.

It remains to verify the orbit-mean normalization.  Let us define the twirling map
\bal
\mE_P(T)\coloneqq \frac{1}{\abs{T^n_P}}\sum_{[\pi]\in\mfS_n/H_P}U_\pi TU^\dagger_\pi
\eal
Again, this linear map is well-defined and continuous on $\mA_P$. Let us take an operator $\widehat T_{\mathbf Y}(x^n_P)$, and then it holds that 
\bal
 \mathcal E_P\!\left(\widehat T_{\mathbf Y}(x^n_P)\right)
 =\frac{\overline T_{\mathbf Y}}{\norm{\overline T_{\mathbf Y}}},
\eal
whose norm is one.  Continuity gives
$\norm{\mathcal E_P(T)}\le1$ for every $T\in\mathscr K$, and convexity of the norm gives the same bound for every $T\in\mathscr C$.  Since $S_\star\in\mathscr C$,
\bal
 \norm{\overline\Sigma}
 =\norm{\mathcal E_P(S_\star)}\le1,
\eal
which is Eq.~\eqref{eq:sigma-mean}.  Conjugating
Eq.~\eqref{eq:logdet-domination} proves Eq.~\eqref{eq:universal-domination}.

Finally, let us obtain the upper bound on $\kappa_{n, d}\coloneqq \sum_{\vlam}\dim\mW_\vlam$.
Due to Weyl's character formula and the upper bound on the possible types in Eq.~\eqref{eq: type counting}, we have
\bal
 \abs{Y^{m_x}_d}\le(m_x+1)^{d-1},
 \qquad
 \dim\mW_{\lambda_x}\le(m_x+1)^{d(d-1)/2}.
\eal
Therefore
\bal
 \sum_{\lambda_x\in Y_d^{m_x}}\dim\mW_{\lambda_x}
 \le(m_x+1)^{(d+2)(d-1)/2}.
\eal
Taking the product over $x\in\mX$, we have
\bal
\kappa_{n, d}&\coloneqq\sum_{\vlam}\dim\mW_\vlam\\
&=\prod_{x\in\mX}\sum_{\lambda_x\in Y_d^{m_x}}\dim\mW_{\lambda_x}\\
&\leq \prod_{x\in\mX}(m_x+1)^{(d+2)(d-1)/2}\leq (n+1)^{\aX(d+2)(d-1)/2},
\eal
which concludes the proof.
\end{proof}

\subsection{Regularization to the sandwiched random coding bound}
\subsubsection{Regularization from the measured to the sandwiched Rényi Augustin information}
In the discussions so far, we see that the random-coding bound with the measured Rényi Augustin information is universally achievable. In the following, we apply the regularization to lift the measured Rényi Augustin information to the sandwiched Rényi information, which yields the claim of Theorem~\ref{supthm: universal random coding bound}.
\begin{proboxed}
\begin{pro}[Finite-block regularization, including $\alpha=1/2$]
\label{prop:regularization}
Let $W:\mX\to\mD(\mH)$ be a classical-quantum channel and $P$ be a probability distribution on $\mX$.
Then, it holds that
\bal\label{eq: regularized measeured augustin}
\lim_{\ell\to\infty}\frac1\ell
 I_\alpha^{{\rm Aug},\mbM}(P^{\otimes\ell};W^{\otimes\ell})
 =\widetilde I_\alpha^{{\rm Aug}}(P;W).
\eal
 for every $\alpha\in[1/2,1)$.
\end{pro}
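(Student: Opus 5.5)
We prove the two inequalities separately. Throughout, $P^{\otimes\ell}$ is the product distribution on $\mX^\ell$ and $W^{\otimes\ell}:x^\ell\mapsto\rho_{x^\ell}=\bigotimes_i\rho_{x_i}$.

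\emph{Upper bound.} For any measurement channel $\mM$ and any $\sigma$, data processing of the sandwiched divergence for $\alpha\geq 1/2$ gives $D^{\mbM}_\alpha\leq\sD_\alpha$. Hence
\[
I_\alpha^{{\rm Aug},\mbM}(P^{\otimes\ell};W^{\otimes\ell})\leq\sI_\alpha^{\rm Aug}(P^{\otimes\ell};W^{\otimes\ell})\leq\sum_{x^\ell}P^{\otimes\ell}(x^\ell)\,\sD_\alpha(\rho_{x^\ell}\|\sigma^{\otimes\ell})=\ell\,J_\alpha(P,W\mid\sigma),
\]
where the last step uses additivity of $\sD_\alpha$. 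Choosing $\sigma$ to be the minimizer of $J_\alpha(P,W\mid\sigma)$ yields $\limsup_\ell\frac1\ell I^{{\rm Aug},\mbM}_\alpha\leq\sI^{\rm Aug}_\alpha(P;W)$. Superadditivity of the measured quantity, obtained by tensoring measurements and using the minimax of Theorem~\ref{thm: measured-augustin duality}, also gives existence of the limit, though only the liminf is actually needed.

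\emph{Lower bound.} This is the main step. Fix $\ell$ and any $\sigma_\ell\in\mD(\mH^{\otimes\ell})$. The Augustin objective is convex in $\sigma_\ell$ for $\alpha\geq1/2$ (joint convexity of the measured quasi-entropy via the variational formula, Eq.~\eqref{eq:variational measured renyi}) and invariant under $\sigma_\ell\mapsto U^{\otimes\ell}\sigma_\ell U^{\dagger\otimes\ell}$ combined with the relabelling $\rho_x\mapsto U\rho_xU^\dagger$. This invariance does not hold for a fixed channel, so instead we use permutation invariance. The ensemble $P^{\otimes\ell}$, $W^{\otimes\ell}$ is invariant under simultaneous permutation of inputs and output tensor factors, so by convexity the optimal $\sigma_\ell$ may be taken permutation invariant. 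Then $\sigma_\ell\leq g_{\ell,d}\,\omega_\ell$ with the universal symmetric state $\omega_\ell$ and $g_{\ell,d}={\rm poly}(\ell)$. Monotonicity of $D^{\mbM}_\alpha$ in the second argument (it decreases as $\sigma$ increases) gives
\[
D^{\mbM}_\alpha(\rho_{x^\ell}\|\sigma_\ell)\geq D^{\mbM}_\alpha(\rho_{x^\ell}\|\omega_\ell)-\log g_{\ell,d}.
\]
Next, $\omega_\ell$ is a mixture of states of the form $\int\tau^{\otimes\ell}\,d\mu(\tau)$, so we would like to reduce to product $\sigma$'s. The cleanest route avoids the mixture and uses pinching instead. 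Use $\mP_{\omega_\ell}$, which has ${\rm poly}(\ell)$ distinct eigenvalues, to get
\[
D^{\mbM}_\alpha(\rho_{x^\ell}\|\omega_\ell)\geq D_\alpha\bigl(\mP_{\omega_\ell}(\rho_{x^\ell})\|\omega_\ell\bigr)\geq\sD_\alpha(\rho_{x^\ell}\|\omega_\ell)-\log|\spec\,\omega_\ell|,
\]
which is the same Hayashi pinching inequality used in Lemma~\ref{prop: pinching-asymptotics}. Averaging over $x^\ell$, it then remains to show
\[
\sum_{x^\ell}P^{\otimes\ell}(x^\ell)\,\sD_\alpha(\rho_{x^\ell}\|\omega_\ell)\geq\ell\,\sI_\alpha^{\rm Aug}(P;W)-O(\log\ell).
\]
This follows because the minimization of the sandwiched Augustin objective over all $\sigma_\ell$ is additive for product channels and product input distributions: $\sI^{\rm Aug}_\alpha(P^{\otimes\ell};W^{\otimes\ell})=\ell\,\sI^{\rm Aug}_\alpha(P;W)$. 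We would invoke this additivity from the known literature (Mosonyi--Ogawa / Cheng et al.), proved via the fixed-point characterization of the Augustin mean, whose tensor power is again optimal.

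Dividing by $\ell$ and letting $\ell\to\infty$ kills all polynomial overheads and gives the matching lower bound. The endpoint $\alpha=1/2$ is covered because every tool used (data processing, convexity, additivity, the pinching bound) holds for $\alpha\ge1/2$. The main obstacle I expect is rigorously justifying the additivity of the sandwiched Augustin information for $\alpha\in[1/2,1)$. If a clean citation fails, I would prove it directly. The upper bound $\leq$ comes from product $\sigma$. For the reverse, write the optimal single-letter $\sigma^*$ through the stationarity condition of the saddle point in Proposition~\ref{prop:truncated-sion}, and then verify that $(\sigma^*)^{\otimes\ell}$ satisfies the $\ell$-letter optimality condition, which is sufficient by convexity.
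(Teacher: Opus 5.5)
Your proof is correct and follows essentially the same route as the paper. The upper bound uses a product center together with $D^{\mbM}_\alpha\le\sD_\alpha$, and the lower bound uses symmetrization, pinching by a permutation-invariant state with polynomially many distinct eigenvalues, the Hayashi--Tomamichel pinching inequality, and additivity of the sandwiched Augustin information. The differences are minor: your detour through the universal symmetric state is unnecessary, since the permutation-invariant minimizer already has at most $\sum_\lambda\dim\mW_\lambda$ distinct eigenvalues and the paper pinches by it directly; and at $\alpha=1/2$ the additivity you cite needs the extension from $\alpha>1/2$ by continuity in the order, which the paper supplies (it also invokes Fuchs--Caves there).
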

\end{proboxed}

\begin{proof}
We first show that
\begin{equation}
 \ell\widetilde I_\alpha^{{\rm Aug}}(P;W)-\log v_{\ell,d}\leq
 I_\alpha^{{\rm Aug},\mbM}(P^{\otimes\ell};W^{\otimes\ell})\leq\ell\widetilde I_\alpha^{{\rm Aug}}(P;W)
 \label{eq: regularization-interior}
\end{equation}
holds for $\alpha\in(1/2,1)$. Noting that the measured divergence is no larger than the sandwiched divergence, evaluating the measured Augustin objective at a product sandwiched-Augustin center gives the upper bound in Eq.~\eqref{eq: regularization-interior}.

Let us show the lower bound. Choose a measured-Augustin minimizer $\omega_\ell$.  Such a minimizer exists because the objective is lower semicontinuous on the compact state space and is finite at every full-rank state.  The objective is convex in its second argument: $Q_\alpha^{\mbM}(\rho\Vert\sigma)$ is concave in $\sigma$ by Eq.~\eqref{eq:variational measured renyi}, while $t\mapsto(\alpha-1)^{-1}\log t$ is convex and decreasing. 
Thus, without loss of generality one can take $\omega_\ell$ as a permutation-symmetric state.
Schur--Weyl duality shows that such an operator has at most
\bal
 \sum_{\lambda\in Y_d^\ell}\dim\mW_\lambda
 \le (\ell+1)^{(d+2)(d-1)/2}
\eal
distinct eigenvalues.  Let $\mathcal P_{\omega_\ell}$ be its spectral pinching and let $v$ be the number of spectral projections.  Measuring in a common eigenbasis of $\mathcal P_{\omega_\ell}(\rho)$ and $\omega_\ell$ gives
\bal
 D_\alpha^{\mbM}(\rho\Vert\omega_\ell)
 \ge
 \widetilde D_\alpha(\mathcal P_{\omega_\ell}(\rho)\Vert\omega_\ell).
\eal
Here, due to~\cite[Lemma 3]{Hayashi_2016_correlation_detection}, we have
\bal
\widetilde D_\alpha(\mathcal P_{\omega_\ell}(\rho)\Vert\omega_\ell)
 \geq
 \widetilde D_\alpha(\rho\Vert\omega_\ell)-\log \abs{\spec(\omega_\ell)}.
\eal
Averaging over $P^{\otimes\ell}$ and taking the minimization over the state in the second argument of the right-hand side, we have
\bal
 I_\alpha^{{\rm Aug},\mbM}(P^{\otimes\ell};W^{\otimes\ell})
 \ge
 \widetilde I_\alpha^{{\rm Aug}}(P^{\otimes\ell};W^{\otimes\ell})-
 \log v_{\ell,d}.
\eal
Since the weighted sandwiched radius is additive for $\alpha>1/2$~\cite[Corollary~III.22]{Mosonyi_2021_divergence_radii}, we reach Eq.~\eqref{eq: regularization-interior}.
Dividing by $\ell$, and taking the limit $\ell\to\infty$, we reach the claim for $\alpha\in(1/2, 1)$.

At $\alpha=1/2$, the Fuchs--Caves fidelity theorem~\cite{fuchs_mathematical_1995} gives
\bal
 D_{1/2}^{\mbM}(\rho\Vert\sigma)
 =\widetilde D_{1/2}(\rho\Vert\sigma)
\eal
for every pair of states.  Thus the measured and sandwiched Augustin objectives coincide exactly on every block.  Additivity at $\alpha=1/2$ follows by taking the left-hand limit $\beta\to 1/2$ in the identity of \cite[Corollary~III.22]{Mosonyi_2021_divergence_radii} and using finite-dimensional continuity of sandwiched Augustin information in the order~\cite[Proposition~5 and Theorem~13]{Cheng_properties_2022}. From this, we can see that Eq.~\eqref{eq: regularized measeured augustin} holds for $\alpha=1/2$.
\end{proof}

\subsubsection{Combining everything altogether: Proof of Theorem~\ref{supthm: universal random coding bound} through double-blocking}
Finally, we prove Theorem~\ref{supthm: universal random coding bound}. We first fix the parameters as follows:
For a sequence $\qty{P_n}_n$ of $n$-types $P_n$, let $h_n$ be the least common multiple of the reduced denominators of the numbers $P_n(1), \ldots, P_n(\aX)$.  For admissible final blocklengths $n$, choose
\begin{equation}
 \ell_n\coloneqq \max\left\{1,\left\lfloor(\log(n+e))^{1/3}\right\rfloor\right\},
 \qquad
 N_n\coloneqq h_n^{\ell_n}
 \left\lfloor\frac{n}{\ell_nh_n^{\ell_n}}\right\rfloor,
 \qquad
 m_n\coloneqq N_n\ell_n.
 \label{eq:block-schedule}
\end{equation}
For any blocklength $n$, we see $\ell_n$ symbols as a single supersymbol in the set of the alphabet $\mU_n\coloneqq \mX^{\ell_n}$, and the blocklength for the new alphabet is written as $N_n$. We ignore the remaining $(n-m_n)$ symbols.
Using the first $m_n$ symbols, we take a codebook with the constant composition $Q_n\coloneqq P_n^{\otimes\ell_n}$.
We also denote $D_n\coloneqq d^{\ell_n}$.
The outer target rate is $\sR\coloneqq \ell_nR<H(Q_n)=\ell_nH(P_n)$.

Due to Lemma~\ref{lem: packing}, we can take $(Q_n, \sR, \widetilde{\delta}_1'(n))$-good codebook $ \mB_n\subseteq T_{Q_n}^{N_n}$ with $\widetilde{\delta}'(n)=\Omega\qty((\log n)^{\aU^2})$, meaning that one can take $\widetilde{\delta}'(n)$ so that it scales as $\widetilde{\delta}'(n)=o(n)$.

Now, due to \Cref{thm:schur-universal}, one can take the family $\qty{\Sigma_{u^{N_n}}}_{u^{N_n}\in T^{N_n}_{Q_n}}$ of positive operators on the Hilbert space $\mH^{m_n}$ such that for any family $\qty{Y_u}_{u\in\mU}$ of positive definite operators on $\mH^{\ell_n},$
\begin{equation}\label{eq: block universal operator}
 \widehat T_{\mathbf Y}(u^{N_n})\leq  (N_n+1)^{\abs{\mU_n}(D_n+2)(D_n-1)/2}\Sigma_{u^{N_n}}, \qquad \forall u^{N_n}\in T^{N_n}_{Q_n}
\end{equation}
Note that the overhead $(N_n+1)^{\abs{\mU_n}(D_n+2)(D_n-1)/2}$ is subexponential $o(e^n)$ in $n$.
For the actual blocklength $n$, we take the decoder $\qty{\Lambda_m}_{m}$ from the universal family $\qty{\Sigma_{u^{N_n}}}_{u^{N_n}\in T^{N_n}_{Q_n}}$ as 
\bal
\Lambda_m\coloneqq \frac{\Sigma_{u^{N_n}(m)}}{\sum_{m'\in\mM_n}\Sigma_{u^{N_n}(m')}}\otimes I^{\otimes (n-m_n)}.
\eal
Also, note that 
\bal
Q_n(T^{N_n}_{Q_n})\geq \frac{1}{(N_n+1)^{\abs{\mU_n}}}=\frac{1}{o(e^n)}.
\eal
holds~\cite{Csiszar_Korner_2011}.
Now, we are finally ready to prove the main statement.

\begin{proof}[Proof of \Cref{supthm: universal random coding bound}]
For $R=0$, use a single fixed type-$P$ codeword; its error is zero.  Assume $R>0$, and fix an arbitrary cq channel $W$ and an arbitrary $\alpha\in[1/2,1)$.  Let us denote $s=(1-\alpha)/\alpha$.
By \Cref{thm: measured-augustin duality}, one can take a family $\qty{Y_u}_u$ of positive operators satisfying Eq.~\eqref{eq:near-score} with $\varepsilon_n\coloneqq 1/N_n$. Since  $x\mapsto x^{-s}$ is operator anti-monotone for $0<s\le1$, Eq.~\eqref{eq: block universal operator} yields
\begin{equation}
\Sigma_{u^{N_n}}^{-s}\leq \qty((N_n+1)^{\abs{\mU_n}(D_n+2)(D_n-1)/2})^s  \widehat T_{\mathbf Y}(u^{N_n})
 \label{eq:inverse-domination}
\end{equation}
Denoting $\kappa_n\coloneqq (N_n+1)^{\abs{\mU_n}(D_n+2)(D_n-1)/2}$, note that $\kappa_n=o(e^n)$.

Applying Proposition~\ref{pro: quotient error bound}, we have
\begin{align}
 1-\Tr[\rho_{x^n(m)}\Lambda_m]
 &\le
 (M_n-1)^se^{s\widetilde{\delta}'(n)}
 \kappa_n^s
 \Tr[\rho_{u^{N_n}(m)}\widehat  T_{\mathbf Y}(u^{N_n}(m))^{-s}]\\
 &\leq (M_n-1)^se^{s\widetilde{\delta}'(n)}
 \kappa_n^s~ Q_n(T^{N_n}_{Q_n})^{-s}\exp\left[-N_ns
 I_\alpha^{{\rm Aug},\mbM}
 (P^{\otimes\ell_n};W^{\otimes\ell_n})\right].
 \label{eq:master-bound}
\end{align}
Note that the term $e^{s\widetilde{\delta}'(n)}
\kappa_n^s ~Q_n(T^{N_n}_{Q_n})^{-s}$ does not contribute to the exponent because it is at most subexponential.
 On the other hand, since we take $M_n=2^{nR}$, it holds that 
 \bal
  s\log(M_n-1)\le snR.
 \eal
Combining these, we have
\bal
\liminf_{n\to\infty}-\frac{1}{n}\log P^n_e(P_n, R, W)\geq s\frac{m_n}{n}\qty(\frac{1}{\ell_n}I_\alpha^{{\rm Aug},\mbM}
 (P_n^{\otimes\ell_n};W^{\otimes\ell_n})-\frac{n}{m_n}R)
\eal
for any $n\in\mN$. 
Due to the limit $P_n\to P$, $\ell_n\to\infty$, $\frac{m_n}{n}\to 1$ in the asymptotic limit $n\to\infty$ and proposition~\ref{prop:regularization}, we have
\bal
\liminf_{n\to\infty}-\frac{1}{n}\log P^n_e(P_n, R, W)\geq s\bigl(\widetilde I_\alpha^{{\rm Aug}}(P;W)-R\bigr).
\eal
Taking the supremum over $\alpha\in[1/2,1)$, we reach the claim.

\end{proof}

\end{document}